\newcommand{\norm}[1]{\left\lVert#1\right\rVert}
\newcommand{\R}{\mathbb{R}}
\newcommand{\C}{\mathbb{C}}
\newcommand{\N}{\mathbb{N}}
\newcommand{\supp}{{\rm supp} \, }
\renewcommand{\Re}{\operatorname{Re}}
\renewcommand{\Im}{\operatorname{Im}}
\newtheorem{theorem}{Theorem}[section]
\newtheorem{lemma}[theorem]{Lemma}
\newtheorem{proposition}[theorem]{Proposition}
\newtheorem{remark}[theorem]{Remark}
\newtheorem{definition}[theorem]{Definition}
\newtheorem{corollary}[theorem]{Corollary}
\newtheorem{assumption}[theorem]{Assumption}
\numberwithin{equation}{section} 
\definecolor{miguelscolor}{rgb}{.7,.2,.2}
\definecolor{dirkscolor}{rgb}{.2,.2,.7}
\definecolor{felixscolor}{rgb}{.2,.7,.2}
\definecolor{felixscolor1}{rgb}{.7,.2,.7}
\newcommand{\ac}[1]{}
\date{\today}
\title{\Large{\textsc{ One-Boson Scattering Processes\\in the massive
Spin-Boson Model}}}
\author{Miguel Ballesteros\thanks{\texttt{miguel.ballesteros@iimas.unam.mx},
        Instituto de Investigaciones en Matem\'aticas Aplicadas y en Sistemas,
        Universidad Nacional Aut\'anoma de M\'exico}, Dirk-Andr\'e
        Deckert\thanks{\texttt{deckert@math.lmu.de}, Mathematisches Institut
        der Ludwig-Maximilians-Universität München}, J\'er\'emy
        Faupin\thanks{\texttt{jeremy.faupin@univ-lorraine.fr},  Institut Elie
        Cartan de Lorraine, Universit\'e de Lorraine},  Felix
    H\"anle\thanks{\texttt{haenle@math.lmu.de}, Mathematisches Institut der
Ludwig-Maximilians-Universität München}}
\begin{document}
\maketitle

\begin{abstract}
    The Spin-Boson model describes a two-level quantum system that interacts
    with a second-quantized  boson  scalar field.  Recently the relation between the
    integral kernel of the scattering matrix and the resonance in this model
    has been established in~\cite{bdh-scat} for the case of massless bosons.
    In the present work, we treat the massive case. 
    On the one hand, one might rightfully expect that the
    massive case is easier to handle since, in contrast to the massless case,
    the corresponding Hamiltonian features a spectral gap.  On the other hand,
    it turns out that the non-zero boson mass introduces a new complication as
    the spectrum of the complex dilated, free Hamiltonian exhibits  lines of 
    spectrum attached to every multiple of the boson rest mass energy starting
    from the ground and excited state energies. This leads to an absence of
    decay of the corresponding complex dilated resolvent close to the real
    line, which, in~\cite{bdh-scat}, was a crucial ingredient to control the
    time evolution in the scattering regime. With the new strategy presented
    here, we provide a proof of an analogous formula for the scattering
    kernel as compared to the massless case and use the opportunity to provide
    the required spectral information by a Mourre theory argument  combined with a suitable application of the Feshbach-Schur map  instead of
    complex dilation. 
\end{abstract}

\section{Introduction}
\label{sec:introduction}

The Spin-Boson model is a widely employed model in quantum field theory  that describes the interaction between a two-level quantum
system and a second-quantized scalar field. The model is interesting as it
shares many important features of, e.g., quantum electrodynamics or the Yukawa
theory, such as the absence of a gap in the massless case, the appearance of
a resonance, and the ultraviolet divergence, which can be studied with
mathematical rigor without being obstructed by additional complications, such
as dispersion of the sources or additionally spin degrees of freedom of the
fields. In the case of a massless scalar field, the Spin-Boson model   describes a two-level  atom   that interacts with a photon field and is
therefore frequently employed in quantum optics.  Furthermore, in the massive
case, the corresponding interaction is similar to the one in the Yukawa theory.
The unperturbed energies of the two-level system shall be denoted by real
numbers $0 = e_0<e_1$. It is well-known that after switching on the interaction
with the second-quantized scalar field that may induce transitions between the
two levels, the free ground state energy $e_0$ is shifted to the interacting
ground state energy $e_0>\lambda_0\in\mathbb R$ on the real line while the free
excited state with energy $e_1$ turns into a resonance with an ``energy''
$\lambda_1\in\mathbb C$ situated in the lower complex plane.  In a recent
work~\cite{bdh-scat}, a formula revealing the relation between the resonance
$\lambda_1$ and the integral kernel of the scattering matrix was derived for
the case of a massless scalar field. It was proven that the scattering matrix
coefficients of one-boson scattering processes, excluding forward scattering,
feature the expected Lorenzian shape in leading order in the neighborhood of
the real part of the resonance $\lambda_1$. More precisely, it was shown that
the leading order in the coupling constant $g$ (for small $g$) of the integral
kernel of the transition matrix $T$ fulfills 
\begin{align} \label{eq:transition-matrix-massless} 
    T(k,k')
    \sim 4\pi i g^2 \norm{\Psi_{\lambda_0}}^{ -2 } f(k)^2
    \delta(|k|-|k'|)
    \frac{\Re\lambda_1 - \lambda_0}
         {(|k|+\lambda_0 -  \lambda_1)
         (|k|-\lambda_0  +  \overline{\lambda_1})}.
\end{align} 
Here, $\Psi_{\lambda_0}$ denotes the (due to the construction, unnormalized)
ground state corresponding to $\lambda_0$ and $\delta$ the Dirac delta
distribution. Due to the absence of a spectral gap, a subtle study by means of
multi-scale perturbation analysis was necessary to construct the ground state
and resonance and control the required
spectral estimates \cite{bdh-res}. To the best of our knowledge this is one of
the first results towards a clarification of the relation between   resonances   and
scattering theory in quantum field theory in the same vain as it was done in
quantum mechanics, see~\cite{simonnbody} and references therein. In contrast,
it has to be emphasized  that the relation between the imaginary value
of the resonance and the decay rate of the unstable excited state has been
established rigorously in various models of quantum field theory in several
articles \cite{sigal,hasler,sf,bmw}.  The result in \cite{bdh-scat} and also
the one provided here, hence, naturally draw from many existing results:
Resonance theory for models of quantum field theory has been developed in many
works mainly studying the massless case of various   models  of quantum field
theory with methods of renormalization group, see, e.g.,
\cite{bfs1,bfs2,bfs3,bcfs,bfs100,bbf,fgs100,feshbach,s,f,bffs}, as well as with
methods of multi-scale perturbation analysis, see, e.g.,
\cite{pizzo1,pizzo2,bach,bbp}.  Scattering theory has also been developed for
various models of quantum field theory, see, e.g.,
\cite{fau1,fau2,fau3,fgs1,fgs2}, and in particular for the massless Spin-Boson
model, see, e.g., \cite{rgk,rk,rk2,derezinski,bkz}.  

In the previous work \cite{bdh-scat}, the main tool used to control the time
evolution in the scattering regime, and hence, the scattering matrix
coefficients, was the Laplace transform representation of the unitary time
evolution generated by  the  corresponding Hamiltonian $H$, i.e.,
\begin{align} 
    \left\langle \phi , e^{-itH} \psi \right\rangle 
    = 
    \lim_{\epsilon\downarrow 0}\frac{1}{2\pi i}\int_{\mathbb
    R+i\epsilon}\mathrm{d}z\, e^{-itz}
    \left\langle \phi , \left( H-z \right)^{-1} \psi \right\rangle.
    \label{eq:laplace} 
\end{align}
In oder to justify this identity in a rigorous sense, precise control of the
resolvent close to the real axis is needed to infer sufficient decay for the
integral to converge. For this purpose, the Hamiltonian was studied with the
help of a conveniently chosen complex dilation in which it exhibits a spectrum
consisting of the ground state energy $\lambda_0$, a resonance $\lambda_1$
having negative imaginary part, and  the rest of the spectrum being  
  localized in cones in
the lower complex plane attached to $\lambda_0$ and $\lambda_1$, respectively.
Thanks to this fact,  a well-defined meaning can be given to $ \eqref{eq:laplace}  $   by
deforming the integration   contour  $\mathbb R + i\epsilon$ at $-\infty$ and
$+\infty$ towards the lower complex plane.

In the case of a scalar field with mass $m>0$ as discussed in this work, this
strategy fails. The reason is that  the spectrum  of  the corresponding  dilated  unperturbed Hamiltonian
   contains the points    
\begin{align}
    \label{eq:spectrum_free}
      \{ e_0  + k m  \}_{k \in \N_0}\cup  \{ e_1  + k m  \}_{k \in \N_0}   ,  \qquad \text{where}  \qquad \N_0:= \N \cup \{0\}  .
\end{align}
This leads to an absence of
    decay of the corresponding complex dilated resolvent close to the real
    line, which, in~\cite{bdh-scat}, was a crucial ingredient to control the
    time evolution in the scattering regime.
 Therefore, compared to
\cite{bdh-scat}, a different strategy to control the time evolution has to be
developed which is the content of this paper. 
As discussed in Section \ref{sec:mainresult} below, we use Mourre theory to obtain the required spectral control.  In particular, we combine Mourre theory with perturbation theory and the Feshbach-Schur map. In Section \ref{sec:mainresult} we compare this approach to the method of complex dilation which was employed in \cite{bdh-scat,oy}.

We point out to the reader that, in general, Mourre theory has been studied in a variety of models (see, e.g., \cite{bookabg,amrein,cycon,ggm}). We emphasize, however, that our application of this theory is non-standard. In the spirit of \cite{ahs,fms}, we prove a ``reduced'' limiting absorption principle for the unperturbed Hamiltonian at the excited energy $e_1$ and we apply perturbation theory -- see Lemma~\ref{keyest1} and Proposition~\ref{prop:absenceofacspecclose} (iii). 
  One of the main achievements of the present paper is then to combine the obtained limiting absorption principle with a suitable application of the Feshbach-Schur map. Using in addition Fermi's Golden Rule, we then manage to obtain the required control of the time evolution.       
\\

The paper is structured as follows: In Section~\ref{sec:defmodel} we define the
massive Spin-Boson model and recall its properties relevant to this work, in
Sections~\ref{sec:scattering} and \ref{sec:gs} we review the required results
from scattering theory and the constructions of the ground state,
and in Section~\ref{sec:mainresult} we present   our  main result, i.e.,
Theorem~\ref{FKcor}. The remaining sections consist of the main technical
ingredient given in Section~\ref{sec:mourre}  and its proof in Section~\ref{app:limab}, the proof of the main result in
Section~\ref{sec:proof-mainresult}, and an Appendix for the reasons of
self-containedness. We lay out a roadmap for these sections in the end of
Section~\ref{sec:mainresult}.

\subsection{Definition of the Spin-Boson model}
\label{sec:defmodel}
In  this section  we introduce the considered model and
 state preliminary definitions and well-known tools and facts from which we
start our analysis.  
Most parts of this section are drawn from \cite[Section 1.1]{bdh-scat}. If the reader is already familiar with \cite{bdh-scat}, this section can be skipped -- except for Assumption \ref{as}. 
 \\

The non-interacting Spin-Boson Hamiltonian is defined as
\begin{align}
\label{h0def}
H_0    \equiv H_0(\omega)  :=K + H_f , \qquad K:= \begin{pmatrix}
e_1 & 0 \\
0 & e_0
\end{pmatrix} ,
\qquad
H_f   \equiv  H_f( \omega)  :=\int \mathrm{d^3}k \, \omega(k) a(k)^* a(k).
\end{align}
We regard $K$  as an idealized free Hamiltonian of a two-level atom, where $0 = e_0 <e_1$ denote its two energy levels.
Moreover, the annihilation and creation operators $a,a^*$ are defined on the standard Fock space in \eqref{eq:CCRdef} below and $H_f$ is the  free
Hamiltonian of  scalar field    having dispersion relation
$\omega(k)=\sqrt{k^2+m^2}$. In this work we only consider massive scalar fields, i.e., $m>0$.
In the remainder of this work we sometimes refer to $K$ as the atomic part and to $H_f$ as the free field part of the Hamiltonian. Furthermore, the sum of those operators, $H_0$, we simply call ``free Hamiltonian''. 
The interaction term reads
\begin{align}
\label{def:VPHI}
V   \equiv V(f)   := \sigma_1\otimes \Phi(f), \quad \text{where} \quad  \sigma_1:= \begin{pmatrix}
0 & 1\\ 1&0
\end{pmatrix}  \quad \text{and}  \quad \Phi(f):= a(f) + a(f)^* .
\end{align}
We point out to the reader that our proofs 
require that the boson form factor $f$ satisfies $f, Df, D^2 f \in L^2( \mathbb{R}^3 )$, where $D$ is the generator of dilations introduced in Definition  \ref{def:secquant} (ii) below.     We also suppose that $f$ is spherically symmetric and  use this in order to simplify our notation (a minor modification in some of our calculations would be necessary in order to drop this assumption).  We identify $ f(k) \equiv f(|k|) $ and assume that  
\begin{align}\label{fmaszero}
f(\sqrt{e_1^2-m^2})>0. 
\end{align}   
In particular, $f$ does not have to be analytic and the infrared singularity is not an issue here.  
This being said, for concreteness, we consider a particular choice that meets the conditions above in the remainder of the paper, i.e.,
\begin{align}
f: \R^3 \setminus \{0\}\to \R , \qquad k\mapsto e^{-k^2/\Lambda^2}\omega(k)^{-\frac{1}{2}} .
\label{eq:f}
\end{align}
The relativistic form factor of a scalar field would be  $ f(k)=(2\pi)^{-\frac{3}{2}}(2\omega(k))^{-\frac{1}{2}} $.  Such an $f$, however, is not square integrable, and therefore, renders the model ill-defined. This is referred to as  ultraviolet divergence.  In our case, the gaussian factor in \eqref{eq:f} acts as an ultraviolet cut-off for $\Lambda>0$ being the ultraviolet cut-off parameter.  For the sake of simplicity, we absorb the missing factor  $2^{-\frac{1}{2}}(2\pi)^{-\frac{3}{2}}$  in the coupling constant $g$.

The full Spin-Boson Hamiltonian is then defined as
\begin{align}
\label{eq:H}
H  \equiv  H(\omega, f)    :=   H_0(\omega) + g V(f) \equiv  H_0 + g V  
\end{align}
for some  coupling constant $g  \geq   0$, on the
Hilbert space
\begin{align}
\mathcal H := \mathcal K \otimes \mathcal F\left[ \mathfrak{h}\right] , \qquad
\mathcal K:= \C^2, 
\end{align}
where 
\begin{align}
\mathcal F\left[ \mathfrak{h}\right] :=  \bigoplus^\infty_{n=0} \mathcal F_n\left[ \mathfrak{h}\right] 
,\qquad
\mathcal F_n\left[ \mathfrak{h}\right] := 
\mathfrak{h}^{\odot n},\qquad 
\mathfrak h:= L^2(\mathbb R^3,\C)
\end{align}
denotes the standard bosonic Fock space, and  the   superscript $\odot n$ denotes the
n-th symmetric tensor product,  where by convention $\mathfrak{h}^{\odot 0}\equiv
\C$. Note that we identify $K\equiv K\otimes 1_{\mathcal F[\mathfrak h]}$ and
$H_f\equiv 1_{\mathcal K}\otimes H_f$ in our notation (see Remark \ref{R} below).  

Due to the direct sum, an element $\Psi \in
\mathcal F[\mathfrak{h}]$
can be represented as a family $(\psi^{ (n)})_{n\in\N_0}$   of wave functions $\psi^{ (n)} \in \mathfrak{h}^{\odot n}$  where we recall  $\mathbb{N}_0 =  \mathbb{N} \cup \{ 0 \}$.  The state $\Psi$ with $\psi^{ (0)}=1$ and $\psi^{ (n)}=0$ for all $n\geq 1$ is called the vacuum and is denoted by
\begin{align}
\label{Omega}
\Omega:=(1,0,0,\dots)\in \mathcal F\left[ \mathfrak{h}\right] .
\end{align}
For any $h\in \mathfrak{h}$ and $\Psi=(\psi^{ (n)})_{n\in\N_0} \in \mathcal F[\mathfrak{h}]$, such as the vector below belongs  to  $\mathcal F[\mathfrak{h}]$, we  define the creation operator 
\begin{align}
\left( a(h)^*\Psi \right)_{n\in\N_0}:=\left( 0,h\odot \psi^{ (0)}  , \sqrt 2 h\odot
\psi^{ (1)}, \dots \right) , \quad \left( a(h)^*\Psi \right)^{ (n)}=\sqrt{n} h\odot
\psi^{ (n-1)}
\label{eq:CCRdef}
\end{align}
and the annihilation operator $a(h)$ as the respective adjoint. Occasionally,
we shall also use the physics notation 
\begin{align}
    a(h)^*=\int \mathrm{d^3}k \, h(k) a(k)^*,
\end{align}
where the action of these operators  in the $n$ boson sector of a
vector $\Psi= (\psi^{ (n)})_{n\in\N_0} \in\mathcal F[\mathfrak{h}]$ is to be understood  (only formally)  as:
\begin{align}
    \label{eq:aformal}
    \left(a(k) \Psi  \right)^{ (n)}(k_1,...,k_n)&=\sqrt{n+1}
    \psi^{ (n+1)}(k,k_1,...,k_n),  \\
    \left(a(k)^* \Psi  \right)^{ (n)} (k_1,...,k_n)&=\frac{1}{\sqrt{n}}\sum^n_{i=1}
    \delta^{(3)}(k-k_i) \psi^{ (n-1)}(k_1,...,\tilde k_i,...,k_n)  . 
\end{align}
Here,  the notation  $\tilde \cdot $ means that the corresponding variable is
omitted and $\delta$ denotes the Dirac's delta distribution acting on Schwartz
test functions. Note that $a$ and $a^*$ fulfill the canonical commutation
relations:
\begin{align}
\label{eq:ccr}
  \forall h,l\in\mathfrak{h}, \qquad \left[a(h),a(l)^*   \right]=\left\langle h, l\right\rangle_2 , \qquad \left[a (h),a(l)   \right]=0 , \qquad \left[a(h)^*,a(l)^*   \right]=0.
\end{align}
Throughout this paper we address the case of small coupling, i.e., we assume the coupling constant g to be sufficiently small. We do this only a finite number of times which assures that there is a $\boldsymbol g>0$ such that all results hold true for coupling constants $0<g<\boldsymbol g$.

As mentioned   above, we consider a two-level system with two distinct energy levels at $0=e_0<m<e_1$. Moreover, suppose that the following assumption holds true: 
\begin{assumption}
\label{as}
We suppose that $e_1-e_0 \notin m\N$. This  implies  
\begin{align}
\delta:= \mathrm{dist }(e_1-e_0, m\N) >0, 
\end{align}
where the symbol $\mathrm{dist }$  stands for  the Euclidean distance.    Moreover,  we assume the mass of the scalar field to be smaller than the  energy level $e_1$ in order to allow for scattering processes. 
\end{assumption} 
Speaking in physical terms, this assumption excludes the possibility that a certain  number of photons with zero momentum are able to flip the atom to the excited state.

Let us recall some well-known facts about the introduced model. 
Clearly, $K$ is self-adjoint on $\mathcal K$ and its spectrum 
consists of two eigenvalues $e_0$ and $e_1$. The corresponding eigenvectors are
\begin{align}
\label{varphi}
\varphi_0= \left( 0,1 \right)^T \qquad \text{and} \qquad  \varphi_1= \left(1,0\right)^T \qquad \text{with} \qquad K \varphi_i =e_i \varphi_i , \quad i=0,1.
\end{align}
Moreover, $H_f$ is self-adjoint on its natural domain $\mathcal D(H_f)\subset \mathcal F[\mathfrak{h}]$ and its spectrum  is given by   $\sigma (H_f)= \{0\}\cup [m, \infty )$.  Consequently, the spectrum of $H_0$ is given by
$\sigma (H_0)=  \{e_0\} \cup[e_0+m, \infty )$,  $e_0,e_1$ are eigenvalues of $H_0$  and, asumming that  $e_0 +m < e_1$,     the later is embedded in the absolutely continuous part of the spectrum of $H_0$ (see \cite{reedsimon1}).

Finally, also the self-adjointness of the full Hamiltonian $H$ is well-known
 (see,  e.g.,  \cite{bdh-scat} and \cite{spohnspin}, see also  \cite[Lemma 21]{fgs3}).  
\begin{proposition}
\label{PTOPNDJO}
 For every  $h\in \mathfrak{h}$ and $ a(h)^{\#}  \in \{ a(h)^{*}, a(h) \} $,
\begin{align}
\norm{a(h)^{\#} (H_f+1)^{-\frac{1}{2}}}
&\leq 
C  \norm{h}_2  ,
\end{align} 
where $C$ is a positive constant. 
This implies that $gV$ is infinitesimally bounded with respect to $H_0$ and,
consequently, $H$ is self-adjoint and bounded from below, on the domain 
\begin{align}
\mathcal D(H) = D(H_0)= \mathcal D( \mathbbm 1_{ K} \otimes H_f ),
\end{align}
and the operators
    \begin{align}
       & H_f(H+i)^{-1},  
      \qquad
        H(H_f+1)^{-1} 
    \end{align}
are bounded.     
\end{proposition}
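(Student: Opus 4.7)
My plan is to treat Proposition~\ref{PTOPNDJO} as a chain of standard $N_\tau$-type estimates culminating in an application of the Kato-Rellich theorem, with the massive gap $\omega(k)\geq m>0$ doing essentially all the work that replaces the usual infrared weighting.

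For the first bound, I would expand $a(h)$ and $a(h)^*$ on an $n$-particle vector using \eqref{eq:CCRdef} and derive the textbook relative bounds
\[
\|a(h)\Psi\|\leq \|h/\sqrt{\omega}\|_2\,\|H_f^{1/2}\Psi\|,\qquad
\|a(h)^*\Psi\|\leq \|h/\sqrt{\omega}\|_2\,\|H_f^{1/2}\Psi\|+\|h\|_2\,\|\Psi\|,
\]
valid for $\Psi$ in the algebraic Fock space by Cauchy-Schwarz in the momentum variable, then extended by density. Since $\omega(k)=\sqrt{k^2+m^2}\geq m$, one has $\|h/\sqrt{\omega}\|_2\leq m^{-1/2}\|h\|_2$, so both estimates combine to $\|a(h)^{\#}\Psi\|\leq C\|h\|_2\,\|(H_f+1)^{1/2}\Psi\|$. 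Setting $\Psi=(H_f+1)^{-1/2}\Phi$ gives the claimed operator inequality, since $(H_f+1)^{-1/2}$ maps $\mathcal{H}$ into $\mathcal{D}(H_f^{1/2})$.

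For infinitesimal boundedness, note $\|V\Psi\|\leq \|\sigma_1\|\,\|\Phi(f)\Psi\|\leq 2C\|f\|_2\,\|(H_f+1)^{1/2}\Psi\|$. I would then use the elementary spectral-calculus inequality $(H_f+1)^{1/2}\leq \varepsilon(H_f+1)+(4\varepsilon)^{-1}$ to deduce $\|V\Psi\|\leq \varepsilon\|H_f\Psi\|+C_\varepsilon\|\Psi\|$ for arbitrary $\varepsilon>0$. Because $H_0=K+H_f$ with $K$ bounded, this upgrades to $\|gV\Psi\|\leq \varepsilon\|H_0\Psi\|+C'_\varepsilon\|\Psi\|$, so that $gV$ is infinitesimally $H_0$-bounded. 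Kato-Rellich then yields self-adjointness of $H$ on $\mathcal{D}(H_0)$, which in turn equals $\mathcal{D}(\mathbbm{1}_K\otimes H_f)$ by the product structure of $H_0$; boundedness from below follows from the same Kato-Rellich argument.

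For the last assertion, I would run the standard bootstrap: pick $\varepsilon=1/2$ in the estimate above, so that
\[
\|H_f\Psi\|\leq \|H_0\Psi\|+\|K\|\,\|\Psi\|\leq \|H\Psi\|+\|gV\Psi\|+\|K\|\,\|\Psi\|\leq \|H\Psi\|+\tfrac{1}{2}\|H_f\Psi\|+C''\|\Psi\|,
\]
giving $\|H_f\Psi\|\leq 2\|H\Psi\|+2C''\|\Psi\|$ for $\Psi\in\mathcal{D}(H_0)=\mathcal{D}(H)$. Substituting $\Psi=(H+i)^{-1}\Phi$ and using $\|(H+i)^{-1}\|\leq 1$ proves boundedness of $H_f(H+i)^{-1}$, and the reverse bound $\|H\Psi\|\leq \|H_0\Psi\|+\|gV\Psi\|\leq C(\|H_f\Psi\|+\|\Psi\|)$ applied to $\Psi=(H_f+1)^{-1}\Phi$ handles $H(H_f+1)^{-1}$. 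I expect no genuine obstacle; the only thing to be careful about is keeping the form-domain and operator-domain arguments clean so that all manipulations take place on the common core $\mathcal{D}(H_0)$, which is standard because $H_0$ is a sum of commuting self-adjoint operators acting on the two tensor factors.
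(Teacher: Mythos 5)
Your proof is correct and follows the standard $N_\tau$-estimate plus Kato--Rellich route that is what the references cited by the paper (which supplies no proof of its own) carry out; the one model-specific point, that the massive gap $\omega\geq m>0$ lets one replace $\|h/\sqrt\omega\|_2$ by $m^{-1/2}\|h\|_2$ so the bound involves only $\|h\|_2$, is correctly identified. The closing bootstrap for $H_f(H+i)^{-1}$ and $H(H_f+1)^{-1}$ is also the expected argument and is carried out without gaps.
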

\begin{remark}\label{R} 
    In  this  work we omit
    spelling out identities whenever unambiguous. 
    For every vector spaces $V_1$,  $V_2$ and
    operators $ A_1 $ and $A_2$ defined on $V_1$ and $V_2$, respectively, we
    identify \begin{equation}\label{iden} A_1 \equiv A_1 \otimes \mathbbm
        1_{V_2}, \hspace{2cm}  A_2  \equiv \mathbbm 1_{V_1} \otimes A_2 .
    \end{equation}
    In order to simplify our notation further, and whenever
    unambiguous, we do not utilize specific  notations for every inner product
    or norm that we  employ.    
\end{remark}   

\subsection{Ground state}
\label{sec:gs}
The existence of a unique ground state has already been proven in the more complicated situation of a massless scalar field; see e.g.\ \cite{spohnspin} and \cite{bdh-res} and for the massive model at stake it can be shown using  regular  perturbation theory. However, for the convenience of the reader, we provide a detailed proof in Appendix \ref{app:gs-proof}. 
\begin{proposition}[Ground state]
\label{prop:gs}
 For any    $ g \geq 0$,  $H$ has a unique ground state, i.e., $\lambda_0 = \inf \sigma( H )$ is a simple eigenvalue of $H$.
We have
\begin{align}
\lambda_0=e_0-g^2\Gamma_0+R_0(g),  \qquad \text{where} \qquad \Gamma_0:= \norm{f/(e_1-e_0+\omega)}^2 ,
\end{align}
and there is a constant $C>0$ such that $|R_0(g)|\leq Cg^4$. 
Furthermore, denoting by $\Psi_{\lambda_0}$ the (unnormalized) ground state constructed in Appendix \ref{app:gs-proof}, we have that
\begin{align}\label{eq:approx_GS_intro}
\norm{\Psi_{\lambda_0}-\varphi_0\otimes\Omega }\leq Cg .
\end{align}
\end{proposition}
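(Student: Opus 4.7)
The strategy is to apply Kato's analytic perturbation theory. By Proposition~\ref{PTOPNDJO}, the interaction $V$ is infinitesimally bounded with respect to $H_0$, so $H(g) = H_0 + gV$ is a self-adjoint analytic family of type (A) in a neighborhood of $g=0$. The unperturbed spectrum satisfies $\sigma(H_0) = \{e_0, e_1\} \cup [e_0+m, \infty)$, so by Assumption~\ref{as} the value $e_0$ is an isolated, non-degenerate eigenvalue of $H_0$, with corresponding eigenvector $\varphi_0 \otimes \Omega$ and spectral gap $\min(m, e_1 - e_0) > 0$. By the Kato-Rellich perturbation theorem, for $g$ small enough there is an analytic branch $g \mapsto \lambda_0(g)$ of simple eigenvalues of $H(g)$ with $\lambda_0(0) = e_0$, and an analytic family $g \mapsto P(g)$ of rank-one spectral projections with $P(0) = |\varphi_0 \otimes \Omega\rangle\langle \varphi_0 \otimes \Omega|$. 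Taking $\Psi_{\lambda_0} := P(g)(\varphi_0 \otimes \Omega)$, the bound $\|\Psi_{\lambda_0} - \varphi_0 \otimes \Omega\| \leq Cg$ follows immediately from $\|P(g) - P(0)\| = O(g)$.

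To identify the coefficient $\Gamma_0$ I would use the Rayleigh-Schr\"odinger series. The first-order term vanishes because $\langle \varphi_0, \sigma_1 \varphi_0\rangle = 0$ (equivalently, $\langle \Omega, \Phi(f)\Omega\rangle = 0$). For the second-order term, note that $V(\varphi_0 \otimes \Omega) = \varphi_1 \otimes f$ lives in the one-boson sector attached to the excited atomic level, on which $H_0 - e_0$ acts as multiplication by $(e_1 - e_0 + \omega(k))$; hence
\begin{align*}
\lambda_0(g) - e_0 = -g^2 \bigl\langle \varphi_1 \otimes f,\, (H_0-e_0)^{-1}\,\varphi_1 \otimes f \bigr\rangle + O(g^3) = -g^2 \Gamma_0 + O(g^3).
\end{align*}
Note that $(H_0 - e_0)^{-1}$ is well-defined on the range of $V(\varphi_0 \otimes \Omega)$ because $\omega \geq m$ and $e_1 > e_0$, so the denominator $e_1 - e_0 + \omega(k)$ is bounded away from zero.

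To improve the remainder from $O(g^3)$ to $O(g^4)$, I would invoke the parity symmetry $U := \mathbbm{1}_{\mathcal{K}} \otimes (-1)^{N_f}$, where $N_f$ is the boson number operator. Since $U$ commutes with $K$ and $H_f$ but anticommutes with $\Phi(f)$, we have $U H(g) U^{-1} = H(-g)$. Consequently $\lambda_0(g) = \lambda_0(-g)$, so by analyticity $\lambda_0(g)$ is an even function of $g$, its power series contains no $g^3$ term, and $R_0(g) := \lambda_0(g) - e_0 + g^2 \Gamma_0 = O(g^4)$.

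Finally, for the assertion that $\lambda_0(g) = \inf \sigma(H(g))$ (and not just a branch of eigenvalues emanating from $e_0$), I would argue by stability of the spectral gap. Since $\|V (H_0 - z)^{-1}\|$ can be made uniformly small on a contour separating $e_0$ from the remainder of $\sigma(H_0)$, a standard Neumann series / Combes-Thomas type argument shows $\sigma(H(g)) \cap (-\infty, e_0 + \tfrac{1}{2}\min(m, e_1 - e_0))$ consists of the single simple eigenvalue $\lambda_0(g)$ for $g$ small. The main (and essentially only) routine obstacle is bookkeeping the symmetry argument cleanly; everything else follows from standard regular perturbation theory, which is the reason the massive case is markedly easier than the massless situation treated in \cite{bdh-res}.
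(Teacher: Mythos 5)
Your proof is correct and is essentially the same approach as the one given in Appendix~\ref{app:gs-proof}: both are regular perturbation theory around the isolated simple eigenvalue $e_0$ of $H_0$ (the spectral gap is $\min(m,e_1-e_0)=m>0$), construct the rank-one spectral projection, and extract $\lambda_0$ from a second-order resolvent/Rayleigh--Schr\"odinger expansion. The presentational differences are cosmetic: you invoke Kato's analytic family of type (A) where the paper carries out the contour-integral and Neumann-series bookkeeping explicitly in Lemmas~\ref{lemma:resestneargs}--\ref{lemma:proj}, and you package the symmetry via the explicit parity operator $U=\mathbbm 1_{\mathcal K}\otimes(-1)^{N_f}$ with $UH(g)U^{-1}=H(-g)$, which is a clean restatement of the paper's observation that the odd-order terms $\langle\phi_0,V(H_0-z)^{-1}\cdots(H_0-z)^{-1}\phi_0\rangle$ vanish. (One small remark: your second-order computation, like the paper's own Appendix, produces $\Gamma_0=\int\mathrm{d}^3k\,|f(k)|^2(e_1-e_0+\omega(k))^{-1}$, i.e.\ with a single power of $(e_1-e_0+\omega)$ in the denominator, rather than the square that the displayed notation $\norm{f/(e_1-e_0+\omega)}^2$ would suggest if read as an $L^2$-norm; the dimensional check also favors the first power, so the statement's notation appears to be a typo and your derivation is the correct one.)
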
 
 The existence of a ground state can be established for any value of $g$, see \cite{derezinski}.

\subsection{Scattering theory}
\label{sec:scattering}
Finally, we give a short review of scattering theory,  in models of quantum field theory,  which will be necessary to
state the main results  in
Section~\ref{sec:mainresult}.
 For a more detailed introduction we refer to \cite[ Section 1.2]{bdh-scat}.
\begin{definition}[Basic components of scattering theory]
\label{defasymptop}
We denote the dense subspace of compactly supported, smooth, and complex-valued
functions on $\R^3\setminus \{0 \}$   by
\begin{align}
    \label{def:h0}
    \mathfrak{h}_0 :=\mathit C_c^\infty (\R^3\setminus \{0 \},\C) \subset
    \mathfrak{h}.
\end{align}
Furthermore, we define the following objects:
\begin{enumerate}
    \item[(i)] For $h\in\mathfrak{h}_0$, the limit operators
\begin{align}
    \label{asymptop}
    a_\pm(h)\Psi :=\lim\limits_{t\to\pm \infty}a_t(h)\Psi, \quad
    a_t(h):=e^{itH}a(h_t) e^{-itH},
    \quad
  h_t(k):=h(k) e^{ - it\omega(k)} ,
\end{align}
for all $\Psi\in\mathcal H$ such that the limit exists, and also their
respective adjoints $a_\pm^*(h)$.
 \item[(ii)] The two-body scattering matrix coefficients:
\begin{align}
\label{eq:2bodyscat}
S(h,l)= \norm{\Psi_{\lambda_0}}^{-2}\left\langle
a_+(h)^*\Psi_{\lambda_0},a_-(l)^* \Psi_{\lambda_0} \right\rangle, \qquad \forall
h,l\in \mathfrak h_0 ,
\end{align}
where the factor $\norm{\Psi_{\lambda_0}}^{-2}$ appears due to the fact that,
  in our notation, the ground state
$\Psi_{\lambda_0}$ is not necessarily normalized.
 \item[(iii)] The two-body transition matrix
coefficients given by
\begin{align}
T(h,l)=S(h,l) -\left\langle h , l \right\rangle_2 \qquad \forall h,l\in \mathfrak h_0 .
\label{eq:Tmatrix}
\end{align}
\end{enumerate}
\end{definition} 
The  operators $a_\pm$ and $a_\pm^*$ are called asymptotic
outgoing/incoming annihilation and creation operators. 
For  $\Psi\in\mathcal D(H_0^{1/2})$, the limits  \eqref{asymptop} exist.  The proof of this is obtained from the fundamental theorem of calculus, i.e. we write (for example for $a_{-}$)
     \begin{align} \label{amenosintegral}
            a_-(h)\Psi=a(h)\Psi + ig\int^0_{-\infty} \mathrm{d}s\, 
        e^{isH}
        \langle h_s,f\rangle_2\,
            \sigma_1
            e^{-isH} \Psi,
        \end{align} 
and we apply integration by parts to show that the integral above exists  
 (see, e.g.,
\cite{fau1,fau2,fau3,fgs1,fgs2,rgk,rk,rk2,derezinski,bkz}).  All details of this proof are presented in \cite{bdh-scat} for the massless case.  
Therein, also other useful results are shown (see \cite[Lemma 4.1]{bdh-scat}), e.g., 
\begin{align}\label{amenospsi}
a_\pm(h)\Psi_{\lambda_0}=0.
\end{align}

 The starting point for the analysis of the transition matrix  is what we call preliminary scattering formula: for $h,l\in \mathfrak{h}_0$,
   \begin{align}
T(h,l)= -2\pi ig \norm{\Psi_{\lambda_0}}^{-2}\left\langle 
 \sigma_1  \Psi_{\lambda_0}, a_-(W)^* \Psi_{\lambda_0} \right\rangle, \hspace{.2cm}  W( k):=|k|^2 l(k) \int\mathrm{d}\Sigma \, \overline{h(|k|,\Sigma)}f(|k|,\Sigma),  
 \label{W} 
\end{align}
where we use spherical coordinates $k=(|k|,\Sigma)$. Eq.\  \eqref{W} is proven for the massless case in \cite[Theorem 4.3]{bdh-scat}. 
In our setting, when considering massive scalar fields, the proofs of \eqref{amenosintegral}, \eqref{amenospsi} and \eqref{W} follow the same line of arguments as the proofs of Lemma 4.1 and Theorem 4.3 in \cite{bdh-scat}, and therefore, we do not repeat them here.

The matrix coefficients $S(h,l)$  can be interpreted as transition
amplitudes of the scattering process for the following scenario: One  incoming boson with wave
function $l$ is scattered at the two-level atom into an outgoing boson with
wave function $h$. We point out to the reader that in this work we focus on
one-photon processes only, however, the matrix coefficients of
multi-photon processes can be defined likewise.

\section{Main result}
\label{sec:mainresult}

 We now come to our main result, Theorem~\ref{FKcor} below, which makes
precise the relation between the scattering matrix kernel and the resonance.
\begin{definition}
     Using the notation $d^3x\equiv d\Sigma r^2 dr$ for solid angles
    $\Sigma$ and radius $r$ in spherical coordinates, we define, for all
    $h,l\in\mathfrak{h}_0$,
    \begin{align}
        \label{eq:G-def}
        G_{h,l}: \R \to \C , \qquad r \mapsto G_{h,l}(r):=
        \begin{cases}
            \int {d}\Sigma {d}\Sigma' \,  r^4
            \overline{h(r,\Sigma)} l(r,\Sigma') f(r)^2  \qquad &\text{for}
            \quad r\geq 0 
            \\ 
            0 \quad &\text{for} \quad r<0 
        \end{cases}. 
    \end{align}
    In the proofs below we will drop the indices $h,l$ and write $G_{h,l}\equiv
    G$.
\end{definition}  
\begin{theorem}[Scattering formula]
    \label{FKcor}
    Suppose that Assumption \ref{as} holds.  There exists a complex
    number $\Gamma_{-0}$ with $ \Im   \Gamma_{-0} > 0  $  such
    that for all $h,l\in\mathfrak{h}_0$ and $g>0$  sufficiently small,
    the transition matrix coefficients \eqref{eq:Tmatrix} are given by
    \begin{align} 
        \label{scatteringformulapp}
        T(h,l)= & T_{P}(h,l) + R(h,l) ,
    \end{align} 
    where  
    \begin{align} 
        \label{scatteringformulapp1} 
        T_{P}(h,l):= & 4 \pi i g^2
        \norm{\Psi_{\lambda_0}}^{-2}  \int \mathrm{d} r \,    \frac{G_{h,l}(r)
        \left( e_1-g^2 \Re  \Gamma_{- 0}   -\lambda_0 \right)}{
        \left( \omega(r)+  \lambda_0- \big (  e_1 -g^2 \Gamma_{- 0}  \big )
        \right)\left(\omega(r)-  \lambda_0+  \big (  e_1 -g^2 \overline{ \Gamma_{- 0}}
        \big ) \right)}  ,
    \end{align} 
    and there is a constant $C(h,l)>0$ such that 
    \begin{align}  
        \label{scatteringkernel}
        |R(h,l) |\leq C(h,l)  g^{2} g^{1/3} |\log(g)|   .
    \end{align} 
    In \eqref{LL2} below we give an explicit expression of $\Gamma_{-0}$.
\end{theorem}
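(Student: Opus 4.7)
The plan is to convert the preliminary scattering formula \eqref{W} into a boundary-value resolvent integral, extract the resonance by a Feshbach-Schur reduction onto the excited atomic state, and control the remainder with the Mourre-type limiting absorption principle developed in Section \ref{sec:mourre}. Concretely, I would start from the adjoint of \eqref{amenosintegral} and use $H\Psi_{\lambda_0}=\lambda_0\Psi_{\lambda_0}$ to rewrite $a_-(W)^*\Psi_{\lambda_0}$ as the sum of $a(W)^*\Psi_{\lambda_0}$, which together with \eqref{eq:approx_GS_intro} produces exactly the subtraction $\langle h,l\rangle_2$ in \eqref{eq:Tmatrix}, and a $g$-dependent time integral. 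Writing $\overline{\langle W_s,f\rangle_2}=\int d^3k\,W(k)f(k)e^{-is\omega(k)}$ (using that $f$ is real), exchanging the $s$- and $k$-integrations by Fubini, computing the $s$-integral via the spectral theorem, and carrying out the angular integration that produces $G_{h,l}$ yields
\begin{equation*}
T(h,l) = 2\pi i g^2 \norm{\Psi_{\lambda_0}}^{-2}\int dr\,G_{h,l}(r)\,M\bigl(\lambda_0+\omega(r)+i0\bigr) + O(g^4),
\end{equation*}
where $M(z):=\langle\sigma_1\Psi_{\lambda_0},(H-z)^{-1}\sigma_1\Psi_{\lambda_0}\rangle$; the $+i0$ prescription reflects convergence of the $s$-integral at $s=-\infty$.

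To identify the Lorentzian I would apply the Feshbach-Schur map with rank-one projection $P:=|\varphi_1\otimes\Omega\rangle\langle\varphi_1\otimes\Omega|$ and $\bar P:=\mathbbm 1-P$. On a complex neighborhood of $e_1$ this yields $P(H-z)^{-1}P=f_P(z)^{-1}P$ with effective operator
\begin{equation*}
f_P(z) = e_1 - z - g^2\Sigma(z),\qquad \Sigma(z) := \langle\varphi_1\otimes\Omega,V\bar P(H_{\bar P}-z)^{-1}\bar PV\varphi_1\otimes\Omega\rangle,
\end{equation*}
where $H_{\bar P}:=\bar PH\bar P$. First-order perturbation theory applied to $\Psi_{\lambda_0}$ gives $\sigma_1\Psi_{\lambda_0}=\varphi_1\otimes\Omega - g\varphi_0\otimes(e_1-e_0+\omega)^{-1}f + O(g^2)$, and the full Feshbach decomposition of $(H-z)^{-1}$ expresses $M(z)$ as $f_P(z)^{-1}$ plus cross- and $\bar P$-diagonal terms of identifiable $g$-orders. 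A Neumann expansion of $(H_{\bar P}-z)^{-1}$ at $z=e_1$ together with the Plemelj-Sokhotski formula identifies $\Gamma_{-0}$ as the boundary value of $\Sigma$ at $e_1+i0$, with $\Im\Gamma_{-0}>0$ (Fermi's Golden Rule) ensured by \eqref{fmaszero}; an implicit-function argument then produces the unique simple root $z_*=e_1-g^2\Gamma_{-0}+O(g^4)$ of $f_P$ in the lower half neighborhood of $e_1$.

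The main obstacle is to justify that $f_P$ and its Neumann series extend up to the real boundary on a neighborhood of $e_1$ wide enough to accommodate the resonance, as needed to evaluate $M(\lambda_0+\omega(r)+i0)$ pointwise in $r$. Complex dilation, the tool of \cite{bdh-scat}, is unavailable because the threshold structure \eqref{eq:spectrum_free} spawns half-lines of spectrum attached to each $e_0+km$ and $e_1+km$ after dilation, precluding the required resolvent decay. Instead I would invoke the Mourre estimate and limiting absorption principle for $H_{\bar P}$ of Section \ref{sec:mourre}, in particular Lemma \ref{keyest1} and Proposition \ref{prop:absenceofacspecclose}(iii). By Assumption \ref{as}, $e_1$ is separated from all thresholds, and the regularity $f,Df,D^2f\in L^2(\R^3)$ provides the smoothness required for the conjugate operator $D$; together these yield Hölder continuity of $\langle D\rangle^{-s}(H_{\bar P}-z)^{-1}\langle D\rangle^{-s}$, $s>1/2$, up to the real axis on a neighborhood of $e_1$, which translates into the required smoothness of $f_P(z)$ and hence of $M(z)$.

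Finally, substituting the Feshbach representation into the first-paragraph integral and splitting the $r$-integration at scale $|\omega(r)-(e_1-\lambda_0)|\sim g^{2/3}$, the leading resonance term together with the cross-contributions from the $O(g)$ correction to $\sigma_1\Psi_{\lambda_0}$ combine algebraically into the symmetric-denominator Lorentzian \eqref{scatteringformulapp1}. The subleading parts of $M$ on the window contribute $O(g^{2/3})$ to the integral, while outside the window the Mourre-based bound $|M(\lambda_0+\omega(r)+i0)|\lesssim |\omega(r)-(e_1-\lambda_0)|^{-1}$ integrated over a compact set contributes $O(|\log g|)$; multiplied by the overall $g^2$ prefactor these combine to the error bound \eqref{scatteringkernel}, the exponent $1/3$ arising from the optimal balance between window width and regular-part error. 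I expect this window--Mourre balance, together with the uniform control of $M$ outside the window, to be the quantitatively delicate step and the principal novelty relative to \cite{bdh-scat}.
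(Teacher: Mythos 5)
Your high-level strategy---converting the preliminary scattering formula into a boundary-value resolvent quantity, applying a Feshbach--Schur reduction onto $P=\lvert\varphi_1\otimes\Omega\rangle\langle\varphi_1\otimes\Omega\rvert$, and controlling the resulting H\"older regularity of the Feshbach operator via the Mourre/limiting-absorption machinery of Section~\ref{sec:mourre}, with a window at scale $g^{2/3}$---is the same one the paper follows. However, there is a genuine gap in your reduction to a \emph{single} boundary-value quantity. After inserting \eqref{amenosintegral} into \eqref{W}, the commutator $[a_-(W),\sigma_1]\Psi_{\lambda_0}$ produces \emph{two} time integrals, corresponding to the two orderings of the commutator; in the paper's notation $T=2\pi\norm{\Psi_{\lambda_0}}^{-2}(T^{(1)}-T^{(2)})$, with $T^{(1)}$ involving $e^{-itH}$ and prefactor $e^{it(\omega(r)+\lambda_0)}$, and $T^{(2)}$ involving $e^{+itH}$ and prefactor $e^{it(\omega(r)-\lambda_0)}$. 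Formally integrating in time, these become boundary values of $\langle\sigma_1\Psi_{\lambda_0},(H-z)^{-1}\sigma_1\Psi_{\lambda_0}\rangle$ at the \emph{two distinct} points $z=\lambda_0+\omega(r)+i0$ (resonant) and $z=\lambda_0-\omega(r)$ (regular, below $\inf\sigma(H)$). Both contributions are $O(g^2)$ in $T(h,l)$; the second is not an $O(g^4)$ remainder. Your formula $T(h,l)=2\pi i g^2\norm{\Psi_{\lambda_0}}^{-2}\int dr\,G_{h,l}(r)\,M(\lambda_0+\omega(r)+i0)+O(g^4)$ therefore omits an order-$g^2$ term.

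This omission matters for the final algebra. You assert that the second, non-resonant factor $\omega(r)-\lambda_0+\bigl(e_1-g^2\overline{\Gamma_{-0}}\bigr)$ in the denominator of \eqref{scatteringformulapp1} arises from ``cross-contributions from the $O(g)$ correction to $\sigma_1\Psi_{\lambda_0}$'' inside the Feshbach decomposition of $M(\lambda_0+\omega(r)+i0)$. That is not what happens: the cross terms $\langle\Phi_1,(H-z)^{-1}\overline{P}\sigma_1\Psi_{\lambda_0}\rangle$ contribute, near $z=e_1$, $O(g)\cdot F_P(z)^{-1}\cdot O(g)$--type quantities whose $r$-dependence is not of the form $\bigl(\omega(r)-\lambda_0+e_1\bigr)^{-1}$; they do not supply the non-resonant factor. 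The non-resonant factor comes entirely from $T^{(2)}$ (i.e., from the resolvent evaluated at $z=\lambda_0-\omega(r)$), exactly as in the massless analysis leading to \eqref{eq:transition-matrix-massless}: the second denominator never vanishes because $\omega(r)+e_1-\lambda_0$ is bounded below. Without $T^{(2)}$ you cannot obtain the symmetric-denominator Lorentzian, and the ``algebraic combination'' you invoke fails. The fix is straightforward once acknowledged---retain $T^{(2)}$ and note that its $M$-argument lies a distance $\gtrsim m$ below the spectrum so the Feshbach/LAP estimates there are trivial---but as written the proposal does not account for it, and the attribution of the second factor to the $O(g)$ correction to the ground state is incorrect.
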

 Not surprisingly, it turns out that $  \widetilde \lambda_1 := e_1  - g^2
\Gamma_{-0} $ is the leading term of the resonance,  up to order
$g^2$.  This connection can be made by the standard construction of the
resonance by means of complex dilation. This computation is not carried
out here 
since we wanted to focus on the methods of Mourre theory rather than complex dilation; 
see,  e.g., \cite{bbp}  for such a
construction for  massless  fields using the method of complex dilation. Note that, in our situation, the construction is much easier since the dilated Hamiltonian exhibits spectral gaps. For treating resonances within the realm of Mourre theory we refer to  \cite{km,kms,cgh,fms}.

In order to compare this formula with the massless case, see
\eqref{eq:transition-matrix-massless}, we may rewrite
\eqref{scatteringformulapp1} in integral kernel form which takes the
form 
\begin{align} 
    \label{eq:transition-matrix-massive}
    T(k,k')
    \sim 4\pi i g^2 \norm{\Psi_{\lambda_0}}^{ -2  } f(k)^2
    \, \frac{|k|\delta(\omega(k)-\omega(k'))}{\omega(k)} \,
    \frac{\Re\widetilde\lambda_1 - \lambda_0}
         {(|k|+\lambda_0 - \widetilde\lambda_1)
         (|k|-\lambda_0 +  \overline{\widetilde\lambda_1})}.
\end{align}
There are only two differences in the formulas
\eqref{eq:transition-matrix-massive} and \eqref{eq:transition-matrix-massless}.
One is due to the different dispersion relations $\omega(k)=\sqrt{|k|^2+m^2}$
and $\omega(k)=|k|$ for the massive and massless case, respectively, and the
other due to the fact that, in \eqref{eq:transition-matrix-massless},
$\lambda_1$ figures the non-perturbative resonance  while, in
\eqref{eq:transition-matrix-massive}, the entity $\widetilde\lambda_1$ is only the
 second order perturbation in $g$ for small $g$ as explained above.
However, the latter difference is  not relevant  as the rest term $R(h,l)$ in
both     cases  is of  order  $g^2 \,g^{1/3}|\log g|$, and thus, will swallow this difference
anyway. 

The difference in the order in $g$ of the given estimates of the rest terms
$R(h,l)$ between the massive, i.e., $g^2 \,g^{1/3}|\log g|$ in Theorem
\ref{FKcor}, and  the   massless case, i.e., $g^2\, g|\log g|$ in Theorem 2.2 in
\cite{bdh-scat},    is  solely due to the different techniques which were
employed. While in this paper the required spectral information was inferred by
Mourre theory in the paper \cite{bdh-scat} the method of complex dilation was
used. If a fair comparison of both techniques is possible at all, from our
experience, it turns out that Mourre theory requires less information about the
model, especially, no analyticity properties, to start with, however, gives a
little more imprecise estimates of the remainders. In turn, the method of
complex dilation is based on these analyticity properties but, given this
information,  one   is able to produce slightly better estimates on the remainders.   Since the model features a scalar interaction, the
physical perturbation processes only differ for even orders in $g$. Hence, the
different estimates of the remainders inferred by our application of Mourre
theory and   the  method of complex dilations can be expected to be physically
insignificant. Furthermore, also technically, there seems to be room for
improvement.  \\

 Compared to our previous derivation of the transition matrix formula
\eqref{eq:transition-matrix-massless}, see \cite{bdh-scat}, for the massless
Spin-Boson model, there are two main innovations in the strategy of proof.
First, as already explained, we do not rely on complex dilations anymore but
instead use Mourre theory to infer the required spectral information. And
second, as mentioned already in the introduction, we handle the problem caused
by the nature of the spectrum of the free  dilated  Hamiltonian 
 (see \eqref{eq:spectrum_free}),
which is    a complication   due to non-zero boson mass. In previous works
\cite{bdh-scat} and \cite{bdh-res}, complex dilations  were  used both for the
construction of the resonance as well as the control of required spectral
properties, in particular, the estimates on the relevant resolvents. 

Formally, the main steps of our proof of Theorem~\ref{FKcor} are the following. First, after some computations, we arrive at the formula $T(h,l) = 2 \pi   \norm{\Psi_{\lambda_0}}^{-2} (  T^{(1)}- T^{(2)} )$, where
\begin{align*}
T^{(1)} &   =  g^2\int_0^\infty  \mathrm{d}t  \, \zeta(t)  \left\langle   \sigma_1 \Psi_{\lambda_0}, e^{-itH} \sigma_1\Psi_{\lambda_0}\right\rangle , \\
T^{(2)} &= g^2  \int_0^\infty  \mathrm{d}t  \int_0^\infty \mathrm{d} r\,   G(r) e^{it(\omega(r)-\lambda_0)}     \left\langle \sigma_1  \Psi_{\lambda_0}, e^{itH}\sigma_1\Psi_{\lambda_0}\right\rangle ,
\end{align*}
and $\zeta$, $G$ are some functions defined in \eqref{def:W1sttt} below. Next, we study the quantity $\langle \sigma_1  \Psi_{\lambda_0}, e^{\pm itH}\sigma_1\Psi_{\lambda_0} \rangle$. Using \eqref{eq:approx_GS_intro} and the Spectral Theorem, we rewrite
\begin{align}
\left\langle \sigma_1  \Psi_{\lambda_0}, e^{\pm itH}\sigma_1\Psi_{\lambda_0}\right\rangle = \pi^{-1}\lim\limits_{\epsilon\to 0^+}\int_\R \mathrm{d}r \, \chi(r) e^{-itr}  \Im  \left\langle  \Phi_1,  (H -  r \pm i\epsilon)^{-1}\Phi_1\right\rangle , \label{eq:estim_intro}
\end{align}
with $\Phi_1 = \varphi_1 \otimes \Omega$, $\varphi_1$ is an eigenstate of $K$ associated to the excited energy $e_1$ (see \eqref{varphi}), and $\chi$ is a smooth function supported in a suitable small interval containing $e_1$. Using the Feshbach-Schur map, we write
\begin{align*}
 \left\langle  \Phi_1,  (H -  r \pm i\epsilon)^{-1}\Phi_1\right\rangle = \left\langle  \Phi_1,  F_P(H-r\pm i \epsilon)^{-1} \Phi_1\right\rangle ,
\end{align*}
where $F_P(H-z):=P(H-z)P -g^2 PV\overline P (H_{\overline{P}}-z)^{-1}\overline P VP$, with $P$ the orthogonal projection onto the vector space spanned by $\Phi_1$ and $H_{\overline{P}} = \overline P H \overline P$. The key point then consists in computing an expansion of $F_P(H-r\pm i 0^+ )^{-1}$ in $g$ and $r$ by means of Mourre's theory. We establish regularity properties of the boundary values of the resolvent of $H_{\overline{P}}$ near the real axis (see \eqref{holder} below) using in particular arguments of \cite{fms}, from which we deduce that $F_P(H-r\pm i 0^+ )^{-1}$ is H{\"o}lder-continuous of order $\frac12$ in \emph{both} $r$ and $g$. Together with suitable estimates on remainder terms, this allows us to prove Theorem~\ref{FKcor}.

To our knowledge, our approach to study \eqref{eq:estim_intro} has not been used previously in the literature (see although \cite{sigal,bfs100,hasler,km} for other expansions, with different purposes, of quantities similar to \eqref{eq:estim_intro}). We believe that our argument may find applications in other contexts. \\

The  main technical import for
the proof of Theorem~\ref{FKcor} is contained in the next
Section~\ref{sec:limab}. There, we provide a central Mourre estimate in
Lemma~\ref{lemma:comestclose} which implies a limiting absorption principle  in 
Proposition~\ref{prop:absenceofacspecclose}.  The latter is employed in
Section~\ref{sec:timeevo}, in a combination with the Feshbach-Schur map as mentioned above,   to control the time evolution in the scattering
regime, and hence, the transition matrix coefficient under investigation.  
In Section~\ref{app:limab} we provide a
proof of the limiting absorption principle, i.e.,
Proposition~\ref{prop:absenceofacspecclose}, which  in parts is a
self-contained review of results in the literature but also provides a
non-standard result, see \eqref{holder}, which allows to conveniently apply a
limiting absorption principle in the context of perturbation theory.
\begin{remark} 
\label{rem:const}
In the remainder of this work we denote by $C$ any generic, positive (indeterminate) constant which may change from line to line in the computations but does not depend on   $g$ and the parameters   $z,z',\epsilon, \eta, \beta$ introduced below.    
\end{remark}

\section{Technical ingredients}
\label{sec:mourre}

In this section we derive a formula for the leading order term with respect to the coupling constant of a certain matrix element of time-evolution and estimate the error term.  We rely on two main ingredients,  namely, a limiting absorption principle derived from a Mourre estimate and a Feshbach-Schur map. In the first part, Section \ref{sec:limab}, we introduce  some   notation and   prove  technical lemmas and a Mourre estimate which allows to derive a limiting absorption principle. The latter is also stated     in     this section since we use it as a key tool in order to prove    our   main result. 
  In the second part, Section \ref{sec:timeevo}, we introduce a Feshbach-Schur map and   combine it with   the limiting absorption principle  in order to control a certain matrix element of time-evolution.

\subsection{Limiting absorption principle}
\label{sec:limab}

In this section we present the  limiting absorption principle based on a Mourre estimate for the model at stake. We follow the 
construction of \cite{cycon}, see also  
 \cite{spohnspin,fms,ggm}.  
 We start with introducing    some  notation. 
\begin{definition}
\label{def:secquant}
 Recall that $\mathfrak{h}_0$ has been defined in \eqref{def:h0}. 
\begin{enumerate}
\item[(i)]   
For any  self-adjoint operator $O$,   we define $\mathrm{d}\Gamma(O)  $ as the generator of the unitary one-parameter group $\left\{\Gamma(e^{-itO})\right\}_{t\in\R}$, where 
\begin{align}\label{secondQ}
\Gamma(e^{-itO}) := \bigoplus_{n = 0}^{\infty} (e^{-itO})^{\odot n} ,  \hspace{2cm}    (e^{-itO})^{\odot 0} := 1.    
\end{align} 
 It follows from Stone's theorem that $  \mathrm{d}\Gamma(O) $ is self-adjoint.  Notice that    $H_f=\mathrm{d}\Gamma(\omega)$.
\item[(ii)] 
For $\beta \in \R$, we define the unitary dilation operator
\begin{align}
u_\beta:\mathfrak h \to \mathfrak h, \qquad \varphi(k)\mapsto
\varphi_\beta(k):=e^{\frac{3}{2}\beta}\varphi(e^{\beta}k), \qquad \forall k\in \R^3.
\end{align} 
  We denote by $D$ generator of dilations, which is the  generator of the unitary one-parameter group  $\left\{ u_\beta\right\}_{\beta\in\R}$.      
    Note that $D$ is self-adjoint on $\mathcal D(D)\subset \mathcal H$ due to Stone's theorem.

Moreover,  for $\varphi\in  \mathfrak{h}_0 $ and $\beta \in\R$, we observe  that 
\begin{align}
\label{diffeq21}
\frac{\mathrm{d}}{\mathrm{d} \beta} \varphi_\beta (k) =\frac{1}{2} \left( \nabla_k \cdot k  +  k\cdot \nabla_k    \right)\varphi_\beta (k), \qquad k\in \R^3 .
\end{align}
This implies that the action of $D$ on $ \mathfrak{h}_0 $ is given by  $\frac{i}{2}   (k\cdot \nabla_k + \nabla_k \cdot k)  $. 
\item[(iii)] We     introduce    the function
\begin{align}
\label{def:xi}
\xi :\R^3\to \R , \qquad k\mapsto \xi(k):= k^2/\omega(k) .
\end{align}
\item[(iv)] We    set 
\begin{align}
    \label{eq:H0}
   \mathcal H_0:=\mathcal K\otimes
\mathcal F_{\text{fin}}[\mathfrak h_0] ,
\end{align}
where
\begin{align}
\label{def:denseF}
    \mathcal F_{\text{fin}}[\mathfrak h_0]:= \Big \{ \Psi=(\psi^{ (n)})_{n\in\N_0}\in
        \mathcal F[\mathfrak h] \,\big|\, 
        \exists  N  \in \N_0   : &  
\psi^{ (n)}=  0 \, \forall n\geq N, \\ \notag &  \forall n \in \mathbb{N} :
   \psi^{ (n)} \in C_c^{\infty}(\mathbb{R}^{3n } \setminus \{ 0 \}, \mathbb{C})    \Big \}. 
\end{align}
\item[(v)]  Moreover,  for   every closed operator  $A$,   we denote by 
\begin{align}
\norm{\cdot}_A:=\left(\norm{A \cdot}^2+\norm{\cdot}^2\right)^{1/2},
\end{align}
its graph norm in the domain of $A$.  
\end{enumerate}
\end{definition}
\begin{remark}
\label{rem:densegraph}
Note that $\mathcal H_0$ and $\mathcal F_{\text{fin}}[\mathfrak{h}_0]$  are  dense subsets of  the domains of $H$ and $H_f$ with respect to the graph norm of $H$ and $H_f$, respectively.  In other words,  $\mathcal H_0$ and $\mathcal F_{\text{fin}}[\mathfrak{h}_0]$ are cores of $H$ and $H_f$, respectively. 
\end{remark}
The following statement is a collection of general properties of the objects introduced in Definitions \ref{def:secquant}  which we will use in the remainder of this work.
\begin{lemma}
\label{lemma:funcprop}
The following properties hold true: 
\begin{enumerate}
\item[(i)] $\mathcal F_{\text{fin}}[\mathfrak{h}_0]\subset \mathcal D(H_f)\cap \mathcal D(\mathrm{d}\Gamma(D))$. 
\item[(ii)]      $ \mathcal D(H_f)\subset \mathcal D(\Phi(Df))$ and   $  \Phi(Df)(H_f+1)^{-\frac12}   $ is bounded (recall the definition of $\Phi(f)$ in \eqref{def:VPHI}).    
\item[(iii)]   $ \mathcal D(H_f)\subset \mathcal D(\mathrm{d}\Gamma(\xi)) $  and  $  \mathrm{d}\Gamma(\xi)(H_f+1)^{-1}   $ is bounded.    
\item[(iv)] The operator  $[H_f,  i   \mathrm{d}\Gamma(D)]$  defined as a quadratic form on $\mathcal D(H_f)\cap \mathcal D(\mathrm{d}\Gamma(D))$  can be uniquely extended to a $H$-bounded operator on $\mathcal D(H)=\mathcal D(H_0)$ denoted by $[H_f,  i   \mathrm{d}\Gamma(D)]^0$.  We have the identity:
\begin{align}
[H_f,   i   \mathrm{d}\Gamma(D)]^0= \mathrm{d}\Gamma(\xi) 
\end{align}
 on $\mathcal D(H_0)$.
\item[(v)] The operator  $[\Phi(f),   i \mathrm{d}\Gamma(D)]$  defined as a quadratic form on $\mathcal D(H_f)\cap \mathcal D(\mathrm{d}\Gamma(D))$ can be uniquely extended to a $H$-bounded operator on $\mathcal D(H)=\mathcal D(H_0)$ denoted by $[\Phi(f),   i \mathrm{d}\Gamma(D)]^0$.  We have the identity:
\begin{align}
[\Phi(f),  i \mathrm{d}\Gamma(D)]^0=   \Phi(Df)
\end{align}
 on $\mathcal D(H_0)$.
\end{enumerate}
\end{lemma}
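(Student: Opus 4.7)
The plan is to verify the five items in order, doing all algebraic manipulations on the invariant dense subspace $\mathcal H_0 = \mathcal K \otimes \mathcal F_{\text{fin}}[\mathfrak h_0]$ and extending by density, using Remark \ref{rem:densegraph} (that $\mathcal H_0$ and $\mathcal F_{\text{fin}}[\mathfrak h_0]$ are cores for $H$ and $H_f$ respectively) together with Proposition \ref{PTOPNDJO} for all $H$-boundedness statements. Items (i)--(iii) are essentially bookkeeping. For (i), every element of $\mathcal F_{\text{fin}}[\mathfrak h_0]$ has only finitely many nonzero sectors, each smooth and compactly supported away from the origin, so both $H_f$ and $\mathrm d\Gamma(D)$ act sectorwise without any convergence issue. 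Item (ii) is the direct application of Proposition \ref{PTOPNDJO} with $h = Df$, which lies in $L^2(\mathbb R^3)$ by the standing hypothesis on $f$; this yields both the domain inclusion and the bound $\|\Phi(Df)(H_f+1)^{-1/2}\| \le 2C\|Df\|_2$. For (iii), the pointwise inequality $\xi(k) = k^2/\omega(k) \le \omega(k)$ (valid because $k^2 \le k^2 + m^2$) lifts sectorwise to $\sum_j \xi(k_j) \le \sum_j \omega(k_j)$, and therefore $\|\mathrm d\Gamma(\xi)\Psi\| \le \|H_f\Psi\|$ on the core $\mathcal F_{\text{fin}}[\mathfrak h_0]$; closedness of $H_f$ extends this to $\mathcal D(H_f)$ and forces $\|\mathrm d\Gamma(\xi)(H_f+1)^{-1}\| \le 1$.

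For (iv), I would first carry out the one-particle computation on $\mathfrak h_0$: using $\nabla\omega(k) = k/\omega(k)$ together with the explicit action $iD = -\tfrac12(k\cdot\nabla + \nabla\cdot k)$, the transport pieces cancel and one is left with $[\omega, iD] = k\cdot\nabla\omega = \xi$ as a multiplication operator. The standard Fock-space lift $[\mathrm d\Gamma(A), \mathrm d\Gamma(B)] = \mathrm d\Gamma([A,B])$, which holds as a form identity on $\mathcal F_{\text{fin}}[\mathfrak h_0]$ because cross-terms between different particles vanish on every $n$-particle sector, then produces $[H_f, i\mathrm d\Gamma(D)] = \mathrm d\Gamma(\xi)$ as a form on $\mathcal H_0$. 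By (iii) and Proposition \ref{PTOPNDJO} the right-hand side is $H$-bounded, and density of $\mathcal H_0$ in $\mathcal D(H_0)$ delivers the unique extension $[H_f, i\mathrm d\Gamma(D)]^0 = \mathrm d\Gamma(\xi)$. Item (v) is structurally parallel: a direct sectorwise calculation establishes $[\mathrm d\Gamma(D), a(f)^*] = a(Df)^*$, and taking the adjoint on the dense domain gives $[\mathrm d\Gamma(D), a(f)] = -a(Df)$; combining these with $\Phi(f) = a(f) + a(f)^*$ yields $[\Phi(f), i\mathrm d\Gamma(D)] = \Phi(Df)$ as a form on $\mathcal H_0$. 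Item (ii) together with Proposition \ref{PTOPNDJO} then upgrade this to an $H$-bounded operator identity on $\mathcal D(H_0)$.

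The only mildly delicate point is the uniqueness of the $H$-bounded extension in (iv) and (v); this is where the core property of $\mathcal H_0$ from Remark \ref{rem:densegraph} is essential, since without it one could not identify the candidate extension with any preassigned operator. The $H$-bounds assembled in (ii), (iii), and Proposition \ref{PTOPNDJO} are what ensure that the candidate extensions are genuinely $H$-bounded in the first place. The algebraic identities themselves are entirely routine once one works on $\mathcal H_0$, where every operator involved is given by manifestly convergent sectorwise formulas.
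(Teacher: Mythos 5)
Your proposal is correct and follows essentially the same route as the paper's own proof: items (i)--(iii) by direct inspection, and (iv)--(v) by establishing the commutator identities as quadratic forms on $\mathcal F_{\text{fin}}[\mathfrak h_0]$, identifying the one-particle commutators $[\omega,iD]=\xi$ and $[\mathrm d\Gamma(D),a(f)^\sharp]=\pm a(Df)^\sharp$, and then invoking (iii), (ii), and the core property of $\mathcal F_{\text{fin}}[\mathfrak h_0]$ from Remark~\ref{rem:densegraph} to obtain the unique $H$-bounded extension. You spell out a few more computational details than the paper (which compresses (iv) to $[\mathrm d\Gamma(\omega),i\mathrm d\Gamma(D)]=\mathrm d\Gamma([\omega,iD])$ and dispatches (v) with ``similarly as (iv)''), but the mathematical content and strategy coincide.
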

\begin{proof}
\begin{enumerate}
\item[(i)] Clearly, this holds by Definition \ref{def:secquant}.
\item[(ii)]  A direct calculation shows that  $Df \in \mathfrak{h}$.  We conclude the claim  by Proposition \ref{PTOPNDJO}.    
\item[(iii)] Note that, for all $k\in\R^3$,    $\xi(k)=\frac{k^2}{\omega(k)}=\omega(k) \frac{k^2}{k^2+m^2}\leq \omega(k)$. This directly implies the desired result.
\item[(iv)] Clearly,  $[H_f,  i  \mathrm{d}\Gamma(D)]$ can be defined as a quadratic form on  $\mathcal D(H_f)\cap \mathcal D(\mathrm{d}\Gamma(D))$, and hence, it follows from (i) that, for $\psi\in \mathcal F_{\text{fin}}[\mathfrak{h}_0]$, we have
\begin{align}
\left\langle\psi,   [H_f,  i \mathrm{d}\Gamma(D)] \psi \right\rangle=\left\langle\psi,  [\mathrm{d}\Gamma(\omega),  i \mathrm{d}\Gamma(D)]\psi \right\rangle=\left\langle\psi,   \mathrm{d}\Gamma([\omega,  i D])\psi \right\rangle  .
\end{align} 
Moreover, it follows from a direct calculation  that 
\begin{align}
[\omega,  i  D]= \xi  ,
\end{align}
 on $\mathfrak{h}_0$, and hence, 
\begin{align}
\label{eq:densecomm}
\left\langle\psi,   [H_f,   i \mathrm{d}\Gamma(D)] \psi \right\rangle=\left\langle\psi,  \mathrm{d}\Gamma(\xi)\psi \right\rangle  \qquad \forall \psi\in \mathcal F_{\text{fin}}[\mathfrak{h}_0].  
\end{align}
Note that $ \mathcal F_{\text{fin}}[\mathfrak{h}_0]$ is a core of $H_f$.  This together with \eqref{eq:densecomm} and (iii) implies that $[H_f,   i  \mathrm{d}\Gamma(D)] $ uniquely extends to an $H_0$-bounded (and $H$-bounded) operator on $\mathcal D(H)=\mathcal D(H_0)$ denoted by $[H_f,  i \mathrm{d}\Gamma(D)]^0$. 
\item[(v)] This statement follows similarly as (iv) while using (ii) instead of (iii) in the last step.
\end{enumerate}
\end{proof}
For the proof of our main result  it  suffices to control the time evolution only on a spectral subset close to the excited state. In the following we define a
 cut-off  function   with its support localized  in such a subset.  Recall that $\delta > 0$ has been defined in Assumption \ref{as}. 
\begin{definition}
\label{def.chi}
We    fix  $\chi \in C^\infty_c(\R,[0,1])$ such that $\text{\rm{supp} } \chi \subset (e_1-3\delta/4, e_1+3\delta/4)$ and $\chi \big|_{[e_1-\delta/2, e_1+\delta/2]}=1$.
Moreover, for  $0<\kappa < 2$  and $g^2\leq s\leq   g^{\kappa}  $, we define  $\chi_s  $   by $\chi_s(r) : =  \chi( e_1+(r-e_1)/s )$ for all $r\in\R$.
\end{definition}
   Next lemma is proven in Appendix   \ref{app:specproj}.    
\begin{lemma}
\label{lemma:specproj}
For every  $ \upsilon  \in C_c^\infty(\R,[0,1])$, there is a constant $C_\upsilon >0$ such that
\begin{align}
\label{eq:specproj}
\norm{\upsilon(H)-\upsilon(H_0) } \leq gC_{\upsilon} .
\end{align} 
For every $ s > 0 $ in a compact set there is a constant $C$ that depends on this set such that 
\begin{align}
\label{eq:specproj'}
\norm {\chi_s(H)-\chi_s(H_0) } \leq C s^{-1} g  .
\end{align}
 \end{lemma}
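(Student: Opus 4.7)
The natural tool is the Helffer--Sj\"ostrand functional calculus. Pick an almost analytic extension $\tilde\upsilon\in C^\infty_c(\C)$ of $\upsilon$, i.e.\ $\tilde\upsilon|_\R=\upsilon$ and, for every $N\in\N$, $|\partial_{\bar z}\tilde\upsilon(z)|\le C_N|\Im z|^N$. Since $H$ and $H_0$ are self-adjoint, the Helffer--Sj\"ostrand formula reads
\begin{align*}
\upsilon(H)-\upsilon(H_0)=-\frac{1}{\pi}\int_\C \partial_{\bar z}\tilde\upsilon(z)\,\bigl[(H-z)^{-1}-(H_0-z)^{-1}\bigr]\,dL(z),
\end{align*}
with $dL$ two-dimensional Lebesgue measure. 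Inserting the second resolvent identity $(H-z)^{-1}-(H_0-z)^{-1}=-g(H-z)^{-1}V(H_0-z)^{-1}$ reduces the whole argument to an operator-norm bound for the resolvent difference together with an integrable upper bound for $\partial_{\bar z}\tilde\upsilon$.

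By Proposition \ref{PTOPNDJO}, $V$ is infinitesimally $H_0$-bounded, which, together with the identity $H_0(H_0-z)^{-1}=1+z(H_0-z)^{-1}$, yields for $z$ in any bounded set the estimate
\begin{align*}
\|V(H_0-z)^{-1}\|\le C\bigl(1+|z|/|\Im z|\bigr)\le C/|\Im z|.
\end{align*}
Combined with $\|(H-z)^{-1}\|\le 1/|\Im z|$, this gives $\|(H-z)^{-1}-(H_0-z)^{-1}\|\le Cg/|\Im z|^{2}$ uniformly in $z\in\mathrm{supp}(\tilde\upsilon)$. Choosing $N=2$ in the bound on $\partial_{\bar z}\tilde\upsilon$ makes the Helffer--Sj\"ostrand integrand integrable and delivers \eqref{eq:specproj} with a constant $C_\upsilon$ depending only on $\upsilon$ (through the chosen $\tilde\upsilon$).

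The proof of \eqref{eq:specproj'} follows the same scheme, the only delicate point being the tracking of the scaling in $s$. Setting $\tilde\chi_s(z):=\tilde\chi(e_1+(z-e_1)/s)$, where $\tilde\chi$ is an almost analytic extension of the fixed profile $\chi$, produces an almost analytic extension of $\chi_s$ satisfying
\begin{align*}
|\partial_{\bar z}\tilde\chi_s(z)|\le C_N\, s^{-1-N}|\Im z|^N,\qquad \mathrm{supp}(\tilde\chi_s)\subset\{z\in\C:|z-e_1|\le Cs\}.
\end{align*}
Choosing again $N=2$, the resolvent-difference bound from the previous paragraph applies verbatim (the support of $\tilde\chi_s$ stays inside a fixed bounded set, uniformly in $s$ ranging over the prescribed compact set), and the Helffer--Sj\"ostrand integral is controlled by $Cg\cdot s^{-3}\cdot|\mathrm{supp}(\tilde\chi_s)|\le Cg\cdot s^{-3}\cdot s^2=Cg/s$, which is exactly \eqref{eq:specproj'}. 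The main obstacle is thus the elementary but careful scaling analysis of $\tilde\chi_s$: one has to verify that the factors $s^{-1-N}$ from the derivative and $s^2$ from the shrinking of the support combine with the $|\Im z|^{N-2}$ from the integrand to yield precisely the advertised $s^{-1}$ on the right-hand side.
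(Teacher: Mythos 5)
Your proof is correct and follows essentially the same strategy as the paper's: the Helffer--Sj\"ostrand formula combined with the second resolvent identity and an $s$-adapted almost analytic extension whose $\partial_{\bar z}$ scales like $s^{-3}|\Im z|^2$ over a support of 2D-measure $\sim s^2$, yielding $Cg/s$. The only difference is cosmetic: the paper builds its extension by rescaling the cutoff $\sigma$ in Definition \ref{def:almostana} to $\sigma_s$ while keeping the Taylor expansion keyed to $\chi_s$, whereas you directly dilate a fixed extension $\tilde\chi$ of the profile $\chi$; both lead to the same estimate and to the same bookkeeping conclusion.
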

In the following  we derive a positive commutator estimate close to the unperturbed eigenvalue  $e_1$.  For this purpose we set (see  \eqref{varphi} and \eqref{eq:H}) 
\begin{align}\label{pl1}
      \hspace{.3cm}  H_{\overline{P}}(\omega, f) & \equiv   H_{\overline{P}}    :=\overline P H (\omega, f) \overline P, 
\hspace{.3cm}  H_{0, \overline{P}}(\omega) \equiv     H_{0, \overline{P}}    :=\overline P H_0 (\omega)  \overline P  \\   \notag \\ \notag 
  H_{f,\overline{P}}(\omega) & \equiv    H_{f,\overline{P}}:=\overline P H_f(\omega) \overline P ,  \hspace{.3cm}   V_{\overline P}(f) \equiv 
 V_{\overline P} := \overline P V(f) \overline P,  \hspace{.3cm} \Phi_{\overline P}(f) = \overline P  \Phi(f) \overline P,    
\end{align}
where,  taking   $  P_{\varphi_1}    $ and $ P_{\Omega}  $  the orthogonal projections on the spans of 
$ \varphi_1 $ and $\Omega$, respectively, we define  
\begin{align}\label{pl2}
P :  = P_{\varphi_1} \otimes P_{\Omega}, \qquad  \overline P=\mathbbm 1_{\mathcal H}-P. 
\end{align}

\begin{remark}\label{DEfiCommut}
It follows from Lemma \ref{lemma:funcprop} that operator $[H_{\overline{P}},  i   \mathrm{d}\Gamma(D)]$, defined as a quadratic form on $\mathcal D(H_f)\cap \mathcal D(\mathrm{d}\Gamma(D))$, can be uniquely extended to a $ H_{\overline{P}} $-bounded operator on $\mathcal D( H_{\overline{P}} )$. We denote this extension by 
\begin{align}\label{Commut}
[ H_{\overline{P}} ,  i   \mathrm{d}\Gamma(D)]^0 =  H_{\overline P}(\xi, Df).
\end{align}  
\end{remark}

\begin{lemma}[Mourre estimate]
\label{lemma:comestclose} 
There  is a  constant $ \alpha > 0$  such that, for sufficiently small  $g>0$, 
\begin{align}
\chi(H_{\overline P}) [H_{\overline P},  i \mathrm{d}\Gamma(D)]^0     \chi(H_{\overline P})\geq \alpha \chi(H_{\overline P})^2  ,
\end{align}
 where we recall Definition \ref{def.chi}. 
\end{lemma}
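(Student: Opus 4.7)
The plan is to prove the Mourre estimate first for the unperturbed operator $H_{0,\overline P}$ and then transfer it to $H_{\overline P}$ by perturbation, using as the key technical device an auxiliary cutoff $\tilde\chi \in C_c^\infty(\R,[0,1])$ slightly larger than $\chi$, with $\tilde\chi\chi = \chi$ and $\mathrm{supp}\,\tilde\chi$ still at a positive distance from $m\N$. The sandwich identity $\chi(H_{\overline P}) A \chi(H_{\overline P}) = \chi(H_{\overline P})\tilde\chi(H_{\overline P}) A \tilde\chi(H_{\overline P})\chi(H_{\overline P})$, applied to any bounded self-adjoint operator $A$ with $\tilde\chi(H_{\overline P}) A \tilde\chi(H_{\overline P}) \geq -Cg\,\id$, then produces $\chi(H_{\overline P}) A \chi(H_{\overline P}) \geq -Cg\,\chi(H_{\overline P})^2$. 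This is what allows the strict Mourre estimate to survive the $O(g)$ perturbative errors without a compact remainder.

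First, Lemma~\ref{lemma:funcprop}(iv)-(v), together with $[K,\mathrm{d}\Gamma(D)]=0$ (the two act on different tensor factors) and $\mathrm{d}\Gamma(D)\Omega=0$ (which yields $[P,\mathrm{d}\Gamma(D)]=0$), yields
\begin{equation*}
[H_{\overline P},\,i\mathrm{d}\Gamma(D)]^0 \;=\; \overline P\,\mathrm{d}\Gamma(\xi)\,\overline P \;+\; g\,\overline P\,(\sigma_1\otimes\Phi(Df))\,\overline P.
\end{equation*}
Under Assumption~\ref{as} one has $\delta \leq m/2$ and $e_1 > m > 3\delta/4$, so $0 \notin \mathrm{supp}\,\chi$ and $\chi(H_{\overline P}) = \chi(H_{\overline P})\overline P$; moreover on the $\varphi_1$-atomic component of $\mathrm{Ran}(\overline P)$ one has $H_0 \geq e_1+m > e_1+3\delta/4$, so $\tilde\chi(H_{0,\overline P})$ vanishes there once $\mathrm{supp}\,\tilde\chi$ is chosen only slightly larger than $\mathrm{supp}\,\chi$. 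On the $\varphi_0$-atomic component I decompose by photon number: an $n$-photon state with $\sum_i \omega(k_i)\in\mathrm{supp}\,\tilde\chi$ must satisfy $nm \leq e_1 - \delta$ (Assumption~\ref{as} rules out the alternative $nm \geq e_1 + \delta$), and the pointwise bound $\xi(k) = \omega(k) - m^2/\omega(k) \geq \omega(k) - m$ then gives $\sum_i \xi(k_i) \geq \delta/8$. Summing over sectors yields the strict free Mourre estimate $\tilde\chi(H_{0,\overline P})\,\mathrm{d}\Gamma(\xi)\,\tilde\chi(H_{0,\overline P}) \geq (\delta/8)\,\tilde\chi(H_{0,\overline P})^2$.

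To transfer this to $H_{\overline P}$, I invoke the analogue of Lemma~\ref{lemma:specproj} for $(H_{\overline P}, H_{0,\overline P})$ (the same Helffer--Sj{\"o}strand argument applies) together with a uniform bound $\|(H_f+1)\,\tilde\chi(H_{\overline P})\| \leq C$ independent of small $g$, which follows from the compact spectral support of $\tilde\chi$ and the $H_f$-boundedness of $V$ of Proposition~\ref{PTOPNDJO}. Each substitution $\tilde\chi(H_{0,\overline P}) \mapsto \tilde\chi(H_{\overline P})$ then costs only an $O(g)$ error in operator norm, producing $\tilde\chi(H_{\overline P})\,\mathrm{d}\Gamma(\xi)\,\tilde\chi(H_{\overline P}) \geq (\delta/8)\,\tilde\chi(H_{\overline P})^2 - Cg\,\id$. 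Sandwiching by $\chi(H_{\overline P})$ and using $\chi\tilde\chi = \chi$ gives $\chi(H_{\overline P})\,\mathrm{d}\Gamma(\xi)\,\chi(H_{\overline P}) \geq (\delta/8 - Cg)\,\chi(H_{\overline P})^2$. Applying the same sandwich argument to the bounded self-adjoint operator $g(\sigma_1\otimes\Phi(Df))$ yields $g\,\chi(H_{\overline P})(\sigma_1\otimes\Phi(Df))\,\chi(H_{\overline P}) \geq -Cg\,\chi(H_{\overline P})^2$, and summing the two contributions and taking $g$ sufficiently small delivers the claim with, say, $\alpha = \delta/16$. The main obstacle is that $\mathrm{d}\Gamma(\xi)$ is unbounded, so the perturbation step only yields meaningful $O(g)$ errors once the uniform $H_f$-bound on $\mathrm{Ran}(\tilde\chi(H_{\overline P}))$ is secured; once this is in hand, the auxiliary cutoff plus sandwich is exactly the device that upgrades bounded-norm errors into the sandwiched form required for a Mourre estimate with no compact remainder.
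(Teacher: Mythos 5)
Your proof is correct and follows essentially the same route as the paper: you use the commutator identity from Lemma~\ref{lemma:funcprop} (iv)--(v), an auxiliary cutoff adjacent to $\chi$ (the paper's $\upsilon$, your $\tilde\chi$), the pointwise bound $\xi(k) \geq \omega(k) - m$ combined with Assumption~\ref{as} to obtain a strict lower bound $\gtrsim\delta$ on the relevant spectral window, and the Helffer--Sj\"ostrand estimate of Lemma~\ref{lemma:specproj} to transfer from $H_{0,\overline P}$ to $H_{\overline P}$ at an $O(g)$ cost. The only cosmetic difference is that you decompose by photon number sector while the paper writes the equivalent operator inequality $\mathrm{d}\Gamma(\xi) \geq H_{0,\overline P} - m\mathcal{N}$ and bounds $m\mathcal{N}$ by $m\ell$ on the range of the cutoff.
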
 
\begin{proof}
We take a fixed function 
$ \upsilon  \in C_c^\infty\Big ( (e_1  - \frac{9}{10}\delta, e_1  + \frac{9}{10}\delta  ),[0,1]\Big )$ with $\chi \upsilon = \chi $ (since this is fixed, we identify 
$ C \equiv C_{\upsilon}$ in the constants below).

Note that $\mathrm{d}\Gamma(D)$ commutes  with $\overline P=\mathbbm 1_{\mathcal H}-P$. Then, Lemma \ref{lemma:funcprop} (iv) and (v) yields
\begin{align}
\label{eq:proof_commest}
&\upsilon(H_{\overline P}) [H_{\overline{P}},  i \mathrm{d}\Gamma(D)]^0    \upsilon(H_{\overline P}) = \upsilon (H_{\overline P}) \overline{P} \mathrm{d}\Gamma(\xi)  \overline{P} \upsilon(H_{\overline P}) +g\upsilon (H_{\overline P})  \overline{P} \sigma_1\otimes \Phi(Df)  \overline{P}  \upsilon(H_{\overline P}) .
\end{align}
It follows from Lemma \ref{lemma:funcprop} (ii) that 
 $ \upsilon (H_{\overline P})  \overline{P} \sigma_1\otimes \Phi(Df)   (H_{0, \overline P} + i)^{-1}  \overline{P} (H_{0, \overline P } + i) 
 \upsilon (H_{\overline P})
  $ 
 is bounded (notice that $ (H_{0, \overline P } + i) \upsilon (H_{\overline P}) =  (H_{0, \overline{P}} + i) (H_{\overline P} + i)^{-1}    (H_{\overline P} + i)  \upsilon (H_{\overline P})  $ is bounded,  which follows from  Proposition \ref{PTOPNDJO}). Then, we obtain 
\begin{align}
\label{eq:proof_commest2}
\Big \| g\upsilon(H_{\overline P}) \overline{P} \sigma_1\otimes \Phi(Df)  \overline{P}  \upsilon(H_{\overline P}) \Big  \| 
  \leq  C g .
\end{align}
Similarly as above, we argue that    $\upsilon(H_{\overline P} )    \overline{P} \mathrm{d}\Gamma(\xi) $ and 
 $   \mathrm{d}\Gamma(\xi)  \overline{P} \upsilon (H _{\overline P} ) $  are bounded, using Lemma \ref{lemma:funcprop} (iii). Then,   Lemma  \ref{lemma:specproj} implies that
\begin{align}
\label{eq:proof_commest1}
\upsilon(H_{\overline P}  )    \overline{P} \mathrm{d}\Gamma(\xi)  \overline{P} \upsilon(H_{\overline P}  )  \geq\upsilon(H_{0, \overline{P}}) \overline{P} \mathrm{d}\Gamma(\xi)  \overline{P} \upsilon (H_{0, \overline{P}})  -gC .
\end{align}
Plugging  \eqref{eq:proof_commest1} and  \eqref{eq:proof_commest2} into  \eqref{eq:proof_commest} yields that 
\begin{align}
\label{eq:proof_commest-fin}
\upsilon(H _{\overline P} ) [H_{\overline{P}},  i \mathrm{d}\Gamma(D)]^0    \upsilon ( H_{\overline P}  )
&\geq  \upsilon (H_{0, \overline P}) \overline{P}  \mathrm{d}\Gamma(\xi)  \overline{P} \upsilon(H_{0, \overline P})-g  C  .
\end{align}
Set $\ell \in \mathbb{N}\cup \{ 0 \}$ be such that    
\begin{align}
  e_1  >  \ell m  \hspace{2cm}   e_1  <  (\ell + 1 ) m. 
\end{align}
Notice that  Assumption \ref{as} implies that 
\begin{align}\label{M1}
| e_1 - \ell m |\geq \delta ,
\end{align} 
and  since  
$ \upsilon  \in C_c^\infty\Big ( (e_1  - \frac{9}{10}\delta, e_1  + \frac{9}{10}\delta  ),[0,1]\Big )$, 
\begin{align}\label{M2}
\upsilon (H_{0, \overline{P}})   H_{0, \overline{P}} \upsilon (H_{0, \overline{P}})  \geq \Big ( \ell m  +  \frac{1}{10} \delta \Big ) \upsilon (H_{0, \overline{P}})^2 . 
\end{align}
For any self-adjoint operator $O$, we denote by $E_{O}$  its  resolution of the identity. It follows that
 \begin{align}\label{M2.1}
E_{H_{0, \overline P}} ( U ) =  \begin{cases}   \overline P E_{H_0}(U), & \text{if }  0 \not \in U, \\
 P +     \overline P E_{H_0}(U),    & \text{if }  0 \in   U . \end{cases}
\end{align}
This is a consequence of the fact that the formula in the right  hand side of the  equation above defines a resolution of the identity and the integral of the identity function with respect to it equals $ H_{0, \overline P} $ (notice that $P$ commutes with $E_{H_0}(U)$). Since $0$ does not belong to the support of $ \upsilon $, it follows that 
\begin{align}\label{M2.3}
\upsilon( H_{0, \overline{P}} ) = \upsilon( H_{0 } )\overline{P} = \overline{P} \upsilon( H_{0 } )\overline{P}. 
\end{align}    
Set
$ \mathcal{N} =  \mathrm{d} \Gamma (\mathrm{1})$ the number operator. Since $\omega (k) \geq m$, it follows that 
$  \mathbbm{1}_{ \mathcal{N} >  \ell     } H_{0}   \geq  (\ell + 1) m   $, and therefore (notice that $
\mathcal{N}$ commutes with $ H_{0, \overline{P}} $ and $P$ and recall  \eqref{M2.3}), 
\begin{align}\label{M3}
 m \upsilon (H_{0, \overline{P}})^2  \mathcal{N}= m  \upsilon (H_{0, \overline{P}})^2 \mathbbm{1}_{ \mathcal{N} \leq  \ell     } \mathcal{N} 
 \leq m \ell  \upsilon (H_{0, \overline{P}})^2  .  
\end{align}  
Eqs.\ \eqref{M2} and \eqref{M3} imply that
\begin{align}\label{M4}
\upsilon (H_{0, \overline{P}}) \Big (   H_{0, \overline{P}} - m \mathcal{N} \Big )   \upsilon (H_{0, \overline{P}})   \geq   \frac{1}{10} \delta  \upsilon (H_{0, \overline{P}})^2 . 
\end{align}
Since $\xi(k) = \frac{k^2 + m^2 - m^2}{\omega(k)} = \omega(k) - \frac{m^2}{\omega(k)} \geq \omega(k) - m    $, we get that 
\begin{align}\label{M5}
 \mathrm{d}\Gamma(\xi)  \geq H_{0, \overline{P}} - m \mathcal{N}. 
\end{align}
Eqs.\ \eqref{M4} and \eqref{M5} imply that 
\begin{align}\label{M6}
\upsilon (H_{0, \overline{P}})   \mathrm{d}\Gamma(\xi)    \upsilon (H_{0, \overline{P}})   \geq   \frac{1}{10} \delta  \upsilon (H_{0, \overline{P}})^2.
\end{align}
This together with Lemma \ref{lemma:specproj} 
 and \eqref{eq:proof_commest-fin} lead us to (see also  \eqref{M2.3})
 \begin{align}
\label{eq:proof_commest-finprima}
\upsilon(H _{\overline P} ) [H_{\overline{P}}, i \mathrm{d}\Gamma(D)]^0    \upsilon ( H_{\overline P}  )
&\geq   \frac{1}{10} \delta  \upsilon (H_{ \overline{P}})^2-g  C  .
\end{align}
We multiply by $  \chi(H_{ \overline{P}}) $ from the left and the right and use that $\chi \upsilon = \chi$ to  obtain
 \begin{align}
\label{eq:proof_commest-finprimaprima}
\chi(H _{\overline P} ) [H_{\overline{P}}, i \mathrm{d}\Gamma(D)]^0    \chi ( H_{\overline P}  )
&\geq   \frac{1}{10} \delta  \chi (H_{ \overline{P}})^2-g  C  \chi (H_{ \overline{P}})^2 .
\end{align}
Our desired result follows from  \eqref{eq:proof_commest-finprimaprima}, taking small enough $g$. 
\end{proof}
\begin{proposition}[Limiting absorption principle]
\label{prop:absenceofacspecclose}
We  introduce the notation  
\begin{align}
\label{def:bracket}
\left\langle  \mathrm{d}\Gamma(D)\right\rangle:= \Big ( \big ( \mathrm{d}\Gamma(D) \big )^2 +1 \Big )^{1/2} .
\end{align}  
 For sufficiently small $g>0$,  $\epsilon \in (0,1)$ and $z, z' \in [e_1 - \delta /4, e_1 + \delta /4] $ we have 
\begin{enumerate}
\item[(i)]  $\sigma_\text{pp}(H_{\overline P}) \cap [ e_1 - \delta / 4, e_1 + \delta /4 ] =\emptyset$, where $\sigma_\text{pp}(H_{\overline P})$ denotes  the  pure point spectrum of $H_{\overline P}$.  
\item[(ii)] 
\begin{align}
\label{sup1}
 \norm{\left\langle  \mathrm{d}\Gamma(D)\right\rangle^{-1}   (H_{\overline P}-z\pm i \epsilon)^{-1}    \left\langle  \mathrm{d}\Gamma(D)\right\rangle^{-1}} \leq C , 
\end{align}
and 
\begin{align}
\label{sup2}
\norm{\left\langle   \mathrm{d}\Gamma(D)\right\rangle^{-1}   (H_{0,\overline P}-z\pm i \epsilon)^{-1}    \left\langle \mathrm{d}\Gamma(D)\right\rangle^{-1}} \leq C , 
\end{align}
\end{enumerate}
\item[(iii)] 
\begin{align}
\label{holder}
 \norm{\left\langle  \mathrm{d}\Gamma(D)\right\rangle^{-1} \left(  (H_{\overline P}-z\pm i \epsilon)^{-1}  -(H_{0,\overline P}-z'\pm i \epsilon)^{-1}\right)  \left\langle  \mathrm{d}\Gamma(D)\right\rangle^{-1}} 
\leq C\left( g^{1/2}+|z-z'|^{1/2}   \right) .
\end{align}
 We recall  that the constants above do not depend on $\epsilon ,  z, z' $ and $g$  (c.f.\ Remark \ref{rem:const}). 
\end{proposition}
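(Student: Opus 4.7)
The plan is to apply the abstract Mourre theory (see \cite{bookabg,cycon,amrein}) with conjugate operator $A := \mathrm{d}\Gamma(D)$. The positive commutator estimate is precisely Lemma \ref{lemma:comestclose}, while the required $C^2(A)$ regularity of $H_{\overline P}$ follows from Lemma \ref{lemma:funcprop} together with the hypothesis $D^2 f \in L^2(\mathbb{R}^3)$ imposed in Section \ref{sec:defmodel}: the first commutator equals the $H$-bounded operator of Remark \ref{DEfiCommut}, and an analogous computation shows that the second commutator is also relatively bounded. The virial theorem combined with the standard Mourre argument then yields both statement (i) and the LAP bound \eqref{sup1}. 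The bound \eqref{sup2} for $H_{0,\overline P}$ follows from the same reasoning at $g=0$; in fact the proof of Lemma \ref{lemma:comestclose} simplifies in that case (the estimate \eqref{M6} requires no smallness on $g$ there).

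For the Hölder-type estimate \eqref{holder}, I would decompose
\[
(H_{\overline P} - z \pm i\epsilon)^{-1} - (H_{0,\overline P} - z' \pm i\epsilon)^{-1} = \Delta_z + \Delta_g ,
\]
with $\Delta_z := (H_{0,\overline P} - z \pm i\epsilon)^{-1} - (H_{0,\overline P} - z' \pm i\epsilon)^{-1}$ and $\Delta_g := (H_{\overline P} - z \pm i\epsilon)^{-1} - (H_{0,\overline P} - z \pm i\epsilon)^{-1}$, and estimate each contribution separately after sandwiching with $\langle A\rangle^{-1}$. The $|z-z'|^{1/2}$ bound for $\Delta_z$ is standard Mourre-theoretic Hölder continuity of boundary values: with weight $\langle A\rangle^{-s}$ and $s = 1$, one obtains Hölder regularity of exponent $s - 1/2 = 1/2$ in the spectral parameter, in the spirit of \cite{fms}.

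For $\Delta_g$, I would use the second resolvent identity
\[
\Delta_g = -g\,(H_{\overline P} - z \pm i\epsilon)^{-1}\, V_{\overline P}\,(H_{0,\overline P} - z \pm i\epsilon)^{-1},
\]
and bound the sandwiched expression using Proposition \ref{PTOPNDJO} (which provides the bounded factor $V_{\overline P}(H_f+1)^{-1/2}$) together with the LAP bounds \eqref{sup1} and \eqref{sup2}, and the elementary fact that $(H_f+1)^{1/2}(H_{0,\overline P}+i)^{-1/2}$ is bounded (since $H_f \leq H_0 + \mathrm{const}$). This in principle produces an estimate of order $g$, which dominates $g^{1/2}$ for $g \leq 1$.

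The main obstacle I anticipate lies in this last step: reconciling the conjugate-operator weight $\langle A\rangle^{-1}$, in which the LAP is formulated, with the energy weight $(H_f+1)^{-1/2}$, in which $V_{\overline P}$ is relatively bounded. Pushing these weights across the two resolvents requires either a careful use of duality and interpolation between the two pictures, or direct estimates on mixed commutators involving $[A, V_{\overline P}]$ and $[A, H_f]$ — this is precisely the technical novelty referred to in the introduction as using \emph{in particular arguments of} \cite{fms}, and it is what ultimately forces one to settle for the $g^{1/2}$ exponent in the uniform estimate rather than $g$.
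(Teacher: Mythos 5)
Your treatment of (i) and (ii) by invoking abstract Mourre theory with conjugate operator $A=\mathrm{d}\Gamma(D)$ is sound: the Mourre estimate of Lemma~\ref{lemma:comestclose} and the $C^2(A)$ regularity secured by $f,Df,D^2f\in L^2$ are exactly what the abstract LAP needs, and the remark that the $g=0$ case covers \eqref{sup2} is correct. The paper instead gives a self-contained dissipative proof in Section~\ref{app:limab}, but your route is valid. The $\Delta_z$ part of (iii) is likewise fine: H{\"o}lder continuity of exponent $1/2$ in the spectral parameter with weight $\langle A\rangle^{-1}$ (i.e.\ $s=1$) is a classical Mourre-theoretic result.

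The $\Delta_g$ step, however, has a genuine gap, which you in fact partly anticipate but do not close. The quantity $\langle A\rangle^{-1}R_g\,V_{\overline P}\,R_0\,\langle A\rangle^{-1}$ cannot be controlled by inserting $\langle A\rangle^{-1}\langle A\rangle$ around the resolvents: this would require $\langle A\rangle V_{\overline P}\langle A\rangle$ (or at least $\langle A\rangle V_{\overline P}(H_f+1)^{-1/2}$) to be bounded, and it is not -- the hypothesis $D^2f\in L^2$ controls the \emph{double commutator} $[\mathrm{d}\Gamma(D),[\mathrm{d}\Gamma(D),V]]$, not two-sided multiplication by the unbounded $\langle A\rangle$, and indeed $\mathrm{d}\Gamma(D)\,a(f)^{*}\,\mathrm{d}\Gamma(D)=a(f)^{*}\mathrm{d}\Gamma(D)^2+a(Df)^{*}\mathrm{d}\Gamma(D)$ is unbounded. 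There is no abstract Mourre theorem delivering a $g$-uniform perturbative bound along these lines, and the interpolation/duality you allude to would require precisely the estimate you are trying to prove. The paper closes this gap with a structurally different mechanism: it works at $\eta>0$ with the dissipated resolvent $R^{\pm\eta}$ and establishes \emph{two competing bounds} on $\mathrm{d}/\mathrm{d}\eta\,(F^{\pm\eta}-F_0^{\pm\eta})$ -- one of order $\eta^{-1/2}$ uniform in $g$ (Lemma~\ref{keyest1}~(i)), and one of order $g\,\eta^{-3/2}$ (Lemma~\ref{keyest1}~(iii)). Crucially, in the latter the regularization bound $\|(H_{\overline P}+i)R^{\pm\eta}\|\leq C/\eta$ of \eqref{TEo5iV} is the crutch that allows the $(H_f+1)^{1/2}$-weighted bound on $V$ to be used without ever having to reconcile the weights at $\eta=0$; the commutators with $\mathrm{d}\Gamma(D)$ are only shuffled \emph{across} one factor of the resolvent, never through $V$ on both sides. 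Integrating the differential inequality and optimizing the crossover scale $\tilde\eta$ then yields the exponent $g^{1/2}$. This optimization over $\tilde\eta$ -- not a weight-interpolation obstruction -- is the source of the $1/2$, and it is the part of the argument your sketch does not reproduce.
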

For the convenience of the reader, we provide a proof  of statements (ii) and (iii)  in Section  \ref{app:limab} - following \cite{cycon}. Notice that statement (iii) is not standard, similar results  are addressed in \cite{fms}.  Their work also draws from \cite{ahs}.  
However, we present no proof for statement (i) since this is not used in the remainder of this work and it is a standard result.

\subsection{Resonance and time evolution}
\label{sec:timeevo}

In this section we introduce a Feshbach-Schur map, c.f.\ \cite{bfs3}, in order to derive a formula for the resolvent restricted to  a spectral subset. This together  with the   limiting absorption principle obtained in Proposition \ref{prop:absenceofacspecclose}  allows then for controlling the leading order term of certain matrix elements of the time evolution (with respect to the coupling constant) and estimate the error term in Lemma \ref{lemma:matrixelement} below.
\begin{definition}
\label{def:feshbach}
 We recall Eqs.\ \eqref{pl1}--\eqref{pl2}.     For all $z\in \C\setminus \sigma(H)$,  we define 
\begin{align}
F_P(z)\equiv F_P(H-z):=P(H-z)P -g^2 PV\overline P (H_{\overline{P}}-z)^{-1}\overline P VP ,
\end{align}
 as an operator on the range of $P$. 
\end{definition} 
The following lemma is an  application of the limiting absorption principle derived in Proposition \ref{prop:absenceofacspecclose} and allows for the control of certain  term of the Feshbach-Schur map introduced in Definition \ref{def:feshbach}.
\begin{lemma}
\label{lemma:mourrefeshbach}  
 For sufficiently small $g$ and every 
$ z \in [ e_1 - \delta/4, e_1 + \delta/4 ] $ and $\epsilon \in (0,1)$, the following estimates hold true:   
\begin{enumerate}
\item[(i)]\begin{align}
\label{eq:mourrefeshbach}
\norm{ PV\overline P (H_{\overline{P}}-z\pm i \epsilon)^{-1}\overline P VP } \leq C .
\end{align}
\item[(ii)] \begin{align}
\label{eq:mourrefeshbach0}
\norm{  PV\overline P (H_{0,\overline{P}}-z\pm i \epsilon)^{-1}\overline P VP } \leq C .
\end{align}
\item[(iii)]  if    $ |z - e_1| \leq r$,  
\begin{align} 
\norm{PV\overline P \left((H_{0,\overline{P}}-e_1\pm i \epsilon)^{-1}  - (H_{\overline{P}}-z\pm i \epsilon)^{-1}\right)\overline P VP}\leq C(g^{1/2} + r^{1/2}) .
\end{align} 
\end{enumerate}
  We recall that the constants $ C $ do not depend on $\epsilon, $ $z$ and $g$  (c.f.\ Remark \ref{rem:const}).  
\end{lemma}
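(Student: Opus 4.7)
The plan is to reduce all three estimates to Proposition~\ref{prop:absenceofacspecclose} by sandwiching the resolvents between the weight operators $\langle \mathrm{d}\Gamma(D)\rangle$. The only substantive point is to check that the ``endpoint'' operators $\langle \mathrm{d}\Gamma(D)\rangle\overline P V P$ and $PV\overline P\langle \mathrm{d}\Gamma(D)\rangle$ extend to bounded operators on $\mathcal H$.

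First I would compute $\overline P V P$ explicitly. Since $P=P_{\varphi_1}\otimes P_\Omega$ has one-dimensional range spanned by $\Phi_1:=\varphi_1\otimes\Omega$, and since $\sigma_1\varphi_1=\varphi_0$, while $\Phi(f)\Omega=a^*(f)\Omega$ lies in the one-particle sector and is automatically orthogonal to $\Phi_1$, one gets
\begin{align*}
\overline P V P\,\Psi \;=\; \langle \Phi_1,\Psi\rangle\;\varphi_0\otimes a^*(f)\Omega .
\end{align*}
On the one-particle sector $\mathrm{d}\Gamma(D)$ acts as $D$ on the form factor, so $\langle \mathrm{d}\Gamma(D)\rangle\,\varphi_0\otimes a^*(f)\Omega=\varphi_0\otimes a^*(\langle D\rangle f)\Omega$. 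Since the standing regularity assumption on the form factor requires $f,Df\in L^2(\mathbb R^3)$, we have $f\in \mathrm{Dom}(\langle D\rangle)$, whence $\langle \mathrm{d}\Gamma(D)\rangle\,\overline P V P$ is bounded with norm $\leq C\|\langle D\rangle f\|_2$. Its adjoint $PV\overline P\langle \mathrm{d}\Gamma(D)\rangle$ is then also bounded with the same norm.

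For (i) and (ii) I would then write
\begin{align*}
PV\overline P(H_{\overline P}-z\pm i\epsilon)^{-1}\overline P V P \;=\; \bigl(PV\overline P\langle \mathrm{d}\Gamma(D)\rangle\bigr)\,\bigl(\langle \mathrm{d}\Gamma(D)\rangle^{-1}(H_{\overline P}-z\pm i\epsilon)^{-1}\langle \mathrm{d}\Gamma(D)\rangle^{-1}\bigr)\,\bigl(\langle \mathrm{d}\Gamma(D)\rangle\overline P V P\bigr),
\end{align*}
and invoke the uniform bound \eqref{sup1} on the middle factor. Statement (ii) is identical, invoking \eqref{sup2} instead. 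Statement (iii) uses the same three-factor decomposition applied to the difference of resolvents: the middle factor becomes
\begin{align*}
\langle \mathrm{d}\Gamma(D)\rangle^{-1}\bigl[(H_{0,\overline P}-e_1\pm i\epsilon)^{-1}-(H_{\overline P}-z\pm i\epsilon)^{-1}\bigr]\langle \mathrm{d}\Gamma(D)\rangle^{-1},
\end{align*}
which by the Hölder-type estimate \eqref{holder} of Proposition~\ref{prop:absenceofacspecclose} (applied with $z'=e_1$) is bounded by $C(g^{1/2}+|z-e_1|^{1/2})\leq C(g^{1/2}+r^{1/2})$. Multiplying by the uniformly bounded endpoint factors from Step~1 yields the claim.

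The main (and essentially only non-routine) obstacle is the bound on $\langle \mathrm{d}\Gamma(D)\rangle\overline P V P$: everything downstream is a clean algebraic manipulation once one knows the range of $\overline P V P$ consists of a single one-particle state with wavefunction $f$, so that the weight acts as $\langle D\rangle$ on $f$ and the hypothesis $Df\in L^2$ (guaranteed by the explicit choice \eqref{eq:f}) closes the argument. No fine cancellation or additional spectral input beyond Proposition~\ref{prop:absenceofacspecclose} is needed.
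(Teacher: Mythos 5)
Your proof is correct and follows essentially the same strategy as the paper: establish boundedness of $\langle \mathrm{d}\Gamma(D)\rangle\,\overline P V P$ (the paper does this via the commutator identity $\mathrm{d}\Gamma(D)\overline P VP = i\,\overline P\,\sigma_1\otimes a(Df)^*P$ from Lemma~\ref{lemma:funcprop}(v), you do it by directly computing $\overline P VP\Psi=\langle\Phi_1,\Psi\rangle\,\varphi_0\otimes a^*(f)\Omega$ and acting with $\langle D\rangle$ on the one-particle sector — the same observation, phrased slightly differently), and then sandwich the resolvents between $\langle\mathrm{d}\Gamma(D)\rangle^{\pm1}$ and invoke Proposition~\ref{prop:absenceofacspecclose}.
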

\begin{proof} 
We take $ z \in [ e_1 - \delta/4, e_1 + \delta/4 ] $ and $\epsilon \in (0,1)$.  
Note that $\mathrm{d}\Gamma(D)$ commutes with $P$. Then, it follows from Lemma \ref{lemma:funcprop} (v) together with $\mathrm{d}\Gamma(D)P=0$ that $\mathrm{d}\Gamma(D)\overline P VP=  i  \overline P \sigma_1\otimes a(Df)^*P $, and consequently, 
\begin{align}
\label{boundedop1}
\norm{\mathrm{d}\Gamma(D)\overline P VP}\leq  \norm{a(Df)^*\Omega}
 =  \norm{  Df  } .
\end{align}
 Moreover, we similarly obtain  
\begin{align}
\label{boundedop2}
\norm{ \overline P VP}\leq  C.
\end{align}
We recall the definition of $\left\langle \mathrm{d}\Gamma(D)\right\rangle$ in \eqref{def:bracket} and observe
\begin{align}
\label{boundedop3}
&\norm{\left\langle \mathrm{d}\Gamma(D)\right\rangle \overline P VP}^2= \sup_{\Psi \in \mathcal H , \norm{\Psi}=1} \left\langle \overline P VP\Psi,  \left\langle \mathrm{d}\Gamma(D)\right\rangle^2 \overline P VP\Psi\right\rangle
 \\\notag
&= \sup_{\Psi \in \mathcal H , \norm{\Psi}=1} \left\langle \overline P VP\Psi,  \left(   \mathrm{d}\Gamma(D)^2+1 \right) \overline P VP\Psi\right\rangle
 \leq  \norm{  \mathrm{d}\Gamma(D) \overline P VP}^2 +\norm{ \overline P VP}^2 .
\end{align}
This together with \eqref{boundedop1} and \eqref{boundedop2} implies that  
\begin{align}
\label{proof:dgamma}
\norm{\left\langle \mathrm{d}\Gamma(D)\right\rangle \overline P VP}\leq  C ,
\end{align}
and hence, $\left\langle \mathrm{d}\Gamma(D)\right\rangle \overline P VP$ is a bounded operator on $\mathcal H$. Then, it follows that also its adjoint is a bounded operator.
We obtain that
\begin{align} 
&\norm{PV\overline P (H_{\overline{P}}-z\pm i \epsilon)^{-1}\overline P VP }
\leq  C \norm{\left\langle \mathrm{d}\Gamma(D)\right\rangle^{-1} (H_{\overline{P}}-z \pm i \epsilon)^{-1}\left\langle \mathrm{d}\Gamma(D)\right\rangle^{-1}}  .
\end{align} 
We conclude statement  (i)  by Proposition \ref{prop:absenceofacspecclose} (ii). Statements (ii) and (iii) follow similarly  from Proposition \ref{prop:absenceofacspecclose} (ii) and (iii).
\end{proof} 
Next, we derive an explicit formula for the leading order of the Feshbach-Schur map with respect to the coupling constant. This allows then for an easy approximation of the resolvent restricted on a certain subset in Corollary \ref{Cor} below.
\begin{lemma}
\label{lemma:feshbachest}
For    sufficiently small $r, g>0$, $\epsilon\in (0,1)$ and $z \in \mathbb{R}$ with $ | z - e_1 | \leq r$, we have 
\begin{align}
F_P(H-z\pm i \epsilon)=(e_1-z-g^2 \Gamma_{\pm\epsilon} \pm i \epsilon)P+R_\epsilon(g, r) ,
\end{align}
where   $\norm{R_\epsilon(g, r)}\leq Cg^{2}(g^{1/2} + r^{1/2} ) $   and 
\begin{align}
\label{eq:Gamma}
\Gamma_{\pm\epsilon}: =\int \mathrm{d}^3k\,  \frac{f(k)^2}{\omega(k)-e_1\pm i \epsilon}.
\end{align}
Moreover,   recalling  $m-e_1<0$, we observe that  the limits 
\begin{align}\label{LL1}
\lim_{\epsilon \to 0}  \Gamma_{\pm\epsilon} : = \Gamma_{\pm 0} 
\end{align}
exist  (note that $\Gamma_{\pm\epsilon} $ does not depend on  $g, r$ and $z$)  and they are given by
\begin{align}\label{LL2}
 \Gamma_{\pm 0}  =  \mp \pi i \theta(0) + \mathcal{P}  \int_{m- e_1}^\infty \theta(x)/x dx,
\end{align}
where, for $\tau >m-e_1$, we define  
\begin{align}\label{LL3}
\theta(\tau) : = 4 \pi    (e_1 +  \tau) 
 (    (e_1 +  \tau)^2 - m^2)^{1/2} f( (   (e_1 +  \tau)^2 - m^2)^{1/2}  )^2  .
\end{align}

   Note that $\theta(0) > 0$,  and hence, (see  \eqref{fmaszero}) 
\begin{align}
 \Im \Gamma_{\pm 0} = \mp  \pi \theta(0) \ne 0. 
\end{align} 
\end{lemma}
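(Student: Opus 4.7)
The plan is to unfold the definition of $F_P(H-z\pm i\epsilon)$, identify the ``diagonal'' part $P(H-z\pm i\epsilon)P$ explicitly, and then replace the resolvent of $H_{\overline P}$ at $z$ by that of $H_{0,\overline P}$ at $e_1$ using Lemma~\ref{lemma:mourrefeshbach} (iii); the resulting explicit scalar integral will be $\Gamma_{\pm\epsilon}$.

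First I would compute the diagonal piece. Since $\Phi_1=\varphi_1\otimes\Omega$ is an eigenvector of $H_0=K+H_f$ with eigenvalue $e_1$, and since $P V P=0$ (because $\sigma_1\varphi_1=\varphi_0\perp\varphi_1$, or equivalently $V\Phi_1=\varphi_0\otimes a(f)^*\Omega$ is orthogonal to $\Phi_1$), one obtains
\begin{equation*}
P(H-z\pm i\epsilon)P=(e_1-z\pm i\epsilon)P.
\end{equation*}
Moreover $\overline P V P=V P$ by the same orthogonality. Next, by Lemma~\ref{lemma:mourrefeshbach} (iii) applied with $|z-e_1|\leq r$,
\begin{equation*}
g^2\,PV\overline P(H_{\overline P}-z\pm i\epsilon)^{-1}\overline P V P
=g^2\,PV\overline P(H_{0,\overline P}-e_1\pm i\epsilon)^{-1}\overline P V P+R_\epsilon(g,r),
\end{equation*}
with $\|R_\epsilon(g,r)\|\leq Cg^2(g^{1/2}+r^{1/2})$, which gives the stated remainder bound.

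Then I would evaluate the leading scalar. Because $H_0$ commutes with $P$ (since $\Phi_1$ is an eigenvector of $H_0$), the restriction $(H_{0,\overline P}-e_1\pm i\epsilon)^{-1}$ acts on $V\Phi_1\in\mathrm{ran}\,\overline P$ as $(H_0-e_1\pm i\epsilon)^{-1}$. On the one-boson state $V\Phi_1=\varphi_0\otimes a(f)^*\Omega$, $H_0$ acts as multiplication by $\omega(k)$ (using $e_0=0$), so a direct computation yields
\begin{equation*}
\bigl\langle V\Phi_1,(H_{0,\overline P}-e_1\pm i\epsilon)^{-1}V\Phi_1\bigr\rangle=\int \mathrm{d}^3k\,\frac{f(k)^2}{\omega(k)-e_1\pm i\epsilon}=\Gamma_{\pm\epsilon}.
\end{equation*}
Since $PV\overline P(\cdot)\overline P V P=\langle V\Phi_1,(\cdot)V\Phi_1\rangle\,P$ on $\mathrm{ran}\,P$, combining with the diagonal piece produces $F_P(H-z\pm i\epsilon)=(e_1-z-g^2\Gamma_{\pm\epsilon}\pm i\epsilon)P+R_\epsilon(g,r)$.

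For the existence of the limit $\Gamma_{\pm 0}$ and the formula~\eqref{LL2}, I would pass to spherical coordinates, use the spherical symmetry of $f$, and change variables $\tau=\omega(|k|)-e_1=\sqrt{|k|^2+m^2}-e_1$, which maps $[0,\infty)\ni|k|$ onto $[m-e_1,\infty)\ni\tau$ (and here $m-e_1<0$ by Assumption~\ref{as}). A short computation gives the Jacobian factor $\theta(\tau)$ defined in~\eqref{LL3}, so that
\begin{equation*}
\Gamma_{\pm\epsilon}=\int_{m-e_1}^{\infty}\frac{\theta(\tau)}{\tau\pm i\epsilon}\,\mathrm{d}\tau.
\end{equation*}
Since $\theta\in C^\infty$ in a neighborhood of $0$ (and rapidly decaying at infinity thanks to the Gaussian in $f$) and $0$ is an interior point of the integration interval (because $e_1>m$), the Plemelj–Sokhotski formula yields $\lim_{\epsilon\downarrow 0}\Gamma_{\pm\epsilon}=\mp i\pi\theta(0)+\mathcal{P}\!\int_{m-e_1}^{\infty}\theta(\tau)/\tau\,\mathrm{d}\tau$, which is exactly~\eqref{LL2}. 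Positivity $\theta(0)>0$ follows from~\eqref{fmaszero} and the fact that $\sqrt{e_1^2-m^2}>0$, giving $\mathrm{Im}\,\Gamma_{\pm 0}\neq 0$.

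There is no serious obstacle; the only subtle point is making sure that the replacement step is handled \emph{inside} the sandwich $PV\overline P(\cdot)\overline P V P$, since the resolvents themselves are not small — the error in Lemma~\ref{lemma:mourrefeshbach} (iii) is of order $g^{1/2}+r^{1/2}$, and the prefactor $g^2$ in front of the Feshbach correction is what produces the final $g^2(g^{1/2}+r^{1/2})$ bound on $R_\epsilon(g,r)$. The rest is a bookkeeping of the explicit one-boson inner product and a standard distributional identity.
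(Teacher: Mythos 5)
Your proposal is correct and follows essentially the same route as the paper: compute $P(H-z\pm i\epsilon)P=(e_1-z\pm i\epsilon)P$ using $PVP=0$, swap $(H_{\overline P}-z\pm i\epsilon)^{-1}$ for $(H_{0,\overline P}-e_1\pm i\epsilon)^{-1}$ inside the sandwich $PV\overline P(\cdot)\overline PVP$ via Lemma~\ref{lemma:mourrefeshbach}~(iii) to get the $Cg^2(g^{1/2}+r^{1/2})$ remainder, evaluate the one-boson scalar to obtain $\Gamma_{\pm\epsilon}$, and conclude with the change of variables $\tau=\omega(|k|)-e_1$ and Sokhotski--Plemelj. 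No meaningful differences.
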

\begin{proof}
Note that $PVP=0$   and $PH_0P=e_1 P$.  We take $\epsilon\in (0,1)$ and $z \in \mathbb{R}$ with $ | z - e_1 | \leq r$.  
  We obtain from Definition \ref{def:feshbach}  that
\begin{align}
F_P(H-z\pm i \epsilon)=(e_1-z\pm i \epsilon)P -g^2   \hat \Gamma_{\pm\epsilon}  + R_\epsilon(g) ,
\end{align}
where
\begin{align}
\hat \Gamma_{\pm\epsilon} P:=PV\overline P (H_{0,\overline{P}}-e_1\pm i \epsilon)^{-1}\overline P VP
\end{align}
and 
\begin{align}
R_\epsilon(g)=g^2 PV\overline P \left((H_{0,\overline{P}}-e_1\pm i \epsilon)^{-1}  - (H_{\overline{P}}-z\pm i \epsilon)^{-1}\right)\overline P VP .
\end{align}
For $\kappa>0$ and sufficiently small $g, r>0$, Lemma \ref{lemma:mourrefeshbach} (iii) implies that $\norm{R_\epsilon(g)}\leq 
C  g^{2}(g^{1/2}  + r^{1/2})$.  We define $ \widetilde f_{\pm} (k) =   \frac{f(k)}{e_0+\omega(k)-e_1\pm i \epsilon} $ and calculate  
\begin{align}
\hat \Gamma_{\pm\epsilon} P = &PV\overline P (H_{0,\overline{P}}-e_1\pm i \epsilon)^{-1}\overline P \sigma_1\otimes a(f)^* P
=   PV\overline P (H_{0,\overline{P}}-e_1\pm i \epsilon)^{-1}  \varphi_0\otimes  f 
\notag \\ =& 
 PV\overline P   \varphi_0\otimes \widetilde f_{\pm}
=  \int \mathrm{d}^3k\,  \frac{f(k)^2}{\omega(k)-e_1\pm i \epsilon} P ,
\end{align}  
where we recall $e_0=0$.
This together with the definition of $\Gamma_{\pm\epsilon}$ in \eqref{eq:Gamma} completes the first part of the proof.

In the following we compute the limits as $\epsilon $ tends to zero of  $\Gamma_{\pm\epsilon}$. This is actually a consequence of the   Sokhotski-Plemelj theorem, we calculate using the changes of variables 
$s = (r^2 + m^2)^{1/2}  $  and $\tau = s - e_1 $  (we recall that we identify $f(k) \equiv f(|k|) $ and we do the same with $\omega$): 
\begin{align} \label{MM0}
  \int & \mathrm{d}^3k\,  \frac{f(k)^2}{\omega(k)-e_1\pm i \epsilon} = 
 4 \pi  \int_0^\infty \mathrm{d}r \,  r^2 f(r)^2 \frac{ 1 }{ \omega(r) - e_1 \pm  i  \epsilon } 
 \\ \notag = &      
  4 \pi   \int_m^\infty \mathrm{d}s \, s (s^2 - m^2)^{1/2} f( (s^2 - m^2)^{1/2}  )^2 \frac{  1  }{ (s - e_1)  \pm i \epsilon } 
  \\ \notag = &      
 4 \pi  \int_{m- e_1}^\infty \mathrm{d}\tau \,  (e_1 +  \tau) 
 (    (e_1 +  \tau)^2 - m^2)^{1/2} f( (   (e_1 +  \tau)^2 - m^2)^{1/2}  )^2 \frac{  1}{ \tau \pm  i \epsilon  }  .
\end{align}
Using  \eqref{MM0} and the  Sokhotski-Plemelj theorem,  we obtain that 
\begin{align}\label{MM2}
\lim_{\epsilon \to 0} \int & \mathrm{d}^3k\,  \frac{f(k)^2}{\omega(k)-e_1\pm i \epsilon} =
\mp \pi i \theta(0) + \mathcal{P}  \int_{m- e_1}^\infty \mathrm{d}x\, \theta(x)/x ,
\end{align}
and thereby, we complete the proof.
\end{proof}
\begin{corollary}
\label{Cor}
For sufficiently small $g, r >0$, small enough $\epsilon > 0$ (depending on $g$) and $z \in \mathbb{R}$ with $ | z - e_1 | \leq r$, the following holds true 
\begin{align}
P (H-z\pm i \epsilon)^{-1} P= (e_1-z-g^2 \Gamma_{\pm 0} )^{-1} P + \widetilde R(\epsilon, g, r),
\end{align}
where 
\begin{align}
\norm{ \widetilde  R(\epsilon, g, r)}    \leq   C  ( g^{1/2} + r^{1/2})  \Big |  \frac{1 }{  e_1-z-g^2 \Gamma_{\pm 0}   }   \Big |, 
\end{align}
 and $C$ does not depend on $ \epsilon, g , r $ and $z$; c.f. Remark \ref{rem:const}.
\end{corollary}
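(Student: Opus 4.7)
The plan is to reduce the statement to the inversion of a single scalar, using the Feshbach--Schur formula together with Lemma~\ref{lemma:feshbachest}. Since $PVP=0$ and $PH_0P=e_1 P$, the $2\times 2$ block decomposition of $H-z\pm i\epsilon$ with respect to $P$ and $\overline P$ has $\overline P$-diagonal block $H_{\overline P}-z\pm i\epsilon$, which is invertible for all $\epsilon>0$ (as $H_{\overline P}$ is self-adjoint). The standard Schur complement formula then yields the identity
\[
P(H-z\pm i\epsilon)^{-1}P \;=\; F_P(H-z\pm i\epsilon)^{-1}\big|_{\mathrm{Ran}(P)},
\]
so the whole task reduces to inverting $F_P(H-z\pm i\epsilon)$ on the one-dimensional space $\mathrm{Ran}(P)$ and comparing the result with $(e_1-z-g^2\Gamma_{\pm 0})^{-1}P$.

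Next, I would exploit that $P=P_{\varphi_1}\otimes P_\Omega$ has rank one, so every operator of the form $PXP$ is a scalar multiple of $P$. Applied to the remainder in Lemma~\ref{lemma:feshbachest}, this gives $R_\epsilon(g,r)=\rho_\epsilon(g,r)P$ with $|\rho_\epsilon(g,r)|\leq Cg^2(g^{1/2}+r^{1/2})$, and hence
\[
F_P(H-z\pm i\epsilon) \;=\; \mu_{\pm\epsilon}(g,r,z)\,P,\qquad
\mu_{\pm\epsilon} := e_1-z-g^2\Gamma_{\pm\epsilon}\pm i\epsilon+\rho_\epsilon(g,r),
\]
so inverting on $\mathrm{Ran}(P)$ amounts to dividing by the scalar $\mu_{\pm\epsilon}$, i.e.\ $P(H-z\pm i\epsilon)^{-1}P=\mu_{\pm\epsilon}^{-1}P$.

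To compare $\mu_{\pm\epsilon}^{-1}$ with $(e_1-z-g^2\Gamma_{\pm 0})^{-1}$, I would first use the convergence $\Gamma_{\pm\epsilon}\to\Gamma_{\pm 0}$ from Lemma~\ref{lemma:feshbachest} to pick $\epsilon$ small enough depending on $g$ so that $|g^2(\Gamma_{\pm\epsilon}-\Gamma_{\pm 0})|+\epsilon\leq Cg^2(g^{1/2}+r^{1/2})$; combined with the bound on $\rho_\epsilon$ this yields $|\mu_{\pm\epsilon}-(e_1-z-g^2\Gamma_{\pm 0})|\leq Cg^2(g^{1/2}+r^{1/2})$. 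The crucial lower bound on the denominator is
\[
|e_1-z-g^2\Gamma_{\pm 0}| \;\geq\; |\Im(g^2\Gamma_{\pm 0})| \;=\; g^2\pi\theta(0) \;>\; 0,
\]
supplied by Lemma~\ref{lemma:feshbachest} together with \eqref{fmaszero}. For $g,r$ small this forces $|\mu_{\pm\epsilon}|$ to be bounded below by a positive multiple of $g^2$, and writing
\[
\mu_{\pm\epsilon}^{-1}-(e_1-z-g^2\Gamma_{\pm 0})^{-1} \;=\; \frac{(e_1-z-g^2\Gamma_{\pm 0})-\mu_{\pm\epsilon}}{\mu_{\pm\epsilon}\,(e_1-z-g^2\Gamma_{\pm 0})}
\]
and bounding numerator and denominators produces exactly the claimed remainder estimate $\norm{\widetilde R(\epsilon,g,r)}\leq C(g^{1/2}+r^{1/2})|e_1-z-g^2\Gamma_{\pm 0}|^{-1}$.

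The main obstacle is the genuinely delicate $g$-dependence: the denominator $e_1-z-g^2\Gamma_{\pm 0}$ is only of size proportional to $g^2$ (its imaginary part is the Fermi Golden Rule term $\mp g^2\pi\theta(0)$), while the error from Lemma~\ref{lemma:feshbachest} is of size $g^2(g^{1/2}+r^{1/2})$ -- exactly comparable, so the perturbative inversion produces a relative error of size $g^{1/2}+r^{1/2}$. The order of quantifiers is therefore important: one first picks $g,r$ small enough that this relative error is below $1/2$, and only then picks $\epsilon$ small depending on $g$ so that the $\epsilon$-vanishing contributions $g^2(\Gamma_{\pm\epsilon}-\Gamma_{\pm 0})\pm i\epsilon$ do not spoil the $g^2$-sized lower bound on the denominator.
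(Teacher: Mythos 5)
Your proposal is correct and takes essentially the same route as the paper: both pass through the Feshbach--Schur identity $P(H-z\pm i\epsilon)^{-1}P = F_P(H-z\pm i\epsilon)^{-1}$, invoke Lemma~\ref{lemma:feshbachest} for the expansion of $F_P$, use $\Im\Gamma_{\pm 0}\neq 0$ to bound the denominator below by a multiple of $g^2$, and pick $\epsilon$ small depending on $g$ so the $\Gamma_{\pm\epsilon}\to\Gamma_{\pm 0}$ convergence and the $\pm i\epsilon$ shift are absorbed into the error. The only cosmetic difference is that you reduce $F_P$ to a scalar multiple of the rank-one projection $P$ and invert it by the algebraic identity $a^{-1}-b^{-1}=(b-a)/(ab)$, whereas the paper writes the inverse as a Neumann series around $(e_1-z-g^2\Gamma_{\pm 0})^{-1}P$; for a rank-one $F_P$ these are the same calculation.
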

\begin{proof}
It follows from  \cite[Eq.\ (IV.13)]{bfs3}  that 
\begin{align}
P (H-z\pm i \epsilon)^{-1} P = F_P(H-z\pm i \epsilon)^{-1}, 
\end{align}
which is invertible for small enough $\epsilon $, $r$ and $g$ (this is a consequence of Lemma \ref{lemma:feshbachest}, we recall that $\Im  \Gamma_{\pm 0} \ne 0 $). 
We use Neumann series and Lemma \ref{lemma:feshbachest} to get
\begin{align} 
\big \| F_P(H-z\pm i \epsilon)^{-1} - &  (e_1-z-g^2 \Gamma_{\pm 0} )^{-1} P \big \| 
\\ \notag \leq &  \Big |  \frac{1 }{  e_1-z-g^2 \Gamma_{\pm 0}   }   \Big |
 \sum^\infty_{n=1}  \Big \|  
\frac{ R_\epsilon(g, r)  \pm i  \epsilon +   g^2  \Gamma_{\pm 0} - 
  g^2  \Gamma_{\pm \epsilon}   }{ e_1-z-g^2 \Gamma_{\pm 0}} \Big \|^{n} \\ \notag  
 \leq  & C  ( g^{1/2} + r^{1/2})  \Big |  \frac{1 }{  e_1-z-g^2 \Gamma_{\pm 0}   }   \Big |
   ,
\end{align}
for small enough $g, \epsilon$ and $r$ (we can take, for example, $\epsilon \leq g^{5/2}  $  and so small such that $| \Gamma_{\pm 0} - \Gamma_{\pm \epsilon}   | \leq  g^{1/2} $).  
\end{proof}
In addition, we present an easy formula for a certain matrix element of the time evolution restricted to a spectral subset. 
\begin{lemma}\label{PP}
We set $\Phi_1:=\varphi_1\otimes \Omega$. For every $s > 0$, we have
\begin{align}
\label{eq:stoneput}
\left\langle  \Phi_1, e^{-itH} \chi_s(H) \Phi_1\right\rangle 
&=   \pi^{-1}\lim\limits_{\epsilon\to 0^+}\int_\R \mathrm{d}r \, \chi_s(r) e^{-itr}  \Im  \left\langle  \Phi_1,  (H -  r - i\epsilon)^{-1}\Phi_1\right\rangle 
.
\end{align}   
\end{lemma}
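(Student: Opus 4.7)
My plan is to establish \eqref{eq:stoneput} as a direct application of the spectral theorem combined with the fact that the Poisson kernel is an approximate identity. Let $\mu_{\Phi_1}$ denote the spectral measure of $H$ associated to the vector $\Phi_1 = \varphi_1 \otimes \Omega$. The first step is to write, for each $r \in \mathbb{R}$ and $\epsilon > 0$,
\begin{align*}
\operatorname{Im}\left\langle \Phi_1, (H-r-i\epsilon)^{-1}\Phi_1\right\rangle = \int_{\mathbb R} \frac{\epsilon}{(\lambda - r)^2 + \epsilon^2}\, d\mu_{\Phi_1}(\lambda),
\end{align*}
which follows from the spectral theorem and the identity $\operatorname{Im}(\lambda - r - i\epsilon)^{-1} = \epsilon/((\lambda - r)^2 + \epsilon^2)$ valid for all real $\lambda$.

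Next, I would multiply by $\pi^{-1}\chi_s(r) e^{-itr}$ and integrate over $r \in \mathbb R$. Since $\chi_s$ has compact support and the integrand is dominated by $(\mu_{\Phi_1}\text{-essentially bounded}) \cdot \mathbf{1}_{\mathrm{supp}\, \chi_s}(r)$, Fubini's theorem applies and yields
\begin{align*}
\frac{1}{\pi}\int_{\mathbb R} \chi_s(r) e^{-itr}\operatorname{Im}\left\langle \Phi_1, (H-r-i\epsilon)^{-1}\Phi_1\right\rangle dr = \int_{\mathbb R}(P_\epsilon \ast \phi_t)(\lambda)\, d\mu_{\Phi_1}(\lambda),
\end{align*}
where $\phi_t(r) := \chi_s(r) e^{-itr}$ and $P_\epsilon(r) := \epsilon/(\pi(r^2 + \epsilon^2))$ is the Poisson kernel on the upper half plane.

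The third step is to pass to the limit $\epsilon \downarrow 0$ inside the remaining integral over $\lambda$. Since $\phi_t$ is continuous and bounded, the standard theory of the Poisson kernel gives $(P_\epsilon \ast \phi_t)(\lambda) \to \phi_t(\lambda)$ pointwise for every $\lambda \in \mathbb R$. Moreover, $\|P_\epsilon \ast \phi_t\|_\infty \leq \|P_\epsilon\|_{L^1}\|\phi_t\|_\infty = \|\chi_s\|_\infty \leq 1$, so the integrand is uniformly bounded by $1$, which is $\mu_{\Phi_1}$-integrable since $\mu_{\Phi_1}(\mathbb R) = \|\Phi_1\|^2 = 1$. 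The dominated convergence theorem then yields
\begin{align*}
\lim_{\epsilon\to 0^+}\int_{\mathbb R}(P_\epsilon \ast \phi_t)(\lambda)\, d\mu_{\Phi_1}(\lambda) = \int_{\mathbb R}\chi_s(\lambda) e^{-it\lambda}\, d\mu_{\Phi_1}(\lambda) = \left\langle \Phi_1, e^{-itH}\chi_s(H)\Phi_1\right\rangle,
\end{align*}
which is the left-hand side of \eqref{eq:stoneput}.

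There is no serious obstacle here; the identity is essentially Stone's formula tested against $\chi_s(r) e^{-itr}$, and the only point requiring a little care is the justification of the Fubini exchange and the limit passage, both of which are straightforward because $\chi_s \in C_c^\infty(\mathbb R)$ and $\mu_{\Phi_1}$ is a finite measure. No regularity properties of $H$ beyond self-adjointness are needed.
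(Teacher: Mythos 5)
Your proof is correct and follows essentially the same route as the paper: both rest on the Poisson kernel being an approximate identity, combined with the spectral theorem and dominated convergence. The paper establishes the pointwise identity $e^{-it\lambda}\chi_s(\lambda) = \lim_{\epsilon\to 0^+}\pi^{-1}\int_{\mathbb R}dr\,e^{-itr}\chi_s(r)\,\epsilon/((r-\lambda)^2+\epsilon^2)$ first and then invokes the spectral theorem, whereas you apply the spectral theorem first and then use Fubini and dominated convergence; these are the same argument with the steps in a slightly different order, and your write-up usefully makes explicit the measure-theoretic justifications the paper leaves implicit.
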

\begin{proof}
The result follows from the spectral theorem and the next calculation
\begin{align}
e^{-i t \lambda } \chi_s(\lambda) = & \lim_{\epsilon \to 0}\frac{1}{\pi}  \int_{\mathbb{R}} dr
  e^{-i t (\lambda + \epsilon r)  } \chi_s(\lambda + \epsilon r) \frac{  1  }{r^2 + 1}
  = \lim_{\epsilon \to 0}\frac{1}{\pi}  \int_{\mathbb{R}}
 dr  e^{-i t r  }  \chi_s(r) \frac{  \epsilon  }{  ( r - \lambda     )^2 + \epsilon^2}\notag  \\ 
 = &  \lim_{\epsilon \to 0}\frac{1}{\pi}  \int_{\mathbb{R}}
 dr  e^{-i t r  } \chi_s(r) \Im  \frac{1}{ \lambda - r - i \epsilon} . 
\end{align}
\end{proof}
The following formula strongly relies on the previous results in this section and it is a crucial ingredient for the proof of the main theorem.
\begin{lemma}
\label{lemma:matrixelement}
  For sufficiently small $g>0$, $s$ as in Definition \ref{def.chi} and  Lemma \ref{lemma:specproj} sufficiently small,  and all $t\in\R$, the following holds true 
\begin{align}
\label{eq:stoneformula}
&\left\langle  \Phi_1, e^{-itH} \Phi_1 \right\rangle 
=\pi^{-1}\int_{\R} \mathrm{d}z \, e^{-itz}  \Im  (e_1-z-g^2 \Gamma_{- 0} )^{-1} + r_0 (g, s) 
,
\end{align}
where  
\begin{align}
|r_0(g, s)|\leq  C \left( (g^{1/2} + s^{1/2} )  |\log(g)| +   gs^{-1} \right), 
\end{align}
 and we recall $\Phi_1=\varphi_1\otimes \Omega$.  The constant $C$ does not depend on $g$, $s$ and $t$.
\end{lemma}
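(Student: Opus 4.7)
The plan is to combine the Stone-type formula of Lemma~\ref{PP} with the resolvent expansion of Corollary~\ref{Cor}. As a preliminary step I would split $\Phi_1 = \chi_s(H)\Phi_1 + (1 - \chi_s(H))\Phi_1$ and handle the complementary piece directly. Since $H_0 \Phi_1 = e_1 \Phi_1$ and $\chi_s(e_1) = 1$, we have $(1-\chi_s(H_0))\Phi_1 = 0$, so Lemma~\ref{lemma:specproj} gives
\begin{equation*}
  \bigl\| (1-\chi_s(H))\Phi_1 \bigr\| = \bigl\| (\chi_s(H_0) - \chi_s(H))\Phi_1 \bigr\| \leq C g s^{-1}.
\end{equation*}
Cauchy--Schwarz then yields $|\langle \Phi_1, e^{-itH}(1-\chi_s(H))\Phi_1\rangle| \leq C g s^{-1}$, which is absorbed into $r_0(g,s)$.

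For the main piece $\langle \Phi_1, e^{-itH}\chi_s(H)\Phi_1\rangle$, I would apply Lemma~\ref{PP} to express it as the Stone integral, then use $P\Phi_1 = \Phi_1$ to rewrite the matrix element as $\langle \Phi_1, P(H - r - i\epsilon)^{-1}P \Phi_1\rangle$. Now Corollary~\ref{Cor}, applied with its parameter $r$ chosen proportional to $s\delta$ (so that its hypothesis $|z - e_1| \le r$ is valid on the support of $\chi_s$), gives
\begin{equation*}
  \langle \Phi_1, (H - r - i\epsilon)^{-1}\Phi_1 \rangle = (e_1 - r - g^2 \Gamma_{-0})^{-1} + \langle \Phi_1, \widetilde R(\epsilon, g, s)\Phi_1\rangle,
\end{equation*}
where the leading term is independent of $\epsilon$ and $|\langle \Phi_1, \widetilde R(\epsilon, g, s)\Phi_1\rangle| \leq C(g^{1/2} + s^{1/2})\,|e_1 - r - g^2\Gamma_{-0}|^{-1}$ uniformly in $\epsilon$. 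Since the leading term is bounded and smooth (as $\Im \Gamma_{-0} \neq 0$), the $\epsilon \downarrow 0$ limit of the Stone integral splits cleanly into the leading contribution and the limit of the error integral.

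The main technical step is the estimate of the error integral. The integrand is pointwise dominated, uniformly in $\epsilon$, by $C(g^{1/2}+s^{1/2})\chi_s(r)\,|e_1 - r - g^2 \Gamma_{-0}|^{-1}$, and writing $e_1 - r - g^2 \Gamma_{-0} = u + i b$ with $b = -g^2\Im\Gamma_{-0}$ of order $g^2$ and $u = e_1 - r - g^2\Re\Gamma_{-0}$, a direct Poisson-type integration over the support of $\chi_s$ yields
\begin{equation*}
  \int \chi_s(r)\, |e_1 - r - g^2 \Gamma_{-0}|^{-1}\, dr \leq C\,\mathrm{arcsinh}\!\bigl(Cs/g^2\bigr) \leq C\,|\log g|,
\end{equation*}
where we used $s \geq g^2$. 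Hence the error integral is bounded by $C(g^{1/2} + s^{1/2})|\log g|$, which is the main $r_0(g,s)$ contribution.

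Finally, to obtain the claimed formula with the integral over all of $\R$ rather than against $\chi_s$, I would subtract off the tail
\begin{equation*}
  \int_\R (1-\chi_s(r))\, e^{-itr}\, \Im (e_1 - r - g^2\Gamma_{-0})^{-1}\, dr.
\end{equation*}
The integrand is a Lorentzian of width $O(g^2)$ centered near $e_1$; outside the support of $\chi_s$ it is $O(g^2/(r-e_1)^2)$, giving a tail of order $g^2/s$, which is dominated by $g s^{-1}$. Combining all contributions yields the bound $|r_0(g,s)| \leq C((g^{1/2}+s^{1/2})|\log g| + g s^{-1})$. The main obstacle is the careful bookkeeping of the $\epsilon \downarrow 0$ limit through the error term and the Poisson-style logarithmic estimate; everything else is an accounting of already proven tools.
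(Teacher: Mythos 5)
Your proposal is correct and follows essentially the same route as the paper: cut off with $\chi_s(H)$ (error $Cgs^{-1}$ via Lemma~\ref{lemma:specproj}), apply Lemma~\ref{PP} and Corollary~\ref{Cor}, bound the resulting error integral by the $\operatorname{arcsinh}$/log estimate over $\operatorname{supp}\chi_s$ to get $C(g^{1/2}+s^{1/2})|\log g|$, and estimate the Lorentzian tail on $\{1-\chi_s\}$ by $Cg^2/s$. The paper organizes the same steps as an explicit split into terms $r_1,r_2,r_3$ with $r_2$ bounded uniformly in $\epsilon$ before passing to the limit.
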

\begin{proof}
The spectral calculus implies  $\chi(H_0)\Phi_1=\Phi_1$, and hence, it follows from Lemma \ref{lemma:specproj} that 
\begin{align}
\label{proof:stoneformula}
&\left\langle  \Phi_1, e^{-itH} \Phi_1 \right\rangle=  \left\langle  \Phi_1, e^{-itH} \chi_s(H)\Phi_1 \right\rangle  + r_1(g, s),  \quad \text{where} \quad |r_1(g, s)|\leq C gs^{-1}  .
\end{align}
Lemma \ref{PP} yields
\begin{align}
\label{eq:stone}
\left\langle  \Phi_1, e^{-itH} \chi_s(H) \Phi_1\right\rangle 
&=  \pi^{-1}\lim\limits_{\epsilon\to 0^+}\int_\R \mathrm{d}z \, \chi_s(z) e^{-itz}  \Im  \left\langle  \Phi_1, P (H -  z  -   i\epsilon)^{-1} P \Phi_1\right\rangle 
.
\end{align}  
We calculate: 
\begin{align}
\label{eq:stone1}
\left\langle  \Phi_1, e^{-itH}\chi_s(H) \Phi_1 \right\rangle 
  = &   
 \pi^{-1}  \lim_{\epsilon\to 0^+}  \left( \int_{\R} \mathrm{d}z \,   e^{-itz}  \Im  (e_1-z-g^2 \Gamma_{- 0} )^{-1} + r_2(g,  \epsilon ,  s) + r_3(g, s) \right) , 
\end{align}
where  
\begin{align}
\label{r0}
r_2(g, \epsilon, s)=  \pi^{-1}  \int_{\R} \mathrm{d}z \, \chi_s (z) e^{-itz}  \Im \Big\langle \Phi_1,\Big (   P (H -  z  -   i\epsilon)^{-1} P   -  (e_1-z-g^2 \Gamma_{- 0} )^{-1}  \Big )\Phi_1\Big \rangle    
\end{align}
and
\begin{align}
\label{r00}
r_3(g, s)=  \pi^{-1}\int_{\R} \mathrm{d}z \, (1- \chi_s(z) )e^{-itz}  \Im   (e_1-z-g^2 \Gamma_{+ 0} )^{-1} .
\end{align}
Now, we use Corollary \ref{Cor}, for sufficiently small $s$, to get
\begin{align}
\Big | \chi_s (z) e^{-itz}  \Im \Big\langle \Phi_1,\Big (   P (H -  z  -   i\epsilon)^{-1} P   &-   (e_1-z-g^2 \Gamma_{- 0} )^{-1}  \Big )\Phi_1\Big \rangle    \Big |
 \\ \notag
&\leq C (g^{1/2} + s^{1/2} )  \left|   e_1-z-g^2 \Gamma_{- 0}  \right|^{-1}  \chi_s (z) .
\end{align}
This together with \eqref{r0}  and Definition \ref{def.chi}  yields then  that 
\begin{align}
\label{r'}
|r_2(g, \epsilon, s)|\leq & C (g^{1/2} + s^{1/2} )  \int \mathrm{d}z\,   \chi_s (z) 
\left|   e_1-z-g^2 \Gamma_{- 0}  \right|^{-1} 
\\ \notag   
=&   C (g^{1/2} + s^{1/2} ) 
\int \mathrm{d}z \,  \chi ((z-e_1)/s+e_1)  \Big ( (e_1 - z  - g^2 \Re \Gamma_{- 0})^2 +  g^4 (\Im \Gamma_{- 0})^2    \Big )^{-1/2} 
\\ \notag  
  \leq     &  C (g^{1/2} + s^{1/2} ) \int^{\frac{3}{4}\delta s-g^2 \Re \Gamma_{-0}}_{-\frac{3}{4}\delta s-g^2 \Re \Gamma_{-0}}  dr \frac{1}{g^2}\frac{1}{( (\frac{r}{g^2})^2  +  (\Im \Gamma_{- 0})^2)^{1/2} } 
\\ \notag   
\leq &  C (g^{1/2} + s^{1/2} ) \int_{|r|\leq c s g^{-2}}  dr \frac{1}{( r^2  +  (\Im \Gamma_{- 0})^2)^{1/2} } 
,
\end{align}
where the last step follows for $g>0$ sufficiently small and some constant $c>0$. Here, we recall from Definition \ref{def.chi} that  $g^2\leq s\leq g^{\kappa}$ for some $0<\kappa<2$. 
Employing that $2\sqrt{x^2+y^2}\geq |x|+|y|$, we find   a constant $C>0$ such that
\begin{align}
\label{r}
|r_2(g, \epsilon, s)|
\leq &  C (g^{1/2} + s^{1/2} )  |\log(g)|
.
\end{align}
Moreover, it follows from \eqref{r00} together with the definition of $\chi$ and $0\leq \chi \leq 1$ that
there is a  constant  $c > 0$ such that
\begin{align}
| r_3(g, s)|\leq & \pi^{-1}\int \mathrm{d}z\,  (1 - \chi_s (z)) \left|  \Im (  e_1-z-g^2 \Gamma_{-0} )^{-1} \right|
\\ \notag \leq &   \pi^{-1} g^2 \Im \Gamma_{+ 0}    \int (1 - \chi_s (z)) \frac{1}{  (e_1 - z  - g^2 \Re \Gamma_{- 0})^2 +  g^4 \Im \Gamma_{- 0}^2     } 
\\ \notag   \leq   &  C g^2  \int_{ | r  | \geq  c s}  dr \frac{1}{g^4}\frac{1}{ (\frac{r}{g^2})^2  +  \Im \Gamma_{- 0}^2  }   
  =     C  \int_{ | x  | \geq  c s/g^2 }  dx \frac{1}{  x^2  +  \Im \Gamma_{- 0}^2  }    
\leq C g^2s^{-1}  .
\end{align} 
\end{proof}

\section{Proof of the main result}
\label{sec:proof-mainresult}

In this section we provide a proof of the   main result; c.f.\ Theorem~\ref{FKcor}. 
\begin{proof}[Proof of Theorem \ref{FKcor}]   We start with  \eqref{W} and use \eqref{amenospsi}:  
\begin{align}
T(h, l)
&=
-2\pi i g\norm{\Psi_{\lambda_0}}^{-2} \left\langle a_-(W) \sigma_1 \Psi_{\lambda_0}, \Psi_{\lambda_0}\right\rangle
=-2\pi i g\norm{\Psi_{\lambda_0}}^{-2} \left\langle [a_-(W) ,\sigma_1] \Psi_{\lambda_0}, \Psi_{\lambda_0}\right\rangle ,
\end{align}
  It follows from \eqref{amenosintegral} that  
\begin{align}
\label{eq:thl0tt}
T(h,l)
&=
 2\pi(ig)^2\norm{\Psi_{\lambda_0}}^{-2} \int_{-\infty}^0 \mathrm{d}t   \,
\overline{\langle W_t,f\rangle_2}
\left\langle \left[e^{itH}\sigma_1
e^{-itH},  \sigma_1\right]  \Psi_{\lambda_0}, \Psi_{\lambda_0}\right\rangle 
\notag \\
&=   2\pi  g^2\norm{\Psi_{\lambda_0}}^{-2} \int^{\infty}_0 \mathrm{d}t \,
    \langle f, W_{-t}\rangle_2
    \left\langle \left[e^{-itH}\sigma_1
e^{itH},  \sigma_1\right]  \Psi_{\lambda_0}, \Psi_{\lambda_0}\right\rangle 
\notag \\
&= 2 \pi   \norm{\Psi_{\lambda_0}}^{-2} \left(  T^{(1)}- T^{(2)} \right) 
,
\end{align}
where  we recall 	the notation $W_s(k)=e^{-is\omega(k)}W(k)$ and  use the abbreviations 
\begin{align}
\label{eq:first-termtt}
T^{(1)} :&= g^2 \int_0^\infty  \mathrm{d}t  \int \mathrm{d^3} k   \,
W(k)f(k) e^{it(\omega(k)+\lambda_0)}  \left\langle   \sigma_1 \Psi_{\lambda_0},
e^{-itH} \sigma_1\Psi_{\lambda_0}\right\rangle 
\\ \notag 
&=  g^2\int_0^\infty  \mathrm{d}t  \int_0^\infty \mathrm{d} r \, G(r) e^{it(\omega(r)+\lambda_0)}  \left\langle   \sigma_1 \Psi_{\lambda_0},
e^{-itH} \sigma_1\Psi_{\lambda_0}\right\rangle 
\\ \notag 
&   =  g^2\int_0^\infty  \mathrm{d}t  \, \zeta(t)  \left\langle   \sigma_1 \Psi_{\lambda_0},
e^{-itH} \sigma_1\Psi_{\lambda_0}\right\rangle 
\end{align}
and 
\begin{align}
T^{(2)} := g^2  \int_0^\infty  \mathrm{d}t  \int_0^\infty \mathrm{d} r\,   G(r) e^{it(\omega(r)-\lambda_0)}     \left\langle \sigma_1  \Psi_{\lambda_0},
e^{itH}\sigma_1\Psi_{\lambda_0}\right\rangle .
\label{eq:second-termtt}
\end{align}
Here, we  changed to spherical coordinates $k=(r,\Sigma)$ and  take:
\begin{align}
\label{def:W1sttt}
 G(r)=   \int \mathrm{d}\Sigma \mathrm{d}\Sigma' \,  r^4  \overline{h(r,\Sigma)} l(r,\Sigma') f(r)^2, 
\qquad  \zeta (t): =   \int_0^\infty \mathrm{d} r \, G(r) e^{it(\omega(r)+\lambda_0)} . 
\end{align}
Moreover, we observe that $G\in\mathit C_c^\infty(\R\setminus \{0\}, \C)$. Notice that  an integration by parts   (using that $e^{i \theta \omega(r)} =   \frac{\partial}{\partial r} \Big ( e^{i \theta \omega(r)}  \frac{1}{i \theta \frac{\partial}{\partial \theta} \omega(r) } \Big ) 
   -    e^{i \theta \omega(r)} \frac{\partial}{\partial r}   \Big ( \frac{1}{i \theta \frac{\partial}{\partial \theta} \omega(r)} \Big ) $)    ensures  that  
\begin{align} 
\label{intparts2}
\left| \zeta (t)  \right|\leq C/(1+t^2), \qquad  \forall t \in \R,
\end{align} 
which guarantees the existence of the integrals in \eqref{eq:first-termtt} and \eqref{eq:second-termtt}.

 Recall $\Phi_1=\varphi_1\otimes \Omega$ (see Lemma \ref{PP}). 
It follows from   Proposition  \ref{prop:gs} that
\begin{align}
\label{approxvec}
& \left\langle \sigma_1  \Psi_{\lambda_0},
e^{-isH}\sigma_1\Psi_{\lambda_0}\right\rangle= \left\langle \Phi_1,
e^{-isH}\Phi_1\right\rangle +   \rho_1(g) ,
\end{align}
 where   $|\rho_1(g)|\leq C  g$. 
Moreover, we recall that Lemma \ref{lemma:matrixelement} states  that  
\begin{align}
\label{eq.matel}
&\left\langle  \Phi_1, e^{-itH} \Phi_1 \right\rangle 
=\pi^{-1}\int_{\R} \mathrm{d}z \, e^{-itz}  \Im  (e_1-z-g^2 \Gamma_{- 0} )^{-1} + r_0 (g, s) 
,
\end{align}
where  
\begin{align}
|r_0(g, s)|\leq C \Big ( \left(g^{1/2} + s^{1/2} \right) |\log(g)|  +   gs^{-1} \Big ).  
\end{align}  
Note that \cite[Remark 4.8]{bdh-scat} implies that the first term in \eqref{eq.matel} is bounded by a constant as $g\to0^+$  (this actually follows from computing the integral). 
Then, \eqref{eq:first-termtt} together with \eqref{approxvec} and \eqref{eq.matel} yields
\begin{align}
\label{49}
T^{(1)} = T^{(1)}_0 +  R_1(g, s)  ,
\end{align}
where
\begin{align}
\label{t10}
  T^{(1)}_0 :=& \pi^{-1}g^2\int_0^\infty  \mathrm{d}t\,      \zeta(t) 
  \int_{\R} \mathrm{d}z \, e^{-itz}   \Im  (e_1-z-g^2 \Gamma_{- 0} )^{-1} ,
\end{align}
and  $|R_1(g, s)|\leq C
 g^2 (  (g^{1/2} + s^{1/2} ) |\log(g)|  +  gs^{-1}  )$  for some constant $C>0$.
 As  $ \Im   \Gamma_{- 0} > 0   $ and $  \Im  (e_1-z-g^2 \Gamma_{- 0} )^{-1} $ decays as
$ |z|^{-2}$ at infinity,  
 \eqref{t10}  is absolutely integrable, and consequently,  Fubini's theorem allows for interchanging the order of integration. Similarly, we argue that we can apply the  dominated convergence theorem and conclude 
\begin{align}
\label{eq:lim101}
  T^{(1)}_0 =&  \lim\limits_{\eta\to 0^+}T^{(1)}_0(\eta)  , 
\end{align}
where
\begin{align}
  T^{(1)}_0(\eta) =& \pi^{-1}g^2 \int_\R \mathrm{d}z \,     \Im  (e_1-z-g^2 \Gamma_{- 0} )^{-1}   \int_0^\infty  \mathrm{d}t  \int_0^\infty \mathrm{d} r \, G(r) e^{it(\omega(r)+\lambda_0-z+i\eta)} 
  \\ \notag = & \pi^{-1}g^2 \int_\R \mathrm{d}z \,   \Im  (e_1-z-g^2 \Gamma_{- 0} )^{-1}   \int_0^\infty  \mathrm{d}t\,    \zeta(t)  e^{-t \eta}  e^{-it z}.   
\end{align}
Again, Fubini's theorem yields for $Q>0$
\begin{align}
\label{eq:Q}
\int_0^Q  \mathrm{d}t  \int_0^\infty \mathrm{d} r \, G(r) e^{it(\omega(r)+\lambda_0-z+i\eta)} 
= i\int_0^\infty \mathrm{d} r \, \frac{G(r)}{\omega(r)+\lambda_0-z+i\eta} \left( 1-  e^{iQ(\omega(r)+\lambda_0-z+i\eta)}  \right).
\end{align}
Moreover, for all $\eta>0$, we obtain  by the integration by parts formula  (see above \eqref{intparts2}) 
together with $G\in\mathit C_c^\infty(\R\setminus \{0\}, \C)$ that  there is a constant  $C(\eta, g)>0$  such that
\begin{align}
\left| \int_0^\infty \mathrm{d} r \, \frac{G(r)}{\omega(r)+\lambda_0-z+i\eta} e^{iQ(\omega(r)+\lambda_0-z+i\eta)}  \right| \leq   C(\eta, g)Q^{-1} ,
\end{align}
and consequently,  \eqref{eq:Q} implies that
\begin{align}
\int_0^\infty  \mathrm{d}s  \int_0^\infty \mathrm{d} r \, G(r) e^{is(\omega(r)+\lambda_0-z+i\eta)} 
&= i\int_0^\infty \mathrm{d} r \, \frac{G(r)}{\omega(r)+\lambda_0-z+i\eta}.
\end{align}
This together with Fubini's theorem yields that
\begin{align}
\label{eq:lim102}
 & T^{(1)}_0(\eta) = i \pi^{-1} g^2 \int_0^\infty \mathrm{d} r \,   G(r)  \int_\R \mathrm{d}z \, \Im  (e_1-z-g^2 \Gamma_{- 0} )^{-1}    \frac{1}{\omega(r)+\lambda_0-z+i\eta}  .
\end{align}
For $a>e_1$, we define $\mathcal Q_a:=[-a,a]  \cup   \{ae^{-i\varphi}:\varphi\in [0,\pi]\}\subset \overline{\C^-}$  to be a closed contour with mathematical negative orientation.   Note that, for real z, as in  \eqref{eq:lim102},    
\begin{align}
  \Im  (e_1-z-g^2 \Gamma_{- 0} )^{-1} =   \frac{1}{2i}  \Big ( (e_1-z-g^2 \Gamma_{- 0} )^{-1} - 
  (e_1-z-g^2 \overline{ \Gamma_{- 0}} )^{-1}
  \Big ), 
\end{align}
i.e. we do not  conjugate $ z $. We extend the formula above, in a meromorphic way, to the lower
half of the complex plane. We obtain, for small enough $\eta$, using the residue theorem that   
\begin{align}
& \int_{\R} \mathrm{d}z \,   \Im  (e_1-z-g^2 \Gamma_{- 0} )^{-1}  \frac{1}{\omega(r)+\lambda_0-z+i\eta} 
 \notag \\
 &=(2i)^{-1} \lim\limits_{a\to\infty} \int_{\mathcal Q_a} \mathrm{d}z \,  \frac{1}{\omega(r)+\lambda_0-z+i\eta}  \Big ( (e_1-z-g^2 \Gamma_{- 0} )^{-1} - 
  (e_1-z-g^2 \overline{ \Gamma_{- 0}} )^{-1}
  \Big ) 
 \notag \\ 
 &=  \frac{\pi}{\omega(r)+  \lambda_0- \big (  e_1 -g^2 \Gamma_{- 0}  \big )   +i\eta} . 
\end{align}
This together with \eqref{eq:lim102} yields that 
\begin{align}
 \lim\limits_{\eta\to 0^+} T^{(1)}_0(\eta) &=  \lim\limits_{\eta\to 0^+} \int_0^\infty \mathrm{d} r \, \frac{ i  g^2   G(r)}{    \omega(r)+  \lambda_0- \big (  e_1 -g^2 \Gamma_{- 0}  \big )   +i\eta    }
 \\ \notag 
 &= \int_0^\infty \mathrm{d} r \, \frac{ i  g^2  G(r)}{\omega(r)+  \lambda_0- \big (  e_1 -g^2 \Gamma_{- 0}  \big )   },
\end{align} 
where in the last step we applied  
the dominated convergence theorem which is justified because $G\in \mathit C_c^\infty(\R\setminus\{0\}, \C)$.
Consequently, it follows from \eqref{49} and \eqref{eq:lim101}  that
\begin{align}
\label{t1}
  T^{(1)} =&  
     i g^2 \int_0^\infty \mathrm{d} r \,     \frac{  G(r)}{\omega(r)+  \lambda_0- \big (  e_1 -g^2 \Gamma_{- 0}  \big )   }  +R_1(g, s)  ,
\end{align}
where we recall that  $|R_1(g, s)|\leq C
 g^2 ( (s^{1/2} + g^{1/2}) |\log(g)|  + gs^{-1} )$.  
Analogously, we obtain  
\begin{align}
\label{t2}
  T^{(2)} =& 
  i g^2 \int \mathrm{d} r \,       \frac{  G(r)}{\omega(r)-  \lambda_0+  \big (  e_1 -g^2 \overline{ \Gamma_{- 0}}  \big )   } +R_2(g, s)  ,
\end{align}
and $|R_2(g, s)|\leq C
 g^2 ( (s^{1/2} + g^{1/2}) |\log(g)|  + gs^{-1} )$ for some constant $C$.  Finally, we conclude from \eqref{t1} and  \eqref{t2} together with \eqref{eq:thl0tt} that 
\begin{align}
T(h,l)
= &  2 \pi i g^2 \norm{\Psi_{\lambda_0}}^{-2}  \int \mathrm{d} r \, 
   \left( \frac{  G(r)}{\omega(r)+  \lambda_0- \big (  e_1 -g^2 \Gamma_{- 0}  \big )   } - \frac{  G(r)}{\omega(r)-  \lambda_0+  \big (  e_1 -g^2 \overline{ \Gamma_{- 0}}  \big )   }\right) \notag \\ \notag &  +R(g, s)
\notag \\ 
= & 4 \pi i g^2 \norm{\Psi_{\lambda_0}}^{-2}  \int \mathrm{d} r \,    
\frac{G(r) \left( e_1-g^2\Re  \Gamma_{+ 0}     -\lambda_0 \right)}{
\left( \omega(r)+  \lambda_0- \big (  e_1 -g^2 \Gamma_{- 0}  \big ) \right)\left(\omega(r)-  \lambda_0+  \big (  e_1 -g^2 \overline{ \Gamma_{- 0}}  \big ) \right)}   \notag \\ & +R(g, s)
,
\end{align}   
where  $ R(g, s):=R_1(g, s)+R_2(g, s)$.   Hence, there is a  constant $C>0$ such that
 $|R(g, s)|\leq C
 g^2 (  (s^{1/2} + g^{1/2})  |\log(g)|   +  gs^{-1}  )$. We take $s = g^{2/3}$  and obtain that 
 $   |R(g, s)|\leq C g^2  g^{1/3}    |\log(g)|     $.
     This completes the proof.
\end{proof}

\section{Mourre Theory and the Limiting Absorption Principle}
\label{app:limab}

In this section we present a proof of  Proposition \ref{prop:absenceofacspecclose} (ii) and (iii).   
Although Mourre theory is a standard tool to prove limiting absorption principles, in this section we do not address the usual procedures because we prove perturbative  results in the spirit of \cite{ahs,fms} (see Proposition \ref{prop:absenceofacspecclose}  (iii)).  Note that in \cite{fms} an abstract family of Hamiltonians is studied.  

The main result of this section is  Proposition \ref{prop:absenceofacspecclose}  (iii). Despite the fact that  Proposition \ref{prop:absenceofacspecclose}  (ii) is standard, we also prove it because we need  it to prove Proposition \ref{prop:absenceofacspecclose}  (iii). Some other well-known estimates in the context of Mourre theory are not proven in this section -- we will give instead proper references. 

We also mention that we do not employ the original techniques of Mourre to study domain problems and commutators (see \cite{mourre,cycon}). Instead, we directly dilate the operators at stake: our approach is close to the usual one based on the theory of operators of class $C^k$ with respect to a self-adjoint conjugate operator (see \cite{bookabg,amrein}), but, in our paper, given the explicit form of the operators at stake, we do not need to rely on this theory and we give a more transparent presentation.                          

In this section we address the limiting absorption principle, i.e. we study the behavior of the resolvent operator $ (H_{\overline P}-(z\pm i \epsilon))^{-1}   $ as $\epsilon  >0 $ tends to $0$ and $z$ belongs to the interval 
\begin{align}\label{DefI}
I: =[e_1-\delta /4, e_1+\delta /4].
\end{align}
Of course, the norm of   $ (H_{\overline P}-(z\pm i \epsilon))^{-1}   $ tends to infinity as $\epsilon$ tends to zero.  Then, controlling its behavior requires   restricting its domain, and this is achieved by multiplying by the operator
\begin{align}\label{Defirho}
  \rho:=  \left\langle     \mathrm{d}\Gamma(D)  \right\rangle^{-1}.
\end{align}
Our goal is to obtain uniform norm-bounds for $ \rho  (H_{\overline P}-(z\pm i \epsilon) )^{-1}  \rho  $ and regularity properties with respect to $g $  (this is what we call above perturbative Mourre theory)  and $z$.  

Intuitively, one might consider the operator  $  H_{\overline P}-z  $ as a real quantity because it is self adjoint.  One of the clever ideas of Mourre is to add to $  H_{\overline P}-(z  \pm i \epsilon ) $  a non-zero  imaginary part of size $\eta >  0$ and  sign $\pm$  (according to $ \pm i \epsilon  $). Then, the resulting operator ( $  H^{\pm \eta}_{\overline P}-z_{\pm \epsilon }   $ -- see  \eqref{DefiHoP} below) can be intuitively regarded as a real quantity plus 
$\mp i (\epsilon + \eta)$. It is, therefore, invertible and the norm of its inverse is uniformly bounded with respect to $\epsilon$. Our goal is to study the behavior of the resolvent operator associated to $H^{\pm \eta}_{\overline P}-z_{\pm \epsilon } $ as $\epsilon $ and $\eta$ tend to zero. More precisely, the imaginary part that we refer above is given by the operator $   \mp i \eta M^2 $, where $\eta$ is a strictly positive small enough real number  and (see Lemma \ref{lemma:comestclose})
\begin{align}\label{MOURRE}
 M^2:=\chi(H_{\overline P}) [H_{\overline P},  i    \mathrm{d}\Gamma(D)]^0\chi(H_{\overline P}) \geq \alpha \chi(H_{\overline{P}})^{2},  
\end{align}  
which is a bounded operator (see Remark  \ref{DEfiCommut}).  
We properly select $\rho$ as a function of $  \mathrm{d}\Gamma(D) $ because 
$  \rho  \mathrm{d}\Gamma(D) $ is bounded. This allows us to control the unbounded operator $  \mathrm{d}\Gamma(D) $ in the above commutator. The other operator in this commutator is chosen in order to cancel resolvents (see  \eqref{COMM1} and \eqref{COMM2} below for the limiting absorption principle, and   \eqref{As} for perturbative results). 

We define the operators (for $z  \in I$)
\begin{align}\label{DefiHoP}
H^{\pm \eta }_{\overline{P} } := H_{\overline{P}}  \mp  i \eta M^2,   \qquad R^{\pm\eta}(z_{\pm\epsilon}) = \left( H^{\pm\eta}_{\overline P}-z_{\pm\epsilon}  \right)^{-1}, \quad  z_{\pm \epsilon} :=  z \pm i \epsilon . 
\end{align}
It is a standard result that $ H^{\pm\eta}_{\overline P}-z_{\pm\epsilon}   $ is invertible (with bounded inverse) -- see \cite{cycon} -- and that $R^{\pm\eta}(z_{\pm\epsilon})$ is continuous at $\eta = 0$ and derivable with respect to $\eta$, for $\eta > 0 $ small enough.  Its derivative is given by
\begin{align}
\label{Teo5i}
\mathrm{d}/\mathrm{d}\eta \, R^{\pm\eta}(z_{\pm\epsilon})    =\pm iR^{\pm\eta}(z_{\pm\epsilon})  M^2     R^{\pm \eta}(z_{\pm\epsilon}) , \qquad \forall \eta\in (0,\boldsymbol \eta).
\end{align}
For the convenience of the reader we give a proof of this in Appendix \ref{Standard}  below (see also \cite{cycon}). Moreover, if we multiply 
 $ R^{\pm\eta}(z_{\pm\epsilon}) $ by an operator that localizes the spectral region of $ H_{\overline{P}} $ far away from $z$, we get a bounded operator which satisfies:
\begin{align}\label{chibarR}
\norm{(H_{\overline P}+i)\overline \chi(H_{\overline P})R^{\pm\eta}(z_{\pm\epsilon})}\leq C,
\end{align}
where $\overline \chi=1-\chi$. This is proven in Appendix \ref{Standard} (see also \cite{cycon}).

As announced above, it follows that the norm of $ R^{\pm\eta}(z_{\pm\epsilon})   $ can be uniformly bounded (with respect to $\epsilon$).  Actually, the following estimate holds:
\begin{align}\label{TEo5iV}
\| (H_{\overline{P}} + i )  R^{\pm\eta}(z_{\pm\epsilon})  \| \leq C/\eta, 
\end{align}   
where $C$ does not depend on $z, \epsilon $ and $g$ (see \cite{cycon} and Section \ref{Standard}). 

Estimate  \eqref{TEo5iV} itself is not enough because we still have the singularity 
$C/\eta$ and we need to consider the operator $\rho$, otherwise we cannot expect to have a limiting absorption principle - this is explained above. 
For this reason, we define
\begin{align}\label{DefiF}
F^{\pm\eta}(z_{\pm\epsilon}):= \rho   R^{\pm\eta}(z_{\pm\epsilon})    \rho
\end{align} 
and  get a better estimate which is a key ingredient of Mourre theory. Note that this is the only place where the Mourre estimate (see  \eqref{MOURRE}) is used: 
\begin{align}\label{Teo5v}
\norm{  (H_{\overline P}+i)   R^{\pm\eta}(z_{\pm\epsilon})\rho}\leq C\left(1+\eta^{-1/2}\norm{ F^{\pm\eta}(z_{\pm\epsilon})}^{1/2}\right).
\end{align}
Eq.\  \eqref{Teo5v}  is a standard result (see, e.g., \cite{cycon}), but we prove it in Appendix \ref{Standard}.  Looking at  \eqref{TEo5iV} and \eqref{Teo5v}, it seams that we get again the unsatisfactory bound 
\begin{align}\label{TEoLimitingi}
\| F^{\pm\eta}(z_{\pm\epsilon}) \| \leq C/\eta. 
\end{align}
At this point, the line of reasoning becomes more subtle.  Actually, in the lines above we never use that $M^2$ is defined in terms of the commutator $   [H_{\overline P},  i    \mathrm{d}\Gamma(D)]^0 $. The only thing we utilize about 
 $M^2$ is that it satisfies the Mourre estimate \eqref{MOURRE}. All the material presented above in this section is standard and it can be directly deduced from the proofs in \cite{cycon}. Therefore, we do not include proofs of this in the present section. For the convenience of the reader we provide proofs in Appendix \ref{Standard}.

In this section we use all estimates and statements presented above (without proofs) and provide a detailed proof of  the limiting absorption principle (Proposition \ref{prop:absenceofacspecclose}-(ii)) and its perturbative version (Proposition \ref{prop:absenceofacspecclose}-(iii)).  The idea of the proof of Proposition \ref{prop:absenceofacspecclose}-(ii) (which  amounts to bound $\| F^{\pm\eta}(z_{\pm\epsilon}) \|$ by a constant) is quite simple, we just write 
$ F^{\pm\eta}(z_{\pm\epsilon}) $ as the integral of its derivative. Then, the difficult part is to estimate the referred derivative (Lemma \ref{keyest} below). This derivative consists of a sum of several terms and each of them is separately estimated. The most singular term is $Q_{1,1}$ defined in    \eqref{COMM1} below. The analysis of $Q_{1,1}$ is the  only part of the proof of Proposition \ref{prop:absenceofacspecclose}-(ii)  that requires that $M^2$ is defined in terms of the commutator $   [H_{\overline P},  i    \mathrm{d}\Gamma(D)]^0 $: 
we control the unbounded operator   $  \mathrm{d}\Gamma(D) $ using that  
$  \rho  \mathrm{d}\Gamma(D) $ is bounded and $ H_{\overline P} $ is important to cancel resolvent operators (see   \eqref{COMM2} below). 

As we mention above, the main result of this section is Proposition \ref{prop:absenceofacspecclose}-(iii). The proof of it follows the same strategy of the proof of item (ii), but it is substantially more complicated. Again, we study the terms we are interested in using that they are integrals of their derivatives.  The difficult part is to estimate the derivatives, which consist on several terms that must be analyzed separately. This is achieved in Lemma 
\ref{keyest1} below. 

Before we start with the proofs, we state two last results that we use in this section and prove in Appendix \ref{Nonstandard}: the operator
$  R^{\pm\eta}(z_{\pm\epsilon})  $ leaves the domain of   $  \mathrm{d}\Gamma(D) $ invariant. Moreover,  there is a bounded operator that we denote by 
\begin{align}\label{ComutGammaM2}
[ \mathrm{d}\Gamma(D),M^2]^0
\end{align}
that represents the quadratic form $  [ \mathrm{d}\Gamma(D),M^2] $. These results can be proved as in \cite{mourre,cycon} (defining a scale of Hilbert spaces and regularizing the generator of dilations) or \cite{bookabg,amrein} (using that the Hamiltonian is of class $C^k$ with respect to the generator of dilations). We provide a more direct proof in Appendix \ref{Nonstandard}.
\begin{remark}\label{zero}
The definitions and estimates introduced above in this section are also valid for the case $g = 0$. We distinguish this case by adding everywhere in our notations a subscript $0$. For example: 
$$M^2_0 : =  M^2|_{g= 0},  \hspace{1cm} 
 H_{0, \overline{P}} := H_{ \overline{P}}|_{g= 0}. $$        
\end{remark}
\begin{lemma}
\label{keyest}
For $g \geq 0, \boldsymbol \eta >0$ sufficiently small, 
 $\eta\in (0,\boldsymbol \eta)$,  $\epsilon\in (0,1)$, $z, z'\in I$ and $z_{\pm\epsilon}:=z\pm i\epsilon$, 
 \begin{align}
\norm{\mathrm{d}/\mathrm{d}\eta \, F^{\pm\eta}(z_{\pm\epsilon})}\leq C\left(   1+\eta^{-1/2}\norm{ F^{\pm\eta}(z_{\pm\epsilon})}^{1/2}+\norm{ F^{\pm\eta}(z_{\pm\epsilon})} \right)
.
\end{align}
\end{lemma}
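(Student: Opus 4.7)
The plan is to differentiate using \eqref{Teo5i}, which gives
\begin{align*}
\tfrac{d}{d\eta}F^{\pm\eta}(z_{\pm\epsilon}) = \pm i\,\rho\, R^{\pm\eta}(z_{\pm\epsilon})\, M^2\, R^{\pm\eta}(z_{\pm\epsilon})\,\rho,
\end{align*}
so the task reduces to bounding $\|\rho R M^2 R \rho\|$, where I write $R\equiv R^{\pm\eta}(z_{\pm\epsilon})$ for brevity. I would then exploit the commutator structure $M^2=\chi(H_{\overline P})\big(iH_{\overline P}\mathrm{d}\Gamma(D)-i\mathrm{d}\Gamma(D) H_{\overline P}\big)\chi(H_{\overline P})$ together with the ``near-cancellation'' identities
\begin{align*}
H_{\overline P} R = \mathbbm 1 + z_{\pm\epsilon}\, R \pm i\eta\, M^2 R, \qquad R H_{\overline P} = \mathbbm 1 + z_{\pm\epsilon}\, R \pm i\eta\, R M^2,
\end{align*}
which follow from $H_{\overline P}=H^{\pm\eta}_{\overline P}\pm i\eta M^2$ and the fact that $R$ is the resolvent of $H^{\pm\eta}_{\overline P}-z_{\pm\epsilon}$.

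The strategy is to commute each $H_{\overline P}$ inside the commutator past the adjacent $\chi(H_{\overline P})$ (which is immediate) and then replace it by one of the identities above. The $\mathbbm 1$-contribution is the key cancellation that eliminates one resolvent; the scalar $z_{\pm\epsilon}$-contributions from the two halves of the commutator enter with opposite signs and cancel; and the $\pm i\eta\, M^2$-contributions yield a remainder carrying an explicit prefactor $\eta$. This produces the decomposition
\begin{align*}
\rho R M^2 R \rho = Q_{1,1} + Q_\eta,
\end{align*}
with
\begin{align*}
Q_{1,1} = i\rho\,\chi(H_{\overline P})\,\mathrm{d}\Gamma(D)\,\chi(H_{\overline P})\, R\rho \;-\; i\rho R\,\chi(H_{\overline P})\,\mathrm{d}\Gamma(D)\,\chi(H_{\overline P})\,\rho,
\end{align*}
and $Q_\eta$ an $\eta$-multiple of products containing two factors of $R$, one extra factor of $M^2$, and a sandwich $\chi(H_{\overline P})\mathrm{d}\Gamma(D)\chi(H_{\overline P})$.

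To estimate $Q_{1,1}$ I would write $\rho\chi(H_{\overline P})\mathrm{d}\Gamma(D) = \rho\,\mathrm{d}\Gamma(D)\chi(H_{\overline P}) + \rho\,[\chi(H_{\overline P}),\mathrm{d}\Gamma(D)]$. The first summand is bounded because $\rho\,\mathrm{d}\Gamma(D) = \langle\mathrm{d}\Gamma(D)\rangle^{-1}\mathrm{d}\Gamma(D)$ has norm at most one; the second is bounded via a Helffer--Sj\"ostrand representation of $\chi(H_{\overline P})$ combined with the $H$-boundedness of $[H_{\overline P},\mathrm{d}\Gamma(D)]^0$ from Lemma~\ref{lemma:funcprop}. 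Consequently $\rho\chi(H_{\overline P})\mathrm{d}\Gamma(D)\chi(H_{\overline P})$ extends to a bounded operator, and the bound $\|R\rho\|\leq C(1+\eta^{-1/2}\|F^{\pm\eta}(z_{\pm\epsilon})\|^{1/2})$ -- which follows from \eqref{Teo5v} after composing with $(H_{\overline P}+i)^{-1}$ -- yields $\|Q_{1,1}\|\leq C(1+\eta^{-1/2}\|F^{\pm\eta}(z_{\pm\epsilon})\|^{1/2})$. For $Q_\eta$ I would use $\|M^2\|\leq C$ (consequence of the $H$-boundedness of $[H_{\overline P},\mathrm{d}\Gamma(D)]^0$ together with $\chi(H_{\overline P})$ mapping into $\mathcal D(H)$), the boundedness of $[\mathrm{d}\Gamma(D),M^2]^0$ from \eqref{ComutGammaM2}, and the standard Mourre resolvent bound $\|R\|\leq C/\eta$ proved in \eqref{TEo5iV}. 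After a second round of commutations that pair every unbounded $\mathrm{d}\Gamma(D)$ with a neighbouring $\rho$ (using again the boundedness of $[\chi(H_{\overline P}),\mathrm{d}\Gamma(D)]$ and $[\mathrm{d}\Gamma(D),R] = R[H^{\pm\eta}_{\overline P},\mathrm{d}\Gamma(D)]R$), the explicit prefactor $\eta$ in $Q_\eta$ absorbs a single factor of $\|R\|$, and the remaining resolvents composed with $\rho$ are controlled by \eqref{Teo5v}, producing a contribution of order $C(1+\|F^{\pm\eta}(z_{\pm\epsilon})\|)$.

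The main obstacle is the careful bookkeeping of the cancellation argument, particularly for $Q_\eta$: every unbounded factor of $\mathrm{d}\Gamma(D)$ appearing in the expansion must be routed either against an adjacent $\rho$ (using $\rho\,\mathrm{d}\Gamma(D)$ bounded) or against an adjacent resolvent via the commutator $[\mathrm{d}\Gamma(D),R] = R[H^{\pm\eta}_{\overline P},\mathrm{d}\Gamma(D)]^0 R$, and one must verify that no single term in the expansion ends up carrying a net $\eta^{-1}\|F^{\pm\eta}(z_{\pm\epsilon})\|$ coefficient -- only the milder $\eta^{-1/2}\|F^{\pm\eta}\|^{1/2}$ and $\|F^{\pm\eta}\|$ are allowed. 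Once the commutator bounds for $[\chi(H_{\overline P}),\mathrm{d}\Gamma(D)]$ and $[\mathrm{d}\Gamma(D),M^2]^0$ are in place and the invariance of $\mathcal D(\mathrm{d}\Gamma(D))$ under $R$ (recalled above the statement) is used, the estimates on $Q_{1,1}$ and $Q_\eta$ add up to the claimed inequality.
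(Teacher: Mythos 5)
Your plan diverges from the paper at the decisive step, and as described it does not close: the remainder $Q_\eta$ cannot be estimated within the allowed budget. After commuting $H_{\overline P}$ through the adjacent cutoffs inside $M^2=\chi(H_{\overline P})\,[H_{\overline P},i\,\mathrm{d}\Gamma(D)]^0\,\chi(H_{\overline P})$ and substituting $R H_{\overline P}=\mathbbm 1 + z_{\pm\epsilon}R\pm i\eta\, R M^2$ and $H_{\overline P}R=\mathbbm 1 + z_{\pm\epsilon}R\pm i\eta\, M^2 R$, the $z_{\pm\epsilon}$ terms indeed cancel and your $Q_{1,1}$ estimate is correct (it uses only $\|\rho\,\mathrm{d}\Gamma(D)\|\le 1$, the boundedness of $[\chi(H_{\overline P}),\mathrm{d}\Gamma(D)]^0$, and \eqref{Teo5v}). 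But the $\pm i\eta M^2$ residues combine to
\begin{equation*}
Q_\eta \ =\ \mp\eta\,\rho\, R\,\bigl[M^2,\ \chi(H_{\overline P})\,\mathrm{d}\Gamma(D)\,\chi(H_{\overline P})\bigr]\, R\,\rho ,
\end{equation*}
and this bracket is \emph{not} bounded. By Leibniz it equals $\chi\,[M^2,\mathrm{d}\Gamma(D)]^0\,\chi$ (bounded, Lemma~\ref{commM}) plus $[M^2,\chi]\,\mathrm{d}\Gamma(D)\,\chi + \chi\,\mathrm{d}\Gamma(D)\,[M^2,\chi]$; here $[M^2,\chi(H_{\overline P})]\ne 0$ since $M^2$ is not a function of $H_{\overline P}$, so the extra two terms carry a bare, unbounded $\mathrm{d}\Gamma(D)$ sitting against a spectral cutoff rather than against a resolvent or a $\rho$. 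Your proposed fix -- routing that $\mathrm{d}\Gamma(D)$ to the outer $\rho$'s through $[\mathrm{d}\Gamma(D),R]=-R[\mathrm{d}\Gamma(D),H^{\pm\eta}_{\overline P}]^0 R$ -- inserts a third resolvent, and the bookkeeping then yields at best
\begin{equation*}
\eta\,\|\rho R\|\,\|R\|\,\|R\rho\|\lesssim\ \eta\cdot\bigl(1+\eta^{-1/2}\|F^{\pm\eta}\|^{1/2}\bigr)\cdot\eta^{-1}\cdot\bigl(1+\eta^{-1/2}\|F^{\pm\eta}\|^{1/2}\bigr)\ \lesssim\ 1+\eta^{-1}\|F^{\pm\eta}(z_{\pm\epsilon})\| ,
\end{equation*}
which contains the forbidden $\eta^{-1}\|F\|$ (this would ruin the subsequent iteration that deduces the uniform bound on $F$ in the proof of Proposition~\ref{prop:absenceofacspecclose}~(ii)).

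The paper avoids this by expanding $\chi=1-\overline{\chi}$ \emph{before} doing any resolvent cancellation: one writes $\pm i\,\mathrm{d}F^{\pm\eta}/\mathrm{d}\eta=Q_1+Q_2+Q_3+Q_4$ with $Q_1=-\rho R\,[H_{\overline P},i\,\mathrm{d}\Gamma(D)]^0\,R\rho$ and $Q_2,Q_3,Q_4$ each carrying a factor $\overline\chi(H_{\overline P})$ directly against a resolvent, for which $(H_{\overline P}+i)\overline\chi(H_{\overline P})R^{\pm\eta}$ is uniformly bounded (\eqref{chibarR}). The commutator/double-resolvent trick is then applied only to $Q_1$, where no cutoff separates $H_{\overline P}$ from $\mathrm{d}\Gamma(D)$; consequently the $\eta$-residue is precisely $Q_{12}=\mp\eta\,\rho R\,[M^2,\mathrm{d}\Gamma(D)]^0\,R\rho$, with the genuinely bounded commutator of Lemma~\ref{commM} and only the two $\rho$-dressed resolvents, so $\eta\,\|\rho R\|\,\|R\rho\|\lesssim 1+\|F^{\pm\eta}\|$. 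If you insert the $\chi+\overline\chi$ decomposition as the first step, your estimates for the leading part and for the $\overline\chi$ pieces then go through with the tools you already cite, and the lemma follows.
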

\begin{proof}
 It follows from \eqref{MOURRE}, \eqref{Teo5i} and \eqref{DefiF} that 
\begin{align}
\label{qsplit}
\pm i\mathrm{d}/\mathrm{d}\eta \, F^{\pm\eta}(z_{\pm\epsilon})
= Q_1+Q_2+Q_3+Q_4,
\end{align}
where 
\begin{align}
Q_1:&=  -  \rho  R^{\pm\eta}(z_{\pm\epsilon})   [H_{\overline P},i  \mathrm{d}\Gamma(D)]^0      R^{\pm\eta}(z_{\pm\epsilon}) \rho
\\
Q_2:&=- \rho  R^{\pm\eta}(z_{\pm\epsilon}) \overline \chi(H_{\overline P}) [H_{\overline P},i  \mathrm{d}\Gamma(D)]^0 \overline \chi(H_{\overline P})     R^{\pm\eta}(z_{\pm\epsilon}) \rho
\\
Q_3:&= \rho  R^{\pm\eta}(z_{\pm\epsilon}) \overline \chi(H_{\overline P})  [H_{\overline P},i  \mathrm{d}\Gamma(D)]^0    R^{\pm\eta}(z_{\pm\epsilon}) \rho 
\\
Q_4:&= \rho  R^{\pm\eta}(z_{\pm\epsilon})  [H_{\overline P},i  \mathrm{d}\Gamma(D)]^0 \overline \chi(H_{\overline P})    R^{\pm\eta}(z_{\pm\epsilon}) \rho .
\end{align}
Remark \ref{DEfiCommut} and \eqref{chibarR} imply that     
\begin{align}
\label{crucial1234}
\norm{ [H_{\overline P},i \mathrm{d}\Gamma(D)]^0 \overline \chi(H_{\overline P} )R^{\pm\eta}(z_{\pm\epsilon})}\leq C.
\end{align}
This  yields that 
\begin{align}
\label{q2}
\norm{Q_2}
\leq C \norm{\rho R^{\pm\eta}(z_{\pm\epsilon}) \overline \chi(H_{\overline P})}
  \leq C ,
\end{align}
 where we use again \eqref{chibarR} (taking the adjoint).   
  Taking the adjoint in  \eqref{crucial1234}, it follows that  
\begin{align}
\label{q3}
\norm{Q_3}&
\leq  C\norm{ R^{\pm\eta}(z_{\pm\epsilon})\rho } \leq  C\left(1+\eta^{-1/2}\norm{ F^{\pm\eta}(z_{\pm\epsilon})}^{1/2}\right), 
\end{align}
 where we use \eqref{Teo5v}. 
  Similarly, taking the adjoint in  \eqref{Teo5v} we obtain that   
\begin{align}
\label{q4}
\norm{Q_4}&
\leq  C\left(1+\eta^{-1/2}\norm{ F^{\pm\eta}(z_{\pm\epsilon})}^{1/2}\right) 
  .
\end{align}
In the remainder of the proof, we estimate    $Q_1$.
 For $\phi, \psi \in \mathcal D( \mathrm{d}\Gamma(D))\cap \mathcal D(H_{\overline P})$,   Remark \ref{DEfiCommut} and  the fact that 
 $  R^{\pm\eta}(z_{\pm\epsilon})  $ leaves the domain of   $  \mathrm{d}\Gamma(D) $ invariant  (see above \eqref{ComutGammaM2})   allows us to write
\begin{align}
\label{q1split}
 \left\langle \phi, Q_{1} \psi \right\rangle =\left\langle \phi,  Q_{11}\psi \right\rangle  + \left\langle \phi, Q_{12}\psi \right\rangle  ,
\end{align}
where
\begin{align}\label{COMM1}
 \left\langle \phi, Q_{11}    \psi \right\rangle  :&= \left\langle \left( H_{\overline P}\pm i\eta M^2-z_{\mp \epsilon}  \right)  R^{\mp\eta}(z_{\mp\epsilon}) \rho  \phi,    i  \mathrm{d}\Gamma(D)      R^{\pm\eta}(z_{\pm\epsilon}) \rho\psi \right\rangle   
 \notag \\
 &-  \left\langle   \left(    - i  \mathrm{d}\Gamma(D) \right) R^{\mp\eta}(z_{\mp\epsilon}) \rho  \phi,    \left( H_{\overline P}\mp i\eta M^2-z_{\pm \epsilon}  \right)      R^{\pm\eta}(z_{\pm\epsilon}) \rho\psi \right\rangle    ,
\\
 \left\langle \phi, Q_{12} \psi \right\rangle  :&=\pm i\eta \bigg(  \left\langle  M^2   R^{\mp\eta}(z_{\mp\epsilon})  \rho \phi,     i    \mathrm{d}\Gamma(D)      R^{\pm\eta}(z_{\pm\epsilon}) \rho\psi \right\rangle
 \notag \\
 &-\left\langle  \left(-  i    \mathrm{d}\Gamma(D)   \right) R^{\mp\eta}(z_{\mp\epsilon})  \rho \phi,     M^2      R^{\pm\eta}(z_{\pm\epsilon}) \rho\psi \right\rangle   \bigg)
  .
\end{align}
Employing   that  $\norm{   \mathrm{d}\Gamma(D) \rho}\leq 1$, we find 
\begin{align}\label{COMM2}
\left| \left\langle \phi,  Q_{11} \psi \right\rangle \right| 
& = \left|    
\left\langle  \left(    - i  \mathrm{d}\Gamma(D) \right)  \rho  \phi,         R^{\pm\eta}(z_{\pm\epsilon}) \rho\psi \right\rangle   
 -  \left\langle  R^{\mp\eta}(z_{\mp\epsilon}) \rho  \phi,    i  \mathrm{d}\Gamma(D)  \rho\psi \right\rangle  
 \right|
\notag \\
 &\leq  \norm{\phi} \norm{\psi}\left( \norm{R^{\mp\eta}(z_{\mp\epsilon}) \rho}+   \norm{ R^{\pm\eta}(z_{\pm\epsilon}) \rho} \right).
\end{align}
It follows again from   \eqref{Teo5v}    that 
\begin{align}
\label{q11}
\left| \left\langle \phi,  Q_{11} \psi \right\rangle \right| &
\leq  C \norm{\phi} \norm{\psi} \left(1+\eta^{-1/2}\norm{ F^{\pm\eta}(z_{\pm\epsilon})}^{1/2}\right) 
  .
\end{align}
Furthermore, we estimate  (using again that  $  R^{\pm\eta}(z_{\pm\epsilon})  $ leaves the domain of   $  \mathrm{d}\Gamma(D) $ invariant and the text around \eqref{ComutGammaM2})  
\begin{align}\label{q12}
\left| \left\langle \phi,  Q_{12} \psi \right\rangle \right| 
&\leq \eta   \norm{\phi} \norm{\psi} \norm{ R^{\mp\eta}(z_{\mp\epsilon}) \rho}\norm{R^{\pm\eta}(z_{\pm\epsilon}) \rho} \norm{ [M^2,  \mathrm{d}\Gamma(D)]^0  }
 \\ &   \leq  
C \eta  \norm{\phi} \norm{\psi}  \left(1+\eta^{-1/2}\norm{ F^{\pm\eta}(z_{\pm\epsilon})}^{1/2}\right)^2
\notag \\
&\leq   C\norm{\phi} \norm{\psi}  \left(1+\norm{ F^{\pm\eta}(z_{\pm\epsilon})}\right)
  , 
\notag 
\end{align}
 where we use \eqref{Teo5v}. 
It follows  from \eqref{q11} together with \eqref{q12}, \eqref{q1split} and the density of $\mathcal D( \mathrm{d}\Gamma(D))\cap \mathcal D(H_{\overline P})$ in $\mathcal H$  that 
\begin{align}
\label{q1}
\norm{Q_1}\leq  C\left(1+\eta^{-1/2}\norm{ F^{\pm\eta}(z_{\pm\epsilon})}^{1/2}+\norm{ F^{\pm\eta}(z_{\pm\epsilon})}\right)  .
\end{align}
This together with \eqref{qsplit}, \eqref{q2}, \eqref{q3} and \eqref{q4} completes the proof. 
\end{proof}
\begin{proof}[Proof of Proposition \ref{prop:absenceofacspecclose} (ii)]
 Let $\eta\in (0,\boldsymbol \eta)$ (and $ \boldsymbol \eta  $ is sufficiently small). We use the fundamental theorem of calculus  
\begin{align}
 F^{\pm\eta}(z_{\pm\epsilon}) =F^{\pm \boldsymbol \eta}(z_{\pm\epsilon})  + \int^{\pm \eta}_{\pm \boldsymbol \eta}\mathrm{d}\tilde \eta \, \mathrm{d}/\mathrm{d}\tilde \eta \, F^{\pm \tilde \eta}(z_{\pm\epsilon}),
\end{align}
and \eqref{TEoLimitingi}  to obtain that    there is a constant $C(\boldsymbol \eta)>0$ such that
\begin{align}
\label{logest}
\norm{F^{\pm\eta}(z_{\pm\epsilon})}\leq\norm{F^{\pm\boldsymbol \eta}(z_{\pm\epsilon})}  + C \left| \int^{\pm \eta}_{\pm \boldsymbol \eta}\mathrm{d}\tilde \eta/\tilde \eta \right|
\leq 
 C(\boldsymbol \eta)  \left| \log \eta \right|   .
\end{align}
Inserting this in  Lemma \ref{keyest},  we obtain   
\begin{align}
\label{prelimdifferential}
\norm{\mathrm{d}/\mathrm{d}\eta \, F^{\pm\eta}(z_{\pm\epsilon})}\leq C(\boldsymbol \eta) \eta^{-1/2}  \left| \log \eta \right|   ,
\end{align}
and similarly as above, we find 
\begin{align}
\norm{F^{\pm\eta}(z_{\pm\epsilon})}\leq\norm{F^{\pm\boldsymbol \eta}(z_{\pm\epsilon})}  + C(\boldsymbol \eta) \left| \int^{\pm \eta}_{\pm \boldsymbol \eta}\mathrm{d}\tilde \eta\, \eta^{-1/2} \left| \log \eta \right|    \right| .
\end{align}
 We conclude that there is 
a constant $C(\boldsymbol \eta) >0$ such that
\begin{align}
\label{boundedF}
\norm{F^{\pm\eta}(z_{\pm\epsilon})}\leq C(\boldsymbol \eta) .
\end{align}
 Now we use the text below 
\eqref{DefiHoP} and take the limit $\eta\to 0^+$ in \eqref{boundedF}. 
  We conclude that \eqref{sup1} holds true (also \eqref{sup2}, taking $g=0$). 
   Analogously, we show \eqref{sup2}.
\end{proof}
In the remainder of this section we prove Proposition \ref{prop:absenceofacspecclose} (iii). The spirit of the proof is similar to the proof of statement (ii), however, we need additional estimates which are collected in the lemma below.  
\begin{lemma}
\label{keyest1}
For $g  \geq 0 ,\boldsymbol \eta>0$ sufficiently small, 
 $\eta\in (0,\boldsymbol \eta)$,  $\epsilon\in (0,1)$, $z, z'\in I$ and $z_{\pm\epsilon}:=z\pm i\epsilon$,   the following estimates hold true  
\begin{enumerate}
\item[(i)] \begin{align}
\norm{\mathrm{d}/\mathrm{d}\eta \, \left(   F^{\pm\eta}(z_{\pm\epsilon}) - F^{\pm\eta}(z'_{\pm\epsilon})     \right)}\leq C\eta^{-1/2}
\end{align}
\item[(ii)] \begin{align}
\norm{\mathrm{d}/\mathrm{d}\eta \, \left(   F^{\pm\eta}(z_{\pm\epsilon}) - F^{\pm\eta}(z'_{\pm\epsilon})     \right)}\leq C\eta^{-3/2}|z-z'|
\end{align}
\item[(iii)] \begin{align}
\norm{\mathrm{d}/\mathrm{d}\eta \, \left( F^{\pm\eta}(z_{\pm\epsilon})       -  F_0^{\pm\eta}(z_{\pm\epsilon})   \right)}\leq C\eta^{-3/2}g,
\end{align}
 see  Remark \ref{zero}.  
\end{enumerate}
\end{lemma}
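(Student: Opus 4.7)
All three estimates will rest on the commutator analysis of Lemma \ref{keyest} combined with suitable resolvent identities that extract the small parameters $|z-z'|$ or $g$. The starting point in each case is the identity $\pm i\,\tfrac{d}{d\eta}F^{\pm\eta}(z_{\pm\epsilon}) = \rho R^{\pm\eta}(z_{\pm\epsilon})M^2 R^{\pm\eta}(z_{\pm\epsilon})\rho$ from \eqref{Teo5i}, together with the decomposition $M^2 = \chi(H_{\overline P}) A \chi(H_{\overline P})$ with $A = [H_{\overline P}, i\mathrm{d}\Gamma(D)]^0$, and the four-piece splitting $Q_1,Q_2,Q_3,Q_4$ employed in the proof of Lemma \ref{keyest}.

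For item (i) the triangle inequality reduces matters to the single-argument case: Lemma \ref{keyest} together with the uniform bound $\|F^{\pm\eta}(z_{\pm\epsilon})\|\leq C$ supplied by Proposition \ref{prop:absenceofacspecclose}~(ii) yields $\|\tfrac{d}{d\eta}F^{\pm\eta}(z_{\pm\epsilon})\|\leq C\eta^{-1/2}$ for every $z\in I$, and adding the bound at $z'$ gives (i).

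For item (ii) I will insert the second resolvent identity $R^{\pm\eta}(z_{\pm\epsilon}) - R^{\pm\eta}(z'_{\pm\epsilon}) = (z-z')R^{\pm\eta}(z_{\pm\epsilon})R^{\pm\eta}(z'_{\pm\epsilon})$ into the difference of derivatives, arriving at
\begin{align}
\tfrac{d}{d\eta}\bigl(F^{\pm\eta}(z_{\pm\epsilon}) - F^{\pm\eta}(z'_{\pm\epsilon})\bigr) = \pm i(z-z')\Bigl[\rho R(z)M^2 R(z)R(z')\rho + \rho R(z)R(z')M^2 R(z')\rho\Bigr]. \notag
\end{align}
Each bracketed three-resolvent expression will be analysed via the same $\chi/\overline\chi$ splitting of $M^2$ as in Lemma \ref{keyest}. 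The three pieces carrying at least one factor $\overline\chi(H_{\overline P})$ can be bounded by $C\eta^{-3/2}$ or better using \eqref{chibarR} together with the auxiliary estimate $\|R^{\pm\eta}(z)R^{\pm\eta}(z')\rho\|\leq C\eta^{-3/2}$, which is an immediate consequence of \eqref{TEo5iV} and \eqref{Teo5v}. The principal piece $\rho R(z)A R(z)R(z')\rho$ will be treated by the quadratic-form commutator trick of Lemma \ref{keyest}, exploiting the cancellations $(H^{\mp\eta}_{\overline P}-z_{\mp\epsilon})R^{\mp\eta}(z_{\mp\epsilon})=1$ on the left and $(H^{\pm\eta}_{\overline P}-z_{\pm\epsilon})R^{\pm\eta}(z_{\pm\epsilon})R^{\pm\eta}(z'_{\pm\epsilon})=R^{\pm\eta}(z'_{\pm\epsilon})$ on the right. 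The surviving identity terms are of the form $\langle\rho\phi,\,i\mathrm{d}\Gamma(D)R(z)R(z')\rho\psi\rangle$ and $\langle -i\mathrm{d}\Gamma(D)R^{\mp\eta}(z_{\mp\epsilon})\rho\phi,\,R^{\pm\eta}(z'_{\pm\epsilon})\rho\psi\rangle$, estimated by pairings of the type $\|i\mathrm{d}\Gamma(D)\rho\|\cdot\|R^{\pm\eta}(z)R^{\pm\eta}(z')\rho\|\leq C\eta^{-3/2}$; the remaining correction terms acquire an extra factor $\eta$ (together with the bounded double commutator $[[H_{\overline P},i\mathrm{d}\Gamma(D)]^0,i\mathrm{d}\Gamma(D)]^0$, well defined thanks to the assumption $Df,D^2f\in L^2$) which makes them absorbable. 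Reassembling with the prefactor $(z-z')$ then produces the advertised $C\eta^{-3/2}|z-z'|$.

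For item (iii) the same scheme applies with the perturbative identity
\begin{align}
R^{\pm\eta}(z_{\pm\epsilon}) - R_0^{\pm\eta}(z_{\pm\epsilon}) = R^{\pm\eta}(z_{\pm\epsilon})\bigl(-gV_{\overline P}\pm i\eta(M^2-M_0^2)\bigr)R_0^{\pm\eta}(z_{\pm\epsilon}) \notag
\end{align}
playing the role of the second resolvent identity. The factor $-gV_{\overline P}$ will provide the prefactor $g$; the additional $\pm i\eta(M^2-M_0^2)$ contributes only terms of order $\eta g$, since $\|M^2-M_0^2\|=O(g)$ follows from Lemma \ref{lemma:specproj} applied to $\chi(H_{\overline P})-\chi(H_{0,\overline P})$ together with the identity $[H_{\overline P},i\mathrm{d}\Gamma(D)]^0-[H_{0,\overline P},i\mathrm{d}\Gamma(D)]^0 = g\,\overline P\sigma_1\otimes\Phi(Df)\overline P$ supplied by Lemma \ref{lemma:funcprop}~(v). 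The rest of the argument is the same commutator manipulation as in (ii), now delivering $C\eta^{-3/2}g$. The main technical obstacle throughout is the three-resolvent commutator estimate: the naive operator-norm bound of the principal piece yields only $O(\eta^{-2})$, and one must carefully exploit the boundedness of $i\mathrm{d}\Gamma(D)\rho$, of the double commutator above, and the key identity $\|R^{\pm\eta}(z)R^{\pm\eta}(z')\rho\|\leq C\eta^{-3/2}$ in order to save the crucial factor $\eta^{1/2}$, exactly as the Mourre estimate does in the two-resolvent setting of Lemma \ref{keyest}.
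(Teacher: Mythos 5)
Your proposal follows the same overall architecture as the paper's proof: item (i) is the triangle inequality applied to the conclusion of Lemma~\ref{keyest} via \eqref{boundedF}; items (ii) and (iii) use a resolvent identity to extract the small parameter ($|z-z'|$ or $g$) and then repeat the $Q_1,\dots,Q_4$ splitting of Lemma~\ref{keyest} on each resulting three-resolvent expression. Your perturbative decomposition for (iii) in terms of $-gV_{\overline P}\pm i\eta(M^2-M_0^2)$ is a slightly different bookkeeping from the paper's $\tilde V_\eta$, but both isolate the factor $g$ and both require $\|M^2-M_0^2\|=O(g)$ via Lemma~\ref{lemma:specproj}; this part is sound.

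There is, however, an imprecision in how you handle the principal (no-$\overline\chi$) piece, and it matters. You claim that one of the surviving terms after the quadratic-form manipulation is $\langle -i\mathrm{d}\Gamma(D)R^{\mp\eta}(z_{\mp\epsilon})\rho\phi,\,R^{\pm\eta}(z'_{\pm\epsilon})\rho\psi\rangle$ and that it is ``estimated by pairings of the type $\|i\mathrm{d}\Gamma(D)\rho\|\cdot\|R^{\pm\eta}(z)R^{\pm\eta}(z')\rho\|$''. That bound does not apply to this term, because $\mathrm{d}\Gamma(D)$ is separated from $\rho$ by a full resolvent $R^{\mp\eta}(z_{\mp\epsilon})$, and $\mathrm{d}\Gamma(D)R^{\mp\eta}\rho$ is not controlled by $\mathrm{d}\Gamma(D)\rho$ alone. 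The correct remedy -- and the one the paper uses -- is to treat the \emph{two} principal pieces coming from the telescoping identity \emph{jointly}, as in the paper's combination $W_1^{(1)}+W_1^{(2)}$ (resp.\ $A_1+A_3$ in \eqref{split101}). When you undo the commutator $[H_{\overline P},i\mathrm{d}\Gamma(D)]^0$ on both pieces simultaneously, the two ``dangling'' inner terms of the form $\langle -i\mathrm{d}\Gamma(D)R^{\mp\eta}\rho\phi,\cdot\rangle$ and $\langle\cdot,i\mathrm{d}\Gamma(D)R^{\pm\eta}\rho\psi\rangle$ either cancel (item (ii)) or recombine into the bounded commutator $[i\mathrm{d}\Gamma(D),\tilde V_\eta]^0$ sandwiched between resolvents (item (iii)), for which Lemma~\ref{commM} supplies the required $H_{\overline P}$-boundedness. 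Only the genuine boundary terms $\langle\rho\phi,i\mathrm{d}\Gamma(D)R R'\rho\psi\rangle$ and its mirror remain, and those are indeed of the form you describe and bounded by $C\eta^{-3/2}$. Your sketch mentions neither this pairing nor the cancellation; as written, the bound you quote for the dangling term is simply not available, so a careful referee would flag this step. The other auxiliary facts you invoke -- boundedness of $\mathrm{d}\Gamma(D)\rho$, the double commutator estimate relying on $D^2f\in L^2$, the $\eta$-suppressed $M^2$-correction terms, and the domain invariances from Lemma~\ref{Invariance} -- are correctly identified and correspond to the paper's argument.
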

\begin{proof}
\begin{enumerate}
\item[(i)]   It  follows from Lemma \ref{keyest} and  \eqref{boundedF}. 
\item[(iii)] 
Using the second resolvent identity,  Remark \ref{zero}, Remark  \ref{DEfiCommut} and \eqref{MOURRE}, we get  
\begin{align}
\mp i\frac{\mathrm{d}}{\mathrm{d}\eta}  \left(F^{\pm\eta}(z_{\pm\epsilon}) -F_0^{\pm\eta}(z_{\pm\epsilon}) \right)
&=\pm ig \frac{\mathrm{d}}{\mathrm{d}\eta}    \left( \rho R^{\pm\eta}(z_{\pm\epsilon})  \tilde V_{\eta}  R_0^{\pm\eta}(z_{\pm\epsilon})  \rho\right),
\end{align} 
where  (see \eqref{pl1})    
\begin{align}\label{Vtilde}
\tilde V_\eta := \sigma_1  \left(  \Phi_{\overline P}(f) \mp i \eta \chi(H_{\overline P})  \Phi_{ \overline P } (Df)  \chi(H_{\overline P}) \right).
\end{align} 
We write 
\begin{align}
\label{3qsplit'}
\mp i\frac{\mathrm{d}}{\mathrm{d}\eta}  \left(F^{\pm\eta}(z_{\pm\epsilon}) -F_0^{\pm\eta}(z_{\pm\epsilon}) \right) =g\left( W^{(1)}+W^{(2)}+W^{(3)}\right),
\end{align}
 where 
\begin{align}
&W^{(1)}:=\rho \left( \pm i \frac{\mathrm{d}}{\mathrm{d}\eta}   R^{\pm\eta}(z_{\pm\epsilon})    \right)\tilde V_{\eta}   R_0^{\pm\eta}(z_{\pm\epsilon})  \rho,
\\
&W^{(2)}:=\rho  R^{\pm\eta}(z_{\pm\epsilon}) \tilde V_{\eta}   \left( \pm i \frac{\mathrm{d}}{\mathrm{d}\eta}  R_0^{\pm\eta}(z_{\pm\epsilon})  \right)  \rho ,
\\
&W^{(3)}:=\rho  R^{\pm\eta}(z_{\pm\epsilon})     \chi(H_{\overline P})     
\Phi_{\overline P}(Df)      \chi(H_{\overline P})    R_0^{\pm\eta}(z_{\pm\epsilon}) \rho .
\end{align}
  Eqs.\ \eqref{Teo5i} and \eqref{MOURRE}  yield that
\begin{align}
\label{3qsplit}
W:=W^{(1)}+W^{(2)}&=\sum^4_{i=1}(W^{(1)}_i+W^{(2)}_i) ,
\end{align}
where 
\begin{align}
W^{(1)}_1:&=- \rho  R^{\pm\eta}(z_{\pm\epsilon})  [H_{\overline P}, i   \mathrm{d}\Gamma(D)]^0     R^{\pm\eta}(z_{\pm\epsilon}) \tilde V_{\eta}   R_0^{\pm\eta}(z_{\pm\epsilon}) \rho
\\
W^{(1)}_2:&= \rho  R^{\pm\eta}(z_{\pm\epsilon}) [H_{\overline P}, i   \mathrm{d}\Gamma(D)]^0 \overline \chi(H_{\overline P})  R^{\pm\eta}(z_{\pm\epsilon}) \tilde V_{\eta}    R_0^{\pm\eta}(z_{\pm\epsilon}) \rho
\\
W^{(1)}_3:&= \rho R^{\pm\eta}(z_{\pm\epsilon})  \overline \chi(H_{\overline P})  [H_{\overline P}, i   \mathrm{d}\Gamma(D)]^0  R^{\pm\eta}(z_{\pm\epsilon}) \tilde V_{\eta}     R_0^{\pm\eta}(z_{\pm\epsilon}) \rho 
\\
W^{(1)}_4:&=- \rho  R^{\pm\eta}(z_{\pm\epsilon})  \overline \chi(H_{\overline P})    [H_{\overline P}, i   \mathrm{d}\Gamma(D)]^0 \overline  \chi(H_{\overline P}) R^{\pm\eta}(z_{\pm\epsilon})  \tilde V_{\eta}      R_0^{\pm\eta}(z_{\pm\epsilon}) \rho 
\\
W^{(2)}_1:&=- \rho  R^{\pm\eta}(z_{\pm\epsilon}) \tilde V_{\eta}     R_0^{\pm\eta}(z_{\pm\epsilon})  [H_{0,\overline P}, i   \mathrm{d}\Gamma(D)]^0     R_0^{\pm\eta}(z_{\pm\epsilon}) \rho
\\
W^{(2)}_2:&= \rho  R^{\pm\eta}(z_{\pm\epsilon})   \tilde V_{\eta}     R_0^{\pm\eta}(z_{\pm\epsilon}) [H_{0,\overline P}, i   \mathrm{d}\Gamma(D)]^0 \overline \chi(H_{0,\overline P}) R_0^{\pm\eta}(z_{\pm\epsilon}) \rho
\\
W^{(2)}_3:&= \rho R^{\pm\eta}(z_{\pm\epsilon}) \tilde V_{\eta}    R_0^{\pm\eta}(z_{\pm\epsilon})  \overline \chi(H_{0,\overline P})  [H_{0,\overline P}, i   \mathrm{d}\Gamma(D)]^0   R_0^{\pm\eta}(z_{\pm\epsilon}) \rho 
\\
W^{(2)}_4:&=- \rho  R^{\pm\eta}(z_{\pm\epsilon})\tilde V_{\eta}     R_0^{\pm\eta}(z_{\pm\epsilon})   \overline \chi(H_{0,\overline P})    [H_{0,\overline P}, i   \mathrm{d}\Gamma(D)]^0 \overline  \chi(H_{0,\overline P})   R_0^{\pm\eta}(z_{\pm\epsilon}) \rho .
\end{align}
We observe from \eqref{3qsplit'} that in order to complete the proof of statement (iii) it suffices to show that
\begin{align}
\label{sufficient}
\norm{W}\leq C\eta^{-3/2} \qquad \text{and} \qquad  \norm{W^{(3)}}\leq C\eta^{-3/2}  .
\end{align}
 It follows from   Proposition \ref{PTOPNDJO}, \eqref{Teo5v} and similar estimates that  
that 
\begin{align}
\norm{\tilde V_\eta R^{\pm\eta}(z_{\pm\epsilon}) \rho }
&\leq  \norm{ \tilde V_{\eta}  (H_{f,\overline P}+i)^{-1}}\norm{(H_{f,\overline P}+i) (H_{\overline P}+i)^{-1} } \norm{(H_{\overline P}+i) R^{\pm\eta}(z_{\pm\epsilon}) \rho }
\notag \\
&\leq C \left(1+\eta^{-1/2}\norm{ F^{\pm\eta}(z_{\pm\epsilon})}^{1/2}\right)   
\end{align}
and similarly, using the adjoint operator, we find
\begin{align}
 &\norm{\rho R^{\pm\eta}(z_{\pm\epsilon}) \tilde V_\eta }\leq C\left(1+\eta^{-1/2}\norm{ F^{\pm\eta}(z_{\pm\epsilon})}^{1/2}\right).
\end{align}
  This and   \eqref{boundedF}  imply that   
\begin{align}
\label{3crucialestimate}
\norm{\tilde V_\eta R^{\pm\eta}(z_{\pm\epsilon}) \rho } \leq C \eta^{-1/2}   \qquad \text{and}\qquad \norm{\rho R^{\pm\eta}(z_{\pm\epsilon}) \tilde V_\eta }\leq C\eta^{-1/2} .
\end{align}
   Using additionally   \eqref{crucial1234}, we get 
\begin{align}
\label{W12}
\norm{W^{(1)}_2}\leq \norm{\rho  R^{\pm\eta}(z_{\pm\epsilon}) }\norm{[H_{\overline P}, i   \mathrm{d}\Gamma(D)]^0 \overline \chi(H_{\overline P})    R^{\pm\eta}(z_{\pm\epsilon})  }\norm{ \tilde V_{\eta}R_0^{\pm\eta}(z_{\pm\epsilon}) \rho}\leq C\eta^{-1} .
\end{align}
 Eqs.\ \eqref{3crucialestimate},   \eqref{chibarR} and \eqref{TEo5iV},    and the fact that  $[H_{\overline P}, i   \mathrm{d}\Gamma(D)]^0$ is  $H_{\overline P}$-bounded  (see Remark \ref{DEfiCommut}) imply that  
\begin{align}
\label{W13}
\norm{W^{(1)}_3}\leq  \norm{ R^{\pm\eta}(z_{\pm\epsilon})  \overline \chi(H_{\overline P})}\norm{  [H_{\overline P}, i   \mathrm{d}\Gamma(D)]^0  R^{\pm\eta}(z_{\pm\epsilon}) }\norm{  \tilde V_{\eta} R_0^{\pm\eta}(z_{\pm\epsilon}) \rho }\leq C\eta^{-3/2}.
\end{align}
Moreover, we  obtain  from \eqref{3crucialestimate}, 
\eqref{chibarR}  and  \eqref{crucial1234}  that 
\begin{align}
\label{W14}
\norm{W^{(1)}_4}
&\leq \norm{ \rho  R^{\pm\eta}(z_{\pm\epsilon})  \overline \chi(H_{\overline P})}\norm{    [H_{\overline P}, i   \mathrm{d}\Gamma(D)]^0 \overline  \chi(H_{\overline P}) R^{\pm\eta}(z_{\pm\epsilon})   } \norm{ \tilde V_{\eta} R_0^{\pm\eta}(z_{\pm\epsilon}) \rho}
\notag \\
&\leq C   \eta^{-1/2}   .
\end{align}
Analogously,   we deduce  that
\begin{align}
\label{W2easy}
\norm{W^{(2)}_2},\norm{W^{(2)}_3},\norm{W^{(2)}_4}\leq C\eta^{-3/2}.
\end{align}
Next, we estimate the terms $W^{(1)}_1$ and $W^{(2)}_1$.  
For $\phi,\psi\in\mathcal D( \mathrm{d}\Gamma(D))\cap \mathcal D(H_{\overline P})$, we find 
\begin{align}
\label{3q1split'}
\left\langle \phi,  \left( W^{(1)}_{1}+ W^{(2)}_{1}\right) \psi \right\rangle &= A_1+A_2+A_3+A_4 
  ,
\end{align}
where 
\begin{align}\label{As}
 A_1:=& - \left\langle  \left(  H^{\mp \eta}_{\overline P} -z_{\mp \epsilon}  \right) R^{\mp\eta}(z_{\mp\epsilon}) \rho \phi,     i   \mathrm{d}\Gamma(D)     R^{\pm\eta}(z_{\pm\epsilon})  \tilde V_{\eta}   R_0^{\pm\eta}(z_{\pm\epsilon}) \rho \psi \right\rangle 
\notag \\
&+\left\langle \left( - i   \mathrm{d}\Gamma(D) \right) R^{\mp\eta}(z_{\mp\epsilon}) \rho \phi,         \left(  H^{\pm \eta}_{\overline P} -z_{\pm \epsilon}  \right) R^{\pm\eta}(z_{\pm\epsilon})  \tilde V_{\eta}   R_0^{\pm\eta}(z_{\pm\epsilon}) \rho \psi \right\rangle   \notag 
,
\\
\notag  A_2 :=&\mp i\eta \bigg( \left\langle  M^2 R^{\mp\eta}(z_{\mp\epsilon})  \rho \phi,    i   \mathrm{d}\Gamma(D)    R^{\pm\eta}(z_{\pm\epsilon})   \tilde V_{\eta}   R_0^{\pm\eta}(z_{\pm\epsilon}) \rho \psi  \right\rangle
\notag \\ \notag
&- \left\langle  \left( - i   \mathrm{d}\Gamma(D) \right) R^{\mp\eta}(z_{\mp\epsilon})  \rho \phi,   M^2    R^{\pm\eta}(z_{\pm\epsilon}) \tilde V_{\eta}   R_0^{\pm\eta}(z_{\pm\epsilon}) \rho \psi \right\rangle \bigg),
\\
\notag A_3 :=& - \left\langle \left( H^{\mp \eta}_{0,\overline P} -z_{\mp \epsilon} \right) R_0^{\mp\eta}(z_{\mp\epsilon})    (\tilde V_{\eta})^*   R^{\mp\eta}(z_{\mp\epsilon}) \rho \phi ,    i   \mathrm{d}\Gamma(D)        R_0^{\pm\eta}(z_{\pm\epsilon}) \rho \psi \right\rangle 
\notag \\ \notag
&+ \left\langle \left( - i   \mathrm{d}\Gamma(D) \right) R_0^{\mp\eta}(z_{\mp\epsilon}) (\tilde V_{\eta})^*   R^{\mp\eta}(z_{\mp\epsilon}) \rho \phi ,   \left( H^{\pm \eta}_{0,\overline P} -z_{\pm \epsilon}       \right) R_0^{\pm\eta}(z_{\pm\epsilon}) \rho \psi \right\rangle 
,
\\
A_4 :=&\mp i\eta \bigg( \left\langle  M^2 R_0^{\mp\eta}(z_{\mp\epsilon}) (\tilde V_{\eta})^*   R^{\mp\eta}(z_{\mp\epsilon}) \rho \phi  ,    i   \mathrm{d}\Gamma(D)       R^{\pm\eta}(z_{\pm\epsilon}) \rho\psi \right\rangle
\notag\\
&-\left\langle  \left( - i   \mathrm{d}\Gamma(D) \right)R_0^{\mp\eta}(z_{\mp\epsilon})  (\tilde V_{\eta})^*   R^{\mp\eta}(z_{\mp\epsilon}) \rho \phi  ,   M^2      R^{\pm\eta}(z_{\pm\epsilon}) \rho\psi \right\rangle
\bigg)
  .
\end{align}
This is possible because  $\rho$ maps the Hilbert space $\mathcal{H}$ into the domain of $\mathrm{d}\Gamma(D)$  and --  by Lemma \ref{Invariance} --   $R^\pm( z_\pm\epsilon ), \:   (\tilde V_{\eta})^*   R^{\mp\eta}(z_{\mp\epsilon})  $  and  $   V_{\eta}   R^{\pm\eta}(z_{\pm\epsilon})    $  preserve   the domain of $\mathrm{d}\Gamma(D)$ (see above   \eqref{ComutGammaM2} --  this holds true also for $g=0$, see Remark \ref{zero}).  
We estimate
\begin{align}
 |A_2|  &\leq \eta \norm{\phi} \norm{\psi}\norm{\tilde V_{\eta}   R_0^{\pm\eta}(z_{\pm\epsilon}) \rho }\norm{R^{\pm\eta}(z_{\pm\epsilon}) }\norm{  R^{\mp\eta}(z_{\mp\epsilon})\rho }  \norm{ [M^2,  \mathrm{d}\Gamma(D)]^0  }
  .
\end{align}
 Eqs.\ \eqref{3crucialestimate},  \eqref{boundedF}, \eqref{Teo5v} and \eqref{TEo5iV}   imply that  
\begin{align}
\label{Wno}
|A_2| &\leq C\norm{\phi} \norm{\psi} \eta^{-1}
  ,
\end{align}
and analogously, we find
\begin{align}
\label{W'}
 |A_4| &\leq C\norm{\phi} \norm{\psi} \eta^{-1}
  .
\end{align}
As we argue above,    Lemma \ref{Invariance}  implies that    $R^\pm( z_\pm\epsilon ), \:   (\tilde V_{\eta})^*   R^{\mp\eta}(z_{\mp\epsilon})  $  and  $   V_{\eta}   R^{\pm\eta}(z_{\pm\epsilon})    $  preserve  the domain of $\mathrm{d}\Gamma(D)$  (see above   \eqref{ComutGammaM2} -  this holds true also for $g=0$, see Remark \ref{zero}).    
 Moreover,   the quadratic form  $[i \mathrm{d}\Gamma(D) ,\tilde V_{\eta} ]$ is represented by a $H_{\overline P}$-bounded operator that we denote by $[i \mathrm{d}\Gamma(D) ,\tilde V_{\eta} ]^0$ (see Lemma \ref{commM}). We obtain that  
\begin{align}
\label{split101}
A_1+A_3
=  & - \left\langle \left(-  i   \mathrm{d}\Gamma(D)   \right)  \rho \phi,      R^{\pm\eta}(z_{\pm\epsilon})\tilde V_{\eta}   R_0^{\pm\eta}(z_{\pm\epsilon}) \rho \psi\right\rangle 
\notag \\
&+ \left\langle  R^{\mp\eta}(z_{\mp\epsilon}) \rho  \phi, [\left(  i   \mathrm{d}\Gamma(D)   \right),  \tilde V_{\eta} ]^0  R_0^{\pm\eta}(z_{\pm\epsilon}) \rho \psi \right\rangle 
\notag  \\
&+ \left\langle   R_0^{\mp\eta}(z_{\mp\epsilon})  (\tilde V_{\eta})^*   R^{\mp\eta}(z_{\mp\epsilon}) \rho \phi  ,    i   \mathrm{d}\Gamma(D)  \rho \psi \right\rangle 
.
\end{align}   
It follows from \eqref{Teo5v},  \eqref{boundedF} and the fact that $[i \mathrm{d}\Gamma(D) ,\tilde V_{\eta} ]^0$ is 
 $H_{\overline P}$-bounded (see Lemma \ref{commM}) that 
\begin{align}\label{midle}
\Big | \left\langle  R^{\mp\eta}(z_{\mp\epsilon}) \rho  \phi, [\left(  i   \mathrm{d}\Gamma(D)   \right),  \tilde V_{\eta} ]^0  R_0^{\pm\eta}(z_{\pm\epsilon}) \rho \psi \right\rangle  \Big | \leq  \norm{\phi}\norm{\psi}\eta^{-1}.
\end{align}
We   obtain  from \eqref{3crucialestimate}  and \eqref{TEo5iV} that  
\begin{align}
 &\left|  \left\langle \left(-  i   \mathrm{d}\Gamma(D)   \right)  \rho \phi,      R^{\pm\eta}(z_{\pm\epsilon})  \tilde V_{\eta} R_0^{\pm\eta}(z_{\pm\epsilon}) \rho \psi \right\rangle    \right|
\leq C \norm{\phi}\norm{\psi}\eta^{-3/2},
\\
& \left| \left\langle   R_0^{\mp\eta}(z_{\mp\epsilon}) \tilde V_{\eta}^*    R^{\mp\eta}(z_{\mp\epsilon})  \rho  \phi,    i   \mathrm{d}\Gamma(D)  \rho \psi \right\rangle  \right|
\leq C \norm{\phi}\norm{\psi}\eta^{-3/2}.
\end{align}
This together with \eqref{split101} and  \eqref{midle}  yield  that 
\begin{align}
\label{Wsum'}
 |A_1+A_3| \leq C \norm{\phi}\norm{\psi}\eta^{-3/2}.
\end{align}
 It follows from \eqref{Wsum'}, \eqref{Wno}, \eqref{W'} and \eqref{3q1split'} that 
\begin{align}
\label{Wsum}
\norm{W^{(1)}_{1}+W^{(2)}_{1}}\leq C\eta^{-3/2}.
\end{align}
Collecting \eqref{3qsplit},  \eqref{W12}, \eqref{W13}, \eqref{W14}, \eqref{W2easy} and   \eqref{Wsum}, we   deduce  that 
\begin{align}
\label{sufficient1}
\norm{W}\leq C\eta^{-3/2}.
\end{align}
 Eqs.\ \eqref{boundedF} and \eqref{Teo5v}  together with the $H_{0,\overline P}$-boundedness of $\Phi_{\overline{P}}(Df)$ yield that   
\begin{align}
\norm{W^{(3)}}\leq C\eta^{-1}.
\end{align}
This together with \eqref{sufficient1} imply that \eqref{sufficient} holds true and, thereby, we  complete the proof of Item  (iii).
\item[(ii)]
The proof of Item  (ii) follows the same line of arguments as the proof of Item (iii). In fact, it is simpler since the term $\tilde V_\eta$ does not appear.
\end{enumerate}
\end{proof}
\begin{proof}[Proof of Proposition \ref{prop:absenceofacspecclose} (iii)]
   We estimate, for $z,z'\in I $,  
\begin{align}
\label{steps}
\norm{F^{0}(z'_{\pm\epsilon})- F^{0}_0(z_{\pm\epsilon})}\leq \norm{F^{0}(z'_{\pm\epsilon})- F^{0}(z_{\pm\epsilon}) }+  \norm{F^{0}(z_{\pm\epsilon})- F^{0}_0(z_{\pm\epsilon}) } .
\end{align}
Hence, it suffices to show that 
\begin{align}
\label{stepz}
 \norm{F^{0}(z'_{\pm\epsilon})- F^{0}(z_{\pm\epsilon}) }\leq C |z-z'|^{1/2},
\end{align}
and 
\begin{align}
\label{stepg}
 \norm{F^{0}(z_{\pm\epsilon})- F_0^{0}(z_{\pm\epsilon}) }\leq C g^{1/2}.
\end{align}
 In the remainder of the proof we show \eqref{stepz} and \eqref{stepg}. We start with the first estimate and obtain for $\tilde \eta \in (0,\boldsymbol \eta)$
 \begin{align}
\label{continz}
& F^{0}(z'_{\pm\epsilon})-F^{0}(z_{\pm\epsilon})  
=-\int^{\tilde \eta}_{0}\mathrm{d}\eta \, \frac{d}{d\eta} ( F^{\eta}(z'_{\pm\epsilon})-F^{\eta}(z_{\pm\epsilon})  )
\notag \\
&-\int_{\tilde \eta}^{\boldsymbol \eta}\mathrm{d}\eta \, \frac{d}{d\eta} ( F^{\eta}(z'_{\pm\epsilon})-F^{\eta}(z_{\pm\epsilon})  )
+ F^{\boldsymbol \eta}(z'_{\pm\epsilon})-F^{\boldsymbol \eta}(z_{\pm\epsilon})  .
\end{align}
It follows from Lemma \ref{keyest1} (i) that 
\begin{align}
\label{zo}
\norm{\int^{\tilde \eta}_{0}\mathrm{d}\eta \, \frac{d}{d\eta} ( F^{\eta}(z'_{\pm\epsilon})-F^{\eta}(z_{\pm\epsilon})  )}\leq C \tilde \eta^{1/2}.
\end{align}
Moreover, it follows from  Lemma \ref{keyest1} (ii) that 
\begin{align}
\label{z1}
\norm{\int_{\tilde \eta}^{\boldsymbol \eta}\mathrm{d}\eta \, \frac{d}{d\eta} ( F^{\eta}(z'_{\pm\epsilon})-F^{\eta}(z_{\pm\epsilon})  )}\leq C |z-z'| \tilde \eta^{-1/2} ,
\end{align}
and it follows from the resolvent identity   that  there is a constant $C(\boldsymbol \eta)>0$ such that 
\begin{align}
\label{z2}
\norm{ F^{\boldsymbol \eta}(z'_{\pm\epsilon})-F^{\boldsymbol \eta}(z_{\pm\epsilon}) } \leq C(\boldsymbol \eta ) |z-z'| .
\end{align}  
Note that, in principle, the constant $C(\boldsymbol \eta )$ could depend on $\epsilon$ and $z$. However, this is not the case,  see \eqref{TEo5iV}   
Choosing $\tilde \eta =|z-z'|^{1/2} $, we  get   \eqref{stepz} from 
    \eqref{continz} -- \eqref{z2}.   Eq.\  \eqref{stepg} can be proven analogously  employing  item (iii) of  Lemma \ref{keyest1} instead of   item   (ii). 
\end{proof}

\begin{appendix}

\section{Construction of the ground state}
\label{app:gs-proof}
In this chapter, we construct the ground  state of the Hamiltonian $H$ and provide a proof for Proposition \ref{prop:gs}.
First of all, for $0<r<r'<\infty$ and $w\in\C$, we introduce the notation for the open   annulus  in the complex plane: 
\begin{align}
\label{def:disc}
D(r,r',w)=\left\{z\in \C : r<|z-w|<r' \right\} .
\end{align}
\begin{lemma}
\label{lemma:resestneargs}
Let $g>0$ be small enough. Then,  $H-z$ is invertible for all  $z\in \overline{D(m/4,m/2,0)}$  (defined in \eqref{def:disc}) and
\begin{align}
\norm{(H-z)^{-1}}\leq 2 \norm{(H_0-z)^{-1}}\leq 8/m \qquad \forall z\in  \overline{D(m/4,m/2,0)}  .
\end{align}
\end{lemma}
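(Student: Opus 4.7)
The plan is to reduce this to a standard Neumann-series perturbation argument around the free resolvent, with the main work being a clean estimate on $\|(H_0-z)^{-1}\|$ on the annulus and a uniform bound on $\|gV(H_0-z)^{-1}\|$.

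First, I would compute $\sigma(H_0)$ and the distance from the annulus $\overline{D(m/4,m/2,0)} = \{z\in\C : m/4 \leq |z| \leq m/2\}$ to it. Decomposing $\mathcal H = (\C\varphi_0 \otimes \mathcal F) \oplus (\C\varphi_1 \otimes \mathcal F)$ diagonalizes $H_0$ as $H_f \oplus (e_1 + H_f)$, whose joint spectrum is $\{0,e_1\} \cup [m,\infty) \cup [e_1+m,\infty)$. Since $e_0 = 0 < m < e_1$ (Assumption~\ref{as}), the only spectral point close to the annulus is $0$, and the distance is exactly $m/4$. By self-adjointness of $H_0$ and the spectral theorem, this gives $\|(H_0-z)^{-1}\| \leq 4/m$ on the annulus, proving the second inequality.

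Next, by Proposition~\ref{PTOPNDJO}, $V$ is infinitesimally $H_0$-bounded, so for every $\psi\in\mathcal D(H_0)$ one has $\|V\psi\|\leq C(\|H_0\psi\| + \|\psi\|)$. Applying this to $\psi = (H_0-z)^{-1}\phi$ and using $\|H_0(H_0-z)^{-1}\phi\| \leq \|\phi\| + |z|\|(H_0-z)^{-1}\phi\|$, together with $|z|\leq m/2$ and $\|(H_0-z)^{-1}\|\leq 4/m$, I get
\begin{equation*}
\|gV(H_0-z)^{-1}\| \leq C'g
\end{equation*}
uniformly in $z\in\overline{D(m/4,m/2,0)}$, where $C'$ depends only on $m$ and the constant from Proposition~\ref{PTOPNDJO}. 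Choosing $g$ small enough so that $C'g\leq 1/2$, the Neumann series gives that $\mathbbm 1 + gV(H_0-z)^{-1}$ is invertible with inverse of norm at most $2$.

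Finally, on $\mathcal D(H) = \mathcal D(H_0)$ one has the factorization $H-z = (\mathbbm 1 + gV(H_0-z)^{-1})(H_0-z)$, so
\begin{equation*}
(H-z)^{-1} = (H_0-z)^{-1}\bigl(\mathbbm 1 + gV(H_0-z)^{-1}\bigr)^{-1},
\end{equation*}
which is bounded with norm at most $2\|(H_0-z)^{-1}\|\leq 8/m$, completing the proof. The only mildly delicate step is turning infinitesimal $H_0$-boundedness into a uniform bound on $\|gV(H_0-z)^{-1}\|$ on the annulus; everything else is bookkeeping.
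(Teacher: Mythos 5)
Your proof is correct and follows essentially the same route as the paper: bound $\|(H_0-z)^{-1}\|$ by $4/m$ via the spectral distance from the annulus to $\sigma(H_0)=\{0\}\cup[m,\infty)$, show $\|gV(H_0-z)^{-1}\|\leq Cg\leq 1/2$ uniformly on the annulus using the relative bound from Proposition~\ref{PTOPNDJO}, and conclude via the factorization $H-z=(\mathbbm 1+gV(H_0-z)^{-1})(H_0-z)$ and a Neumann series. The only cosmetic difference is that you go through the relative bound $\|V\psi\|\leq C(\|H_0\psi\|+\|\psi\|)$ directly, whereas the paper factors through $\|V(H_0+1)^{-1}\|\cdot\|(H_0+1)(H_0-z)^{-1}\|$; both yield the same $O(g)$ estimate.
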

\begin{proof}
First of all, note that $\sigma(H_0)=\{0\}\cup [m,\infty)$. This implies that 
\begin{align}
\text{dist}\left(   D(m/4,m/2,0) , \sigma(H_0) \right)\geq 4/m, 
\end{align}
and hence, $H_0-z$ is invertible for all $z\in D(m/4,m/2,0) $, and for those $z$, we have
\begin{align}
\label{eq:stand+}
\norm{(H_0-z)^{-1}}\leq 4/m.
\end{align}
Moreover, it follows from the standard estimate in  Proposition~\ref{PTOPNDJO}   that
\begin{align}
\norm{V(H_0+1)^{-1}}\leq C,
\end{align}
and hence,  we obtain for all $z\in D(m/4,m/2,0) $
\begin{align}
\norm{V(H_0-z)^{-1}}\leq\norm{V(H_0+1)^{-1}}\norm{\frac{H_0+1}{H_0-z}}\leq C \sup_{y\geq 0}\left|\frac{y+1}{y-z}\right| \leq C(3+4/m) .
\end{align}
Consequently, for $g>0$ sufficiently small, we   find 
\begin{align}
\label{C7}
\norm{V(H_0-z)^{-1}} \leq Cg   \leq  1/2 ,
\end{align}
and hence, 
\begin{align}
H-z=(1+gV(H_0-z)^{-1})(H_0-z)
\end{align}
is invertible 
for all  $z\in D(m/4,m/2,0) $ and the resolvent fulfills
\begin{align}
\norm{(H-z)^{-1}}\leq 2 \norm{(H_0-z)^{-1}}\leq 8/m .
\end{align}
\end{proof}
\begin{definition}
\label{def:proj}
We define the contour 
\begin{align}
\zeta: [0,2\pi] \to \C, \qquad \varphi\mapsto \zeta(t):= m/4 e^{it} .
\end{align}  
Furthermore, we define the projections 
\begin{align}
\label{eq:projat}
P_{0,\text{at}}:= (-2\pi i)^{-1} \oint_{\zeta} \mathrm{d}z \,(H_0-z)^{-1}= P_{\varphi_0} \otimes P_\Omega
\end{align}
and
\begin{align}
P_{0}:= (-2\pi i)^{-1} \oint_{\zeta} \mathrm{d}z \, (H-z)^{-1}.
\end{align}
Here, $P_{\varphi_0}$  denotes the projection onto $\varphi_0$ and $P_{\Omega}$  the projection onto the vacuum $\Omega\in\mathcal F[\mathfrak{h}]$. The  equality in \eqref{eq:projat} can be seen by a direct calculation.
\end{definition}
\begin{lemma}
\label{lemma:proj}
Let $g>0$ be small enough and Assumption \ref{as} hold true. Then, we find 
\begin{align}
\norm{P_{0}-P_{0,\text{at}}}\leq gC< 1 .
\end{align}
\end{lemma}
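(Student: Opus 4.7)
The plan is to express the difference $P_0-P_{0,\text{at}}$ as a contour integral of the resolvent difference and then bound the integrand uniformly on $\zeta$ using the estimates already established in Lemma~\ref{lemma:resestneargs}. Concretely, by linearity of the contour integral,
\begin{align}
P_0-P_{0,\text{at}} = -\frac{1}{2\pi i}\oint_\zeta \mathrm{d}z \,\bigl[(H-z)^{-1}-(H_0-z)^{-1}\bigr].
\end{align}
The second resolvent identity then yields
\begin{align}
(H-z)^{-1}-(H_0-z)^{-1} = -g\,(H-z)^{-1}V(H_0-z)^{-1},
\end{align}
valid on the contour $\zeta$, since Lemma~\ref{lemma:resestneargs} guarantees that both resolvents exist and are bounded there.

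Next, I would collect the ingredients from the proof of Lemma~\ref{lemma:resestneargs}: on $\zeta$ one has $\|(H-z)^{-1}\|\leq 8/m$, while the estimate \eqref{C7} (which comes from the $H_0$-infinitesimal boundedness of $V$ in Proposition~\ref{PTOPNDJO} and $\|(H_0-z)^{-1}\|\leq 4/m$) provides $\|V(H_0-z)^{-1}\|\leq C$ uniformly in $z\in\zeta$. Combining these,
\begin{align}
\bigl\|(H-z)^{-1}-(H_0-z)^{-1}\bigr\| \leq g\cdot \frac{8}{m}\cdot C,\qquad z\in\zeta.
\end{align}

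Finally, the length of the contour $\zeta$ is $2\pi\cdot m/4$, so the standard ML-inequality gives
\begin{align}
\|P_0-P_{0,\text{at}}\|\leq \frac{1}{2\pi}\cdot 2\pi\cdot\frac{m}{4}\cdot g\cdot\frac{8C}{m} = gC',
\end{align}
with $C'$ independent of $g$. Choosing $g$ small enough (absorbable into the smallness condition already invoked to apply Lemma~\ref{lemma:resestneargs}), we obtain $gC'<1$, which is the claim. There is no real obstacle here: the argument is entirely standard Kato-type perturbation theory of spectral projections, and Assumption~\ref{as} plays no active role in this particular lemma beyond having already been used to set up the framework.
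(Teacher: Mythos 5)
Your proposal is correct and follows the same route as the paper: write the difference of projections as a contour integral of the resolvent difference, apply the second resolvent identity, and bound the integrand by combining the resolvent estimate of Lemma~\ref{lemma:resestneargs} with the $g$-independent bound on $V(H_0-z)^{-1}$ from the proof of that lemma (what the paper labels \eqref{C7}). Aside from writing out the ML-inequality with the contour length explicitly, this is exactly the paper's argument.
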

\begin{proof}
It follows from Definition \ref{def:proj} that
\begin{align}
\norm{P_{0}-P_{0,\text{at}}}&\leq (2\pi )^{-1} \int_{0}^{2\pi} \mathrm{d}t\, \norm{ (H-m/4e^{it})^{-1}- (H_0-m/4e^{it})^{-1}}
\notag \\
 &\leq g\sup_{t\in [0,2\pi]}\norm{ (H-m/4e^{it})^{-1}}\norm{V (H_0-m/4e^{it})^{-1}}
,
\end{align}
where we used the resolvent identity in the second step. This together with  \eqref{C7}  and Lemma \ref{lemma:resestneargs} completes the proof.
\end{proof}
\begin{proof}[Proof of Proposition \ref{prop:gs}]
Clearly, $P_{0,\text{at}}=P_{\varphi_0} \otimes P_\Omega$ is a rank-one projection, and hence, it follows from Lemma \ref{lemma:proj} that also $P_0$ is a rank-one projection.
 Consequently, the self-adjoint operator $H$ has exactly one  eigenvalue in $(-m/4,m/4)$ which we call $\lambda_0$ and $\Psi_{\lambda_0}:=P_0 \varphi_0\otimes \Omega\in \mathcal H$ is  nonzero   and fulfills $H\Psi_{\lambda_0}=\lambda_0\Psi_{\lambda_0}$. 
 
In the remainder of the proof we compute $\lambda_0$ up to second order in $g$.
\begin{align}
\label{calc1}
(\lambda_0-e_0)  \left|   \left\langle \phi_0,  P_0 \phi_0\right\rangle \right|   =\left\langle \phi_0, (H-H_0) P_0 \phi_0\right\rangle=g\left\langle \phi_0, V P_0 \phi_0\right\rangle ,
\end{align}
where we have introduced the notation $\phi_i=\varphi_i\otimes \Omega$ for $i=0,1$.
Moreover, the resolvent identity yields that
\begin{align}
&\left\langle \phi_0,V(H-z)^{-1}\phi_0\right\rangle =
\left\langle \phi_0, V(H_0-z)^{-1}\phi_0\right\rangle -g \left\langle \phi_0,V(H_0-z)^{-1}V(H_0-z)^{-1}\phi_0\right\rangle
\notag \\
&+g^2\left\langle \phi_0,V(H_0-z)^{-1}V(H_0-z)^{-1}V(H_0-z)^{-1} \phi_0\right\rangle
\notag \\
&-g^3\left\langle \phi_0,V(H-z)^{-1}V(H_0-z)^{-1}V(H_0-z)^{-1} V(H_0-z)^{-1} \phi_0\right\rangle .
\end{align}
Note that the even orders of $g$ vanish due to symmetry and recall from \eqref{eq:stand+} that   $\norm{V(H_0-z)^{-1}}\leq C$. This implies that
\begin{align}
\norm{V(H-z)^{-1}}\leq   \norm{(H_0-z)(H-z)^{-1}}  \norm{V(H_0-z)^{-1}}\leq C (1+gC) .
\end{align}
Consequently, we obtain
\begin{align}
&\left\langle \phi_0,V(H-z)^{-1}\phi_0\right\rangle =
 -g (e_0-z)^{-1}\left\langle a(f)^*\phi_1,(H_0-z)^{-1}a(f)^*\phi_1\right\rangle
+\tilde R_0(g) 
\notag \\
&= -g (e_0-z)^{-1}  \int \mathrm{d}^3k\, |f(k)|^2 (e_1+\omega(k)-z)^{-1}+\tilde R_0(g) 
 ,
\end{align}
where  $|\tilde R_0(g)|\leq Cg^3$. Then, it follows from \eqref{calc1} together with Definition \ref{def:proj} that
\begin{align}
\lambda_0=e_0-g^2\Gamma_0+R_0(g)
\end{align}
where $R_0(g)=g   \left|   \left\langle \phi_0,  P_0 \phi_0\right\rangle \right|^{-1}  \tilde R_0(g)$ and 
\begin{align}
\Gamma_0:=(-2\pi i)^{-1} \oint_\zeta\mathrm{d}z \, (e_0-z)^{-1} \int \mathrm{d}^3k\, |f(k)|^2 (e_1+\omega(k)-z)^{-1} .
\end{align}
Fubini's theorem allows for interchanging the order of integration, and hence, we obtain from the Cauchy integral theorem 
\begin{align}
\Gamma_0= \int \mathrm{d}^3k\, |f(k)|^2 (e_1-e_0+\omega(k))^{-1} .
\end{align}
This completes the proof of the first part of the proposition. The second part follows from the definition of the ground state:
\begin{align}
\Psi_{\lambda_0}=P_0 \varphi_0\otimes \Omega = \varphi_0\otimes \Omega + \widetilde \Psi_{\lambda_0},
\end{align}
where $\widetilde \Psi_{\lambda_0}:= (P_0-P_{0,\text{at}}) \varphi_0\otimes \Omega \in \mathcal H$
and Lemma \ref{lemma:proj} yields that $\norm{\widetilde \Psi_{\lambda_0}}\leq Cg$.
Moreover, note that $\varphi_0\otimes \Omega$ is the unique ground state of $H_0$, and hence, $P_{0,\text{at}}$ is a rank-one projector. We conclude 
the uniqueness  of $\Psi_{\lambda_0}$ again from Lemma \ref{lemma:proj}.
\end{proof}

\section{Spectral projections}
\label{app:specproj}

\begin{definition}
\label{def:almostana}
For $\upsilon \in \mathit C^{\infty}(\R,\C)$, we define its almost analytic extension by
\begin{align}
\tilde \upsilon :\C \to \C, \qquad
\tilde \upsilon(z) = \sigma(\Re z,\Im z) \sum^n_{r=0} \frac{(i\Im z)^r}{r!}\upsilon^{(r)}(\Re z) ,
\end{align}
where $n\in\N$, $\upsilon^{(r)}$ denotes the $r$-th derivative of $\upsilon$ and 
\begin{align}
\sigma(\Re z,\Im z) := \tau\left(\frac{\Im z}{\sqrt{(\Re z)^2 +1}}\right)
\end{align} 
for some $\tau\in \mathit C^\infty(\R,\C)$ with $\tau(t)=1$ for all $|t|<1$ and $\tau(t)=0$ for all $|t|>2$.
It follows from \cite[Section 2.2]{Davies2} that
\begin{enumerate}
\item[(i)] $\tilde \upsilon$ is smooth as a function of $(\Re z, \Im z)$. 
\item[(ii)]   If $v$ is compactly supported, 
 $|\partial_{\overline z} \,  \tilde \upsilon   (z) |\leq C |\Im z|^n$ (where 
 $\frac{d}{d \overline z} = \frac{1}{2} ( \frac{d}{d  x} + i\frac{d}{d y  }) $, with $z = x+iy$).
\end{enumerate}
\end{definition}
\begin{theorem}[Helffer-Sj\"ostrand formula]
\label{thm:hsf}
For every selfadjoint operator and any  $\upsilon \in \mathit C_0^{\infty}(\R,\C)$, the next formula holds true 
\begin{align}\label{HS}
\upsilon(O) = \pi^{-1} \int_\C \mathrm{d}x \mathrm{d}y \, \partial_{\overline z} \,  \tilde \upsilon (z) (O-z)^{-1},
\end{align}
where  $z=x+iy$, for $x,y\in \R$. Eq.\ \eqref{HS} does not depend on $n$ and $\sigma$. 
\end{theorem}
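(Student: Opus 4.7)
The plan is to reduce the operator identity to a scalar Cauchy-type identity via the spectral theorem, then establish the scalar identity by a $\bar\partial$-version of Stokes' theorem on the plane with a small disk removed. Writing $P_O$ for the projection-valued measure of $O$, the spectral theorem gives $(O-z)^{-1}=\int(\lambda-z)^{-1}dP_O(\lambda)$ and $\upsilon(O)=\int\upsilon(\lambda)dP_O(\lambda)$, so if I can prove
\begin{align*}
\upsilon(\lambda) = \pi^{-1}\int_\C dx\,dy\,\partial_{\overline z}\tilde\upsilon(z)(\lambda-z)^{-1}\qquad\forall\lambda\in\R,
\end{align*}
then the operator identity follows by interchanging the order of integration (Fubini), which is legitimate once I establish an absolutely convergent bound.

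For the convergence: property (ii) of Definition~\ref{def:almostana} gives $|\partial_{\overline z}\tilde\upsilon(z)|\le C|\Im z|^n$ with $n\ge 1$, and the cutoff $\sigma$ forces $\tilde\upsilon$ to be supported in a horizontal strip $|\Im z|\le 2\sqrt{(\Re z)^2+1}$ around a neighborhood of $\supp\upsilon\subset\R$. Combined with the spectral-theoretic bound $\|(O-z)^{-1}\|\le|\Im z|^{-1}$, the integrand is dominated (uniformly in the spectral variable $\lambda$) by an $L^1$ function on $\C$. This lets me move the spectral integral inside and also makes the integral converge as a Bochner integral in operator norm.

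For the scalar identity, I would fix $\lambda\in\R$ and apply Stokes' theorem in the form $\int_\Omega \partial_{\overline z}F\,dx\,dy=\frac{1}{2i}\int_{\partial\Omega}F\,dz$ on the domain $\Omega_{\varepsilon,R}=\{|z|<R\}\setminus\overline{\{|z-\lambda|\le\varepsilon\}}$. Since $(\lambda-z)^{-1}$ is holomorphic on $\Omega_{\varepsilon,R}$, one has $\partial_{\overline z}[\tilde\upsilon(z)(\lambda-z)^{-1}]=(\partial_{\overline z}\tilde\upsilon(z))(\lambda-z)^{-1}$. Taking $R$ large enough that $\tilde\upsilon$ vanishes on $\{|z|=R\}$, the outer boundary contributes nothing, while on the inner circle $|z-\lambda|=\varepsilon$ traversed clockwise one obtains, by continuity of $\tilde\upsilon$ and the standard Cauchy-integral computation, the limit $-2\pi i\,\tilde\upsilon(\lambda)=-2\pi i\,\upsilon(\lambda)$ as $\varepsilon\downarrow 0$. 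On the other hand, the left-hand side area integral converges to the integral over all of $\C$ by dominated convergence (using $|\partial_{\overline z}\tilde\upsilon(z)|\le C|\Im z|^n$ with $n\ge 1$ near $\lambda$). Rearranging yields the scalar identity with the prefactor $\pi^{-1}$.

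The main (mild) obstacle is the careful dominated-convergence argument near $z=\lambda$ to justify the limit $\varepsilon\downarrow 0$, and likewise the Fubini step for the operator version, both of which rely crucially on $n\ge 1$. For the final assertion that the result is independent of $n$ and $\sigma$: given two almost analytic extensions $\tilde\upsilon_1,\tilde\upsilon_2$ of the same $\upsilon$, their difference has $\partial_{\overline z}(\tilde\upsilon_1-\tilde\upsilon_2)$ vanishing to order at least $\min(n_1,n_2)\ge 1$ along $\R$, and both sides of the formula above, applied with $\tilde\upsilon_1-\tilde\upsilon_2$ (whose restriction to $\R$ is zero), give zero — this shows that the right-hand side depends only on $\upsilon|_\R$, hence is independent of the choices made in Definition~\ref{def:almostana}.
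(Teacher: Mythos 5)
The paper does not actually prove Theorem~\ref{thm:hsf}: it states the Helffer--Sj\"ostrand formula as a known result (the citation to Davies in Definition~\ref{def:almostana} is the intended reference) and proceeds directly to the proof of Lemma~\ref{lemma:specproj}. Your argument supplies the standard proof and is essentially correct: reduce to the scalar Cauchy--Pompeiu identity via the spectral theorem and the bound $\norm{(O-z)^{-1}}\le|\Im z|^{-1}$, run Stokes' theorem in the form $\int_\Omega\partial_{\overline z}F\,\mathrm{d}x\,\mathrm{d}y=\tfrac{1}{2i}\oint_{\partial\Omega}F\,\mathrm{d}z$ on $\{|z|<R\}\setminus\overline{D(\lambda,\varepsilon)}$, and use $|\partial_{\overline z}\tilde\upsilon(z)|\le C|\Im z|^n$ with $n\ge 1$ for both the $\varepsilon\downarrow 0$ limit and the Fubini/Bochner step. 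The one slip is the sign of the inner-boundary contribution: with the inner circle oriented \emph{clockwise} (the correct orientation for $\partial\Omega$) one has $\oint(\lambda-z)^{-1}\mathrm{d}z\to +2\pi i$, so the inner boundary contributes $+2\pi i\,\upsilon(\lambda)$, not $-2\pi i\,\upsilon(\lambda)$; this is precisely what makes $\int_\C\partial_{\overline z}\tilde\upsilon\,(\lambda-z)^{-1}\,\mathrm{d}x\,\mathrm{d}y=\pi\,\upsilon(\lambda)$ and yields the stated prefactor $\pi^{-1}$, whereas the $-2\pi i$ you wrote would produce $-\pi^{-1}$. Finally, once the identity is established for every admissible $(n,\sigma)$, the claimed independence is automatic since the left-hand side $\upsilon(O)$ does not see those choices; your separate $\overline\partial$-cancellation argument is a fine consistency check but is not logically required.
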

\begin{proof}[Proof of Lemma \ref{lemma:specproj}]
We only prove  \eqref{eq:specproj'}. 
Since   \eqref{HS} does not depend on $\sigma$, we choose $n = 2$ and, for $s >0$,  
 $ \sigma_s(\Re z,\Im z) := \tau\left( \frac{1}{s} \frac{\Im z}{\sqrt{(\Re z)^2 +1}}\right)   $. We denote by $\tilde \chi_s $ the corresponding almost analytic extension of $\chi_s$. It follows form \eqref{HS} and the resolvent equation that 
\begin{align}\label{verg1}
\| \chi_s(H) -  \chi_s(H_0)\| = \pi^{-1} \Big \| \int_\C \mathrm{d}x \mathrm{d}y \, \partial_{\overline z} \,  \tilde \chi_s (z) (H-z)^{-1} g V    (H_0-z)^{-1}       \Big \|.
\end{align} 
We calculate now  
\begin{align}\label{davies}
\partial_{\overline z} \,  \tilde \chi_s (z) = \frac{1}{2} \sum_{r = 0}^2
 \chi_s^{(r)}(x) (iy)^r/r! (\frac{\partial}{x}\sigma_s +  i \frac{\partial}{\partial y} \sigma_s  )  +   
\frac{1}{2} \chi_s^{(n+1)}(x) (iy)^n/n! \sigma . 
\end{align}
Notice that 
  $ \|  (H-z)^{-1} g V  \frac{1}{H_f + 1} (H_f + 1) (H_0-z)^{-1} \|  \leq  C g \frac{1}{|y|^2}    $. Moreover,   $  | \chi_s^{(r)}(x) | \, |y|^r |\frac{\partial}{x}\sigma_s +  i \frac{\partial}{\partial y} \sigma_s  | \leq C \frac{1}{s}  \frac{|y|^r}{s^r}   $,
  $| \chi_s^{(n+1)}(x)| |y|^n |\sigma | \leq C \frac{1}{s^3}  |y|^2   $. This together with 
  \eqref{verg1} yields
\begin{align}
\| \chi_s(H) -  \chi_s(H_0)\|  &\leq C g  \sum_{r = 0}^2
\int_{   \supp(  | \chi_s^{(r)} | \,  |\frac{\partial}{x}\sigma_s +  i \frac{\partial}{\partial y} \sigma_s  | )  } \mathrm{d}x \mathrm{d}y\,   \frac{1}{s}  \frac{|y|^{r-2}}{s^r}    
\notag \\
&+C g \int_{   \supp(  | \chi_s^{(3)} | |\sigma|  )}\mathrm{d}x \mathrm{d}y\,  \frac{1}{s^3}  .  
\end{align}  
For $y\in \R$, we observe that  the diameter of the support of the functions $\R\ni x\mapsto   | \chi_s^{(r)}(x) | \,  |\frac{\partial}{x}\sigma_s(x,y) +  i \frac{\partial}{\partial y} \sigma_s(x,y)  | $ and $\R\ni x\mapsto \supp(  | \chi_s^{( 3  )}(x)|  |\sigma(x,y) |) $ is of order $s$. Moreover, for $x\in\R$, we find that  the diameter of the support of the function $\R\ni y\mapsto \supp(  | \chi_s^{( 3  )}(x)|  |\sigma(x,y) |) $ is of order $s$. We conclude that
\begin{align}
\| \chi_s(H) -  \chi_s(H_0)\| \leq C \frac{g}{s}  ,
\end{align}   
which is the desired result. 
\end{proof}

\section{Standard Results from Mourre Theory}\label{Standard}

 In this section we prove all assertions and estimates described at the beginning of Section  \ref{app:limab}, upto   \eqref{Teo5v}.  We adapt the
proofs of \cite{cycon} to our model.   
\begin{lemma}
\label{mainlemma}
Recall  $\chi\in C^{\infty}_c(\R,[0,1])$ from Definition \ref{def.chi}.
For $g   \geq 0 ,\boldsymbol \eta>0$ sufficiently small, 
 $\eta\in (0,\boldsymbol \eta)$,  $\epsilon\in (0,1)$, $z\in I$ and $z_{\pm\epsilon}:=z\pm i\epsilon$, the following statements hold true:
\begin{enumerate}
\item[(i)] The operator $R^{\pm\eta}(z_{\pm\epsilon})$ introduced in  \eqref{DefiHoP}    exists and 
 it  is in $C^1((0,\boldsymbol \eta))$  and  $C^0([0,\boldsymbol \eta))$ with respect to $\eta$. Moreover, the following identity holds true:
\begin{align}
\label{diffeta}
\mathrm{d}/\mathrm{d}\eta \, R^{\pm\eta}(z_{\pm\epsilon})    =\pm iR^{\pm\eta}(z_{\pm\epsilon})  M^2     R^{\pm \eta}(z_{\pm\epsilon}) , \qquad \forall \eta\in (0,\boldsymbol \eta).
\end{align}
\item[(ii)] 
 \begin{align}
\norm{(H_{\overline P}+i) \chi(H_{\overline P})R^{\pm\eta}(z_{\pm\epsilon})\psi}\leq C\eta^{-1/2}\left| \left\langle   \psi,   R^{\pm\eta}(z_{\pm\epsilon})\psi\right\rangle \right|^{1/2}.
\end{align}
\item[(iii)]
\begin{align}
\norm{(H_{\overline P}+i)\overline \chi(H_{\overline P})R^{\pm\eta}(z_{\pm\epsilon})}\leq C,
\end{align}
where we   recall that   $\overline \chi=1-\chi$. 
\item[(iv)]  
\begin{align}
\norm{(H_{\overline P}+i)R^{\pm\eta}(z_{\pm\epsilon})}\leq C/\eta .
\end{align}
\item[(v)] 
 \begin{align}
\norm{  (H_{\overline P}+i)   R^{\pm\eta}(z_{\pm\epsilon})\rho}\leq C\left(1+\eta^{-1/2}\norm{ F^{\pm\eta}(z_{\pm\epsilon})}^{1/2}\right) .
\end{align}
\end{enumerate} 
   The  constants  $C$ above do  not depend on $\eta$, $\epsilon$, $z$ and $g$, see Remark \ref{rem:const}.  
\end{lemma}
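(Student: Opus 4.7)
The plan is to prove items (i)--(v) in the order (i), (ii), an intermediate $\overline\chi$-estimate, (iv), (iii), (v), because (iii) and (iv) are interlocked and neither admits an obvious uniform-in-$\epsilon$ bound on its own. For (i), I note that $\Im\langle\psi,(H_{\overline P}^{\pm\eta}-z_{\pm\epsilon})\psi\rangle = \mp\eta\langle\psi,M^2\psi\rangle\mp\epsilon\norm{\psi}^2$ has definite sign whenever $\eta,\epsilon>0$; injectivity together with the analogous condition for the adjoint (giving density of the range) yields bounded invertibility. Differentiability in $\eta$ and the identity \eqref{diffeta} come from applying $\frac{\mathrm d}{\mathrm d\eta}A(\eta)^{-1} = -A(\eta)^{-1}A'(\eta)A(\eta)^{-1}$ with $A(\eta) = H_{\overline P}\mp i\eta M^2-z_{\pm\epsilon}$ and $M^2$ bounded, and continuity at $\eta=0$ follows from the second resolvent identity. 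For (ii), setting $\psi := R^{\pm\eta}(z_{\pm\epsilon})\phi$ and taking the imaginary part of $\langle\phi,\psi\rangle=\langle\psi,(H_{\overline P}^{\pm\eta}-z_{\pm\epsilon})\psi\rangle$ gives $\eta\alpha\norm{\chi(H_{\overline P})\psi}^2\leq\eta\langle\psi,M^2\psi\rangle\leq |\langle\phi,\psi\rangle|$ via the Mourre estimate of Lemma~\ref{lemma:comestclose}. The factor $(H_{\overline P}+i)$ is then absorbed through $(H_{\overline P}+i)\chi(H_{\overline P}) = [(H_{\overline P}+i)\chi_1(H_{\overline P})]\chi(H_{\overline P})$ for any $\chi_1\in C_c^\infty(\R)$ equal to $1$ on $\supp\chi$, the bracketed operator being bounded by compactness of $\supp\chi_1$.

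The crux is the intermediate bound
\[
\norm{(H_{\overline P}+i)\overline\chi(H_{\overline P})R^{\pm\eta}(z_{\pm\epsilon})\phi} \leq C\norm{\phi} + C\eta^{1/2}|\langle\phi,R^{\pm\eta}(z_{\pm\epsilon})\phi\rangle|^{1/2}.
\]
I will derive it from the resolvent identity $R^{\pm\eta} = R^0 \pm i\eta R^0 M^2 R^{\pm\eta}$ with $R^0 := (H_{\overline P}-z_{\pm\epsilon})^{-1}$, using that $\overline\chi(H_{\overline P})$ commutes with $R^0$. The unperturbed term $(H_{\overline P}+i)R^0\overline\chi(H_{\overline P})$ is uniformly bounded by spectral calculus because $|x-z|\geq\delta/4$ whenever $\overline\chi(x)\neq 0$ and $z\in I$. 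For the remainder, the factorization $M^2 = \chi(H_{\overline P})[H_{\overline P},i\mathrm{d}\Gamma(D)]^0\chi(H_{\overline P})$ lets me write $(H_{\overline P}+i)R^0\overline\chi(H_{\overline P})M^2R^{\pm\eta}\phi$ as a product of three well-controlled factors: $(H_{\overline P}+i)R^0(\overline\chi\chi)(H_{\overline P})$ (a bounded spectral multiplier since $\overline\chi\chi$ has compact support bounded away from $z$), $[H_{\overline P},i\mathrm{d}\Gamma(D)]^0(H_{\overline P}+i)^{-1}$ (bounded by $H$-boundedness of the commutator, see Remark~\ref{DEfiCommut}), and $(H_{\overline P}+i)\chi(H_{\overline P})R^{\pm\eta}\phi$ (controlled by (ii)). The prefactor $\eta$ then converts the $\eta^{-1/2}$ coming from (ii) into the required $\eta^{1/2}$.

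The endgame is mechanical. Combining the intermediate bound with (ii) through $(H_{\overline P}+i)R^{\pm\eta}\phi = (H_{\overline P}+i)\chi(H_{\overline P})R^{\pm\eta}\phi + (H_{\overline P}+i)\overline\chi(H_{\overline P})R^{\pm\eta}\phi$ and using $\norm{R^{\pm\eta}\phi}\leq\norm{(H_{\overline P}+i)R^{\pm\eta}\phi}$ (since $\norm{(H_{\overline P}+i)^{-1}}\leq 1$) produces the quadratic inequality $X\leq C\eta^{-1/2}X^{1/2} + C$ for $X := \norm{(H_{\overline P}+i)R^{\pm\eta}\phi}/\norm{\phi}$, whose solution $X\leq C/\eta$ is precisely (iv). Feeding (iv) back via $|\langle\phi,R^{\pm\eta}\phi\rangle|\leq \norm{\phi}\cdot\norm{R^{\pm\eta}\phi}\leq C\norm{\phi}^2/\eta$ into the intermediate estimate kills the second term and delivers (iii). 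Finally (v) follows by splitting $R^{\pm\eta}\rho = \chi(H_{\overline P})R^{\pm\eta}\rho + \overline\chi(H_{\overline P})R^{\pm\eta}\rho$ and applying (ii) and (iii) with $\rho\phi$ in place of $\phi$, noting that $\langle\rho\phi,R^{\pm\eta}\rho\phi\rangle = \langle\phi,F^{\pm\eta}(z_{\pm\epsilon})\phi\rangle$ produces the $\norm{F^{\pm\eta}(z_{\pm\epsilon})}^{1/2}$ factor and $\norm{\rho}\leq 1$ handles the $\overline\chi$ piece. The main obstacle is precisely the interlocking of (iii) and (iv): the cancellation that removes the $1/\epsilon$ dependence is only unlocked by writing $M^2$ in its factorized $\chi[\,\cdot\,]\chi$ form so that (ii) can be invoked on the right factor of the remainder term.
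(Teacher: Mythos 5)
Your proposal is correct and follows the same overall architecture as the paper's proof: (i) via a quantitative injectivity estimate plus the resolvent identity; (ii) from the Mourre estimate applied to the dissipative part of the quadratic form; an intermediate $\overline\chi$-bound via the resolvent identity $R^{\pm\eta}=R^0\pm i\eta R^0M^2R^{\pm\eta}$ and the spectral gap of $\overline\chi$ from $I$; (iv) from a quadratic inequality combining (ii) with the intermediate bound; (iii) by feeding (iv) back into the intermediate bound; and (v) by splitting with $\chi+\overline\chi$ and applying (ii) and (iii) to $\rho\phi$.

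The one genuine variation is the form of the intermediate $\overline\chi$-estimate. The paper simply bounds $\norm{M^2}\leq C$ and obtains
$\norm{(H_{\overline P}+i)\overline\chi(H_{\overline P})R^{\pm\eta}}\leq C(1+\eta\norm{R^{\pm\eta}})$,
then closes the quadratic inequality for (iv) by first choosing $\boldsymbol\eta$ small enough that the $\eta\norm{R^{\pm\eta}}$ term can be absorbed against $\tfrac12\norm{(H_{\overline P}+i)R^{\pm\eta}}$. You instead exploit the factorization $M^2=\chi(H_{\overline P})[H_{\overline P},i\mathrm{d}\Gamma(D)]^0\chi(H_{\overline P})$ so that the rightmost $\chi$ lets you invoke (ii) on the remainder, producing the sharper
$\norm{(H_{\overline P}+i)\overline\chi(H_{\overline P})R^{\pm\eta}\phi}\leq C\norm{\phi}+C\eta^{1/2}\left|\langle\phi,R^{\pm\eta}\phi\rangle\right|^{1/2}$.
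This yields a cleaner quadratic inequality $X\leq C+C\eta^{-1/2}X^{1/2}$ with no need for the $\tilde C\boldsymbol\eta\leq\tfrac12$ absorption. So your route is a legitimate (and arguably tidier) alternative; but your closing remark that the factorized form of $M^2$ is \emph{essential} to breaking the (iii)--(iv) interlock overstates the point --- the paper shows it is not needed, since the crude $\norm{M^2}\leq C$ bound together with a sufficiently small $\boldsymbol\eta$ suffices.
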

\begin{proof}
\begin{enumerate}
\item[(i)] Recall that $H_{\overline P}$ is a closed operator and $M^2$ is 
 bounded  (see Remark \ref{DEfiCommut}). Consequently,  $H_{\overline{P}}^{\pm\eta}$ is  closed. For $\psi \in \mathcal D(H_{\overline P})$, we observe  that  
\begin{align}
\label{crucialepsilon}
\norm{\left(H^{\pm\eta}_{\overline P}  - z_{\pm \epsilon}\right)\psi}^2=\norm{\left(H^{\pm\eta}_{\overline P}  - z\right)\psi}^2 +\epsilon^2 \norm{\psi}^2+2\eta \epsilon\norm{M\psi}^2 ,
\end{align}
and, thereby, 
the range of $H^{\pm\eta}_{\overline P}  - z_{\pm \epsilon}$ is closed and $H^{\pm\eta}_{\overline P}  - z_{\pm \epsilon}$ is injective.  It also follows from the  equation above  that its inverse is   bounded.
Moreover, $\left(H^{\pm\eta}_{\overline P} - z_{\pm \epsilon}\right)^*$ fulfills a similar estimate  and it is, therefore, injective.  This implies    that the range of  $H^{\pm\eta}_{\overline P}  - z_{\pm \epsilon}$ is dense 
 and  because it is also closed,  $H^{\pm\eta}_{\overline P}  - z_{\pm \epsilon}$ is surjective.  

In addition,   the resolvent identity yields that 
\begin{align}
\label{different}
R^{\pm\eta}(z_{\pm\epsilon}) -R^{\pm\eta_0}(z_{\pm\epsilon})  =\pm i (\eta-\eta_0)R^{\pm\eta}(z_{\pm\epsilon})  M^2     R^{\pm \eta_0}(z_{\pm\epsilon})  .
\end{align} 
 It follows from \eqref{crucialepsilon} that there is a constant $C>0$ (independent of $\eta$) such that $\norm{R^{\pm\eta}(z_{\pm\epsilon})}\leq C/\epsilon$.
This together with  \eqref{different} and  the fact that $M^2$ is bounded   implies  that 
$ R^{\pm\eta}(z_{\pm\epsilon}) $ is continuous with respect to $\eta$, for $\eta \geq 0$, and differentiable for $\eta > 0$. Moreover, taking $\eta \to 0$ in \eqref{different} we get \eqref{diffeta}.   
\item[(ii)] 
It follows from Lemma \ref{lemma:comestclose} that there is a constant $\alpha>0$ such that for $\psi\in \mathcal H$
 \begin{align}
&\norm{(H_{\overline P}+i) \chi(H_{\overline P})R^{\pm\eta}(z_{\pm\epsilon})\psi}^2
 =  \left\langle \psi, R^{\pm \eta}(z_{\pm\epsilon})^* (H_{\overline P}^2+1) \chi^2(H_{\overline P})R^{\pm\eta}(z_{\pm\epsilon}) \psi  \right\rangle
\notag \\
& \leq((e_1+\delta)^2+1) \alpha^{-1}  \left\langle \psi, R^{\pm \eta}(z_{\pm\epsilon})^* \alpha \chi^2(H_{\overline P})R^{\pm\eta}(z_{\pm\epsilon}) \psi  \right\rangle
\notag \\
&\leq ((e_1+\delta)^2+1)   (2\alpha\eta)^{-1}  \left\langle \psi, R^{\pm\eta}(z_{\pm\epsilon})^* (2\eta M^2+2\epsilon)R^{\pm\eta}(z_{\pm\epsilon}) \psi  \right\rangle
\notag \\
&= ((e_1+\delta)^2+1) (2\alpha\eta)^{-1}  \left\langle \psi, i(R^{\pm\eta}(z_{\pm\epsilon})^*-R^{\pm\eta}(z_{\pm\epsilon})) \psi  \right\rangle
\notag \\
&\leq((e_1+\delta)^2+1) (\alpha\eta)^{-1} \left| \left\langle   \psi,   R^{\pm\eta}(z_{\pm\epsilon})\psi\right\rangle \right|.
\end{align}
This implies then statement (ii).
\item[(iii)]   We calculate  
\begin{align}
\label{iii}
\overline \chi(H_{\overline P})R^{\pm\eta}(z_{\pm\epsilon})&= \overline\chi(H_{\overline P})R^0(z_{\pm\epsilon}) \left(  H_{\overline{P}}-z_{\pm \epsilon}) R^{\pm\eta}(z_{\pm\epsilon}   \right)
\notag \\
&= \overline \chi(H_{\overline P})R^0(z_{\pm\epsilon}) \left(  1\pm i\eta M^2 R^{\pm\eta}(z_{\pm\epsilon})    \right).
\end{align}
 It follows from   Definition \ref{def.chi} and  \eqref{DefI} that    $ \norm{\overline \chi(H_{\overline P})R^0(z_{\pm\epsilon})}\leq 4/\delta$. Moreover, 
\begin{align}
 \norm{ H_{\overline P} \overline \chi(H_{\overline P})R^0(z_{\pm\epsilon}) }    =    \norm{ \overline \chi(H_{\overline P})     +z_{\pm\epsilon}\overline \chi(H_{\overline P})  R^0(z_{\pm\epsilon}) }   .
\end{align}
We  obtain that  
\begin{align}
 \norm{(H_{\overline P}+i)  \overline \chi(H_{\overline P})R^0(z_{\pm\epsilon}) } \leq C.
\end{align}
This together with \eqref{iii} and the boundedness of $M^2$ yields that
\begin{align}
\label{iv}
\norm{(H_{\overline P}+i)\overline \chi(H_{\overline P})R^{\pm\eta}(z_{\pm\epsilon})}&\leq C \left(  1+\eta  \norm{R^{\pm\eta}(z_{\pm\epsilon}) }   \right) .
\end{align}
Statement (iii) follows then by (iv) which is proven below.
\item[(iv)] It follows from (ii) together with \eqref{iv} that there are constants $C,\tilde C>0$ such that
\begin{align}
&1+\norm{(H_{\overline P}+i)R^{\pm\eta}(z_{\pm\epsilon})}
\notag \\
&\leq  1+ \norm{(H_{\overline P}+i)\overline \chi(H_{\overline P}) R^{\pm\eta}(z_{\pm\epsilon})}+ \norm{(H_{\overline P}+i)\chi(H_{\overline P}) R^{\pm\eta}(z_{\pm\epsilon})}
\notag \\
&\leq  1+\tilde C \left(  1+\eta  \norm{R^{\pm\eta}(z_{\pm\epsilon}) }   \right) +C\eta^{-1/2} \norm{R^{\pm\eta}(z_{\pm\epsilon}) }^{1/2}
.
\end{align}
We fix $\boldsymbol \eta>0$ sufficiently small such that $\tilde C\boldsymbol \eta\leq 1/2$ and $\tilde C +1\leq C\boldsymbol \eta^{-1/2}$. Then, employing $|x|+1\leq 2 \sqrt{x^2+1}$ for  all $x\in \R$, we conclude for $\eta\in (0,\boldsymbol \eta)$ 
\begin{align}
&1+\norm{(H_{\overline P}+i)R^{\pm\eta}(z_{\pm\epsilon})}
\leq C\eta^{-1/2} \left( 1+  \norm{R^{\pm\eta}(z_{\pm\epsilon}) }^{1/2}  \right) +   \frac12 \left(  1+  \norm{R^{\pm\eta}(z_{\pm\epsilon}) }   \right) 
\notag \\
&\leq 2C\eta^{-1/2} \left( 1+  \norm{(H_{\overline P}+i)R^{\pm\eta}(z_{\pm\epsilon}) } \right)^{1/2}  +   \frac12 \left(  1+  \norm{(H_{\overline P}+i)R^{\pm\eta}(z_{\pm\epsilon}) } \right)  .
\end{align}
This yields then
\begin{align}
1+\norm{(H_{\overline P}+i)R^{\pm\eta}(z_{\pm\epsilon})}\leq 4C\eta^{-1/2} \left(1+   \norm{(H_{\overline P}+i)R^{\pm\eta}(z_{\pm\epsilon}) }\right)^{1/2},
\end{align}
and hence,
\begin{align}
\norm{(H_{\overline P}+i)R^{\pm\eta}(z_{\pm\epsilon})}\leq 16C^2\eta^{-1} ,
\end{align}
which implies statement (iv).
\item[(v)] For $\psi \in \mathcal H$, we apply statement (ii) to the vector $\rho \psi \in \mathcal H$ and find that there is a  constant $C>0$ such that
\begin{align}
\norm{  (H_{\overline P}+i)  \chi(H_{\overline P})R^{\pm\eta}(z_{\pm\epsilon})\rho \psi}\leq C\eta^{-1/2}\left| \left\langle   \psi,   F^{\pm\eta}(z_{\pm\epsilon})\psi\right\rangle \right|^{1/2},
\end{align}
which implies
\begin{align}
\label{stat10011}
\norm{  (H_{\overline P}+i)  \chi(H_{\overline P})R^{\pm\eta}(z_{\pm\epsilon})\rho}\leq C\eta^{-1/2}\norm{ F^{\pm\eta}(z_{\pm\epsilon})}^{1/2}.
\end{align}
In addition, it follows from statement (iii) that 
\begin{align}
\norm{  (H_{\overline P}+i)  \overline \chi(H_{\overline P})R^{\pm\eta}(z_{\pm\epsilon})\rho}\leq C.
\end{align}
This together with \eqref{stat10011} completes the proof of statement (v).
\end{enumerate}
\end{proof}

\section{Domain Properties and Commutator Estimates in Mourre Theory}
\label{Nonstandard}

\subsection{Domain Properties in Mourre Theory  }

 In this section we prove auxiliary technical results that we need in Section 
\ref{app:limab}.  In particular, we prove that   $ R^{\pm \eta}(z_{\pm \epsilon}) $ (see \eqref{DefiHoP}) leaves  the domain of $ \mathrm{d}\Gamma(D)  $ invariant -- this (and similar results)  might be regarded as the main result of this section, see  Lemma  \ref{Invariance}.  In this paper we do not use the standard strategy and we believe that our method is much simpler and direct than the usual one:
A novelty of our presentation is that we do not employ the usual techniques to study domain problems and commutators.  The standard presentation of Mourre theory includes a scale of Hilbert spaces and a regularization of the generator of dilations in order to address domain problems (which is a technical and delicate issue --  see \cite{cycon}). In our case, instead of    stating scales of Hilbert spaces explicitly and regularizing the generator of dilations,  we directly dilate the operators at stake.           
We point out to the reader that the details of the arguments in this section are rarely found in the literature. 
A presentation of similar arguments may be found, e.g.,  in  \cite{fgs-spectral}.
\begin{definition}\label{DilatedOperator}
Let  $B$ be a closed operator, defined in $\mathcal H$.  For every $\beta  \in \mathbb{R}$,  we denote its dilation  by 
\begin{align}\label{dilation} B^{( \beta )} = e^{-i \beta  \mathrm{d} \Gamma (D)} B e^{i \beta \mathrm{d} \Gamma (D)}. 
\end{align}
For every function $ h: \mathbb{R}^3 \to \mathbb{R}  $ we denote by 
$ h^{(\beta)}(k) :=  h( e^{\beta} k ) $. A direct calculation shows that (see Definition \ref{def:secquant})
\begin{align}\label{Hdilated}
H_{\overline{P}}^{(\beta)} = H_{\overline P}(\omega^{(\beta)}, u_{\beta} f), 
\hspace{.3cm} (M^2)^{(\beta)} =  \chi(H_{\overline{P}}^{(\beta)})  
 H_{\overline P}(\xi^{(\beta)}, u_{\beta} D f)
 \chi(H_{\overline{P}}^{(\beta)}),
\end{align}
see Remark \ref{DEfiCommut}, and (see \eqref{DefiHoP})
\begin{align}\label{Rdilated}
(H^{\pm \eta }_{\overline{P} })^{(\beta)} := H_{\overline{P}}^{(\beta)}  \mp  i \eta (M^2)^{(\beta)},   \qquad (R^{\pm\eta}(z_{\pm\epsilon}))^{(\beta)} = \left( (H^{\pm\eta}_{\overline P})^{(\beta)}-z_{\pm\epsilon}  \right)^{-1}.
\end{align}
\end{definition}
\begin{lemma}\label{MAIN}
Let $B$ be a bounded operator in $\mathcal H$.  Assume that the map $ \beta \mapsto B^{(\beta)}$ is continuous at $0$ and, for every   
$\phi \in  \mathcal{D}(  \mathrm{d} \Gamma (D) )  $, the limit 
\begin{align}\label{F4}
\lim_{  \beta \to 0} \frac{1}{ \beta } (B^{(  \beta )} - B ) \phi  
\end{align}
exists.  Then,  $\mathcal{D}(  \mathrm{d} \Gamma (D) )$ is invariant under $B$. In particular this holds true if the map $ \beta \mapsto B^{(\beta)}$ is differentiable at $0$. 
\end{lemma}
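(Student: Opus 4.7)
The strategy is to use Stone's characterization of the domain: a vector $\psi \in \mathcal{H}$ lies in $\mathcal{D}(\mathrm{d}\Gamma(D))$ if and only if the norm limit $\lim_{\beta \to 0} \beta^{-1}(e^{-i\beta\mathrm{d}\Gamma(D)} - \id)\psi$ exists, in which case it equals $-i\mathrm{d}\Gamma(D)\psi$. Fixing $\phi \in \mathcal{D}(\mathrm{d}\Gamma(D))$, I will show that the difference quotient $\beta^{-1}(e^{-i\beta\mathrm{d}\Gamma(D)} - \id)B\phi$ converges in norm as $\beta \to 0$; this yields $B\phi \in \mathcal{D}(\mathrm{d}\Gamma(D))$ and therefore the claimed invariance.

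The key algebraic step is that Definition \ref{DilatedOperator} gives $e^{-i\beta\mathrm{d}\Gamma(D)} B = B^{(\beta)} e^{-i\beta\mathrm{d}\Gamma(D)}$, so by inserting $\pm B^{(\beta)}\phi$ one obtains
\begin{equation*}
\frac{1}{\beta}\bigl(e^{-i\beta\mathrm{d}\Gamma(D)} B\phi - B\phi\bigr) \;=\; B^{(\beta)} \cdot \frac{1}{\beta}\bigl(e^{-i\beta\mathrm{d}\Gamma(D)}\phi - \phi\bigr) \;+\; \frac{1}{\beta}\bigl(B^{(\beta)} - B\bigr)\phi.
\end{equation*}
The second summand converges to some $L\phi \in \mathcal{H}$ by hypothesis \eqref{F4}. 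For the first summand, the factor $\xi_\beta := \beta^{-1}(e^{-i\beta\mathrm{d}\Gamma(D)} - \id)\phi$ converges in norm to $-i\mathrm{d}\Gamma(D)\phi$ by Stone (using $\phi \in \mathcal{D}(\mathrm{d}\Gamma(D))$), while $B^{(\beta)}$ is a unitary conjugation of $B$, so $\|B^{(\beta)}\| = \|B\|$ uniformly in $\beta$ and $B^{(\beta)} \to B$ strongly as $\beta \to 0$ (this is automatic from strong continuity of $\{e^{\pm i\beta\mathrm{d}\Gamma(D)}\}_\beta$, via the splitting $B^{(\beta)}\psi - B\psi = e^{-i\beta\mathrm{d}\Gamma(D)}B(e^{i\beta\mathrm{d}\Gamma(D)} - \id)\psi + (e^{-i\beta\mathrm{d}\Gamma(D)} - \id)B\psi$). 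The elementary decomposition
\begin{equation*}
B^{(\beta)}\xi_\beta - B\bigl(-i\mathrm{d}\Gamma(D)\phi\bigr) \;=\; B^{(\beta)}\bigl(\xi_\beta + i\mathrm{d}\Gamma(D)\phi\bigr) \;+\; \bigl(B^{(\beta)} - B\bigr)\bigl(-i\mathrm{d}\Gamma(D)\phi\bigr)
\end{equation*}
shows $B^{(\beta)}\xi_\beta \to -iB\mathrm{d}\Gamma(D)\phi$, and hence the full difference quotient converges to $-iB\mathrm{d}\Gamma(D)\phi + L\phi$, as required.

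The only subtlety is the product-of-limits step for $B^{(\beta)}\xi_\beta$, where both factors depend on $\beta$; it is controlled by the uniform operator-norm bound $\|B^{(\beta)}\| = \|B\|$ combined with the strong convergence $B^{(\beta)} \to B$, both of which are automatic here from the unitary nature of the conjugation, so the hypothesis of continuity at $0$ is not the delicate point. Finally, the ``in particular'' clause is immediate: strong differentiability of $\beta \mapsto B^{(\beta)}$ at $0$ means the difference quotient $\beta^{-1}(B^{(\beta)} - B)\psi$ converges in norm for every $\psi \in \mathcal{H}$, so it converges \emph{a fortiori} on the subspace $\mathcal{D}(\mathrm{d}\Gamma(D))$, which is precisely hypothesis \eqref{F4}.
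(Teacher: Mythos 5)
Your proof is correct and follows essentially the same route as the paper: insert $B^{(\beta)}e^{-i\beta\mathrm{d}\Gamma(D)} = e^{-i\beta\mathrm{d}\Gamma(D)}B$ to split the Stone difference quotient for $B\phi$ into $\beta^{-1}(B^{(\beta)}-B)\phi + B^{(\beta)}\,\beta^{-1}(e^{-i\beta\mathrm{d}\Gamma(D)}-\id)\phi$ and pass to the limit. You have merely spelled out the product-of-limits step (uniform bound $\|B^{(\beta)}\|=\|B\|$ plus automatic strong convergence $B^{(\beta)}\to B$) that the paper leaves implicit, and your observation that strong continuity of $\beta\mapsto B^{(\beta)}$ is automatic for bounded $B$ is a legitimate minor simplification of the hypothesis.
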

\begin{proof} We recall that $B  \phi \in  \mathcal{D}(  \mathrm{d} \Gamma (D) )  $ if and only if the function $ \beta \mapsto   e^{ - i  \beta   \mathrm{d} \Gamma (D)   } B  \phi   $ is differentiable at $0$. 
Set $\phi \in \mathcal{D}(  \mathrm{d} \Gamma (D)  )  $.   We notice that the limit 
\begin{align}
\lim_{  \beta \to  0}\frac{1}{  \beta } (e^{-i  \beta   \mathrm{d} \Gamma (D) } - 1) B \phi & =  \lim_{  \beta  \to  0}\frac{1}{  \beta  }(B^{(  \beta )}  -  B )\phi + B^{(  \beta  )}
\frac{1}{ \beta  } (e^{-i \beta  \mathrm{d} \Gamma (D) } - 1)  \phi  
\end{align}
exists because  $ \phi \in  \mathcal{D}(  \mathrm{d} \Gamma (D) )  $  (see \eqref{F4} and above). 
\end{proof}              
\begin{lemma}\label{Invariance}
The  derivatives (recall \eqref{Vtilde})
\begin{align}\label{derivatives}
 \frac{\partial}{\partial \beta} \frac{1}{ H_{\overline P }^{(\beta)} - \lambda }|_{\beta = 0} &,  \hspace{1cm}   \frac{d}{d\beta}  \chi(H_{\overline P}^{(\beta)})|_{\beta = 0} ,
     \hspace{1cm}         \frac{\partial}{ \partial \beta}   (R^{\pm\eta}(z_{\pm\epsilon}))^{(\beta)} |_{\beta = 0},   
\\ \notag    \frac{\partial}{ \partial \beta} 
(   (\tilde V_{\eta})^*   R^{\mp\eta}(z_{\mp\epsilon}) )^{(\beta)}|_{\beta = 0} &, \hspace{1cm}
  \frac{\partial}{ \partial \beta}   (  \tilde V_{\eta}  R^{\pm\eta}(z_{\pm\epsilon}))^{(\beta)}  |_{\beta = 0}
\end{align}
exist, and therefore, the operators above leave $\mathcal{D}(  \mathrm{d} \Gamma (D) )$  invariant (see Lemma \ref{MAIN}). 
\end{lemma}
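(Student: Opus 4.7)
The plan is to establish differentiability at $\beta = 0$ of the map $\beta \mapsto B^{(\beta)}$ (applied to vectors in $\mathcal{D}(\mathrm{d}\Gamma(D))$) for each of the five operators listed in \eqref{derivatives}, and then invoke Lemma~\ref{MAIN}. The base case is $H_{\overline{P}}^{(\beta)}$ itself. Using \eqref{Hdilated}, one has $H_{\overline{P}}^{(\beta)} = \overline{P}(K + H_f(\omega^{(\beta)}) + g\sigma_1 \otimes \Phi(u_\beta f))\overline{P}$. Direct computation gives $\frac{d}{d\beta}\omega^{(\beta)}|_{\beta=0} = \xi$ and $\frac{d}{d\beta}(u_\beta f)|_{\beta=0} = Df$, so for $\psi \in \mathcal{F}_{\text{fin}}[\mathfrak{h}_0] \cap \mathcal{D}(\mathrm{d}\Gamma(D))$ (a core by Remark~\ref{rem:densegraph}) we obtain $\frac{d}{d\beta}H_{\overline{P}}^{(\beta)}\psi|_{\beta=0} = H_{\overline{P}}(\xi, Df)\psi$, matching the extended commutator from Remark~\ref{DEfiCommut}. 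Lemma~\ref{lemma:funcprop}~(iv)--(v) guarantees this extends to an $H_{\overline{P}}$-bounded operator on all of $\mathcal{D}(H_{\overline{P}})$.

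For the resolvent $(H_{\overline{P}}^{(\beta)} - \lambda)^{-1}$, I would use the second resolvent identity
\begin{align*}
\tfrac{1}{\beta}\bigl[(H_{\overline{P}}^{(\beta)} - \lambda)^{-1} - (H_{\overline{P}} - \lambda)^{-1}\bigr]\phi
= -(H_{\overline{P}}^{(\beta)} - \lambda)^{-1}\cdot \tfrac{H_{\overline{P}}^{(\beta)} - H_{\overline{P}}}{\beta}\cdot (H_{\overline{P}} - \lambda)^{-1}\phi,
\end{align*}
and pass to $\beta \to 0$. Since $(H_{\overline{P}} - \lambda)^{-1}\phi \in \mathcal{D}(H_{\overline{P}})$ and $\frac{H_{\overline{P}}^{(\beta)} - H_{\overline{P}}}{\beta}$ converges on this domain to $H_{\overline{P}}(\xi, Df)$ with uniform $H_{\overline{P}}$-boundedness, the limit exists and equals $-(H_{\overline{P}} - \lambda)^{-1} H_{\overline{P}}(\xi, Df) (H_{\overline{P}} - \lambda)^{-1}\phi$. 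The convergence $(H_{\overline{P}}^{(\beta)} - \lambda)^{-1} \to (H_{\overline{P}} - \lambda)^{-1}$ in the strong operator topology follows by the same identity together with the uniform bound $\sup_{|\beta|<\beta_0}\|H_{\overline{P}}(\xi^{(\beta)}, u_\beta Df)(H_{\overline{P}}+i)^{-1}\| < \infty$.

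For $\chi(H_{\overline{P}}^{(\beta)})$ I would apply the Helffer-Sjöstrand formula from Theorem~\ref{thm:hsf}, writing $\chi(H_{\overline{P}}^{(\beta)}) = \pi^{-1}\int_{\mathbb{C}} \partial_{\overline{z}}\widetilde{\chi}(z)(H_{\overline{P}}^{(\beta)} - z)^{-1}dxdy$, and differentiate under the integral sign. The bound $|\partial_{\overline{z}}\widetilde{\chi}(z)| \leq C|\Im z|^n$ for large $n$ combined with the $O(|\Im z|^{-2})$ estimate on the derivative of the dilated resolvent (from the previous paragraph) renders the integral absolutely convergent. This gives differentiability of $\chi(H_{\overline{P}}^{(\beta)})$ and hence, by the product rule applied to \eqref{Hdilated}, of $(M^2)^{(\beta)}$, after which the same second-resolvent-identity argument yields differentiability of $(R^{\pm\eta}(z_{\pm\epsilon}))^{(\beta)}$ since $H_{\overline{P}}^{(\beta)} \mp i\eta (M^2)^{(\beta)}$ differs from the unperturbed version by a bounded perturbation.

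Finally, for the two composite operators $(\tilde{V}_\eta R^{\pm\eta}(z_{\pm\epsilon}))^{(\beta)}$ and $((\tilde{V}_\eta)^* R^{\mp\eta}(z_{\mp\epsilon}))^{(\beta)}$, I would differentiate $\tilde{V}_\eta^{(\beta)}$ directly: from \eqref{Vtilde}, this amounts to differentiating $\Phi_{\overline{P}}(u_\beta f)$ and $\chi(H_{\overline{P}}^{(\beta)})\Phi_{\overline{P}}(u_\beta Df)\chi(H_{\overline{P}}^{(\beta)})$, both of which are handled by the previous step together with the standard estimate $\|\Phi(h)(H_f + 1)^{-1/2}\| \leq C\|h\|_2$ from Proposition~\ref{PTOPNDJO}. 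The product rule then yields the claimed differentiability. The main obstacle throughout is control of the $H_{\overline{P}}$-relative bounds of the various formal derivatives, but this is precisely what Lemma~\ref{lemma:funcprop} and the $H_f$-boundedness estimates supply uniformly in $\beta$ near $0$, so the whole argument reduces to bookkeeping of known bounds once the differentiability of $\beta \mapsto H_{\overline{P}}^{(\beta)}$ on the core is in hand.
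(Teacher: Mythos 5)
Your proposal follows essentially the same route as the paper: compute $\frac{d}{d\beta}H_{\overline P}^{(\beta)}|_{\beta=0}$ on a core, propagate this through the second resolvent identity to $(H_{\overline P}^{(\beta)}-\lambda)^{-1}$, through Helffer--Sj\"ostrand to $\chi(H_{\overline P}^{(\beta)})$ and $(M^2)^{(\beta)}$, then again through a second resolvent identity to $(R^{\pm\eta}(z_{\pm\epsilon}))^{(\beta)}$, and finally by the product rule to the composites with $\tilde V_\eta^{(\beta)}$. The only substantive thing you compress is the $(H_f+1)$-bookkeeping that the paper carries out explicitly (Eqs.~\eqref{papa}, \eqref{Dchi}, \eqref{luisa}--\eqref{lui3}): since $\tilde V_\eta$ and the formal derivative $H_{\overline P}(\xi,Df)$ are only relatively bounded with respect to $H_f$, one needs to show that the \emph{operators weighted by $(H_f+1)$}, i.e.\ $(H_f+1)(R^{\pm\eta}(z_{\pm\epsilon}))^{(\beta)}$ and $(H_f+1)\chi(H_{\overline P}^{(\beta)})$, are differentiable at $\beta=0$, not merely the unweighted resolvents and cutoffs. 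You flag this yourself as ``bookkeeping of known bounds,'' which is accurate, but the paper spends most of the proof on precisely this point, so it is worth emphasizing that it is the crux rather than a formality.
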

\begin{proof}
 In this proof, we denote by a dot on the top of a symbol the derivative with respect to $\beta$
 at zero. If it is  necessary, we specify below  with respect to which norm is the derivative taken. For example, the (point-wise) derivative of  $ u_\beta f $  with respect to  $\beta $  at $zero$ is denoted by $  \dot{(u_\beta f)} $. In  case that the dependence on $\beta$ is written as a superscript, we  sometimes omit the symbol $\beta$. For example the (point-wise) derivative of $  \xi^{(\beta)}  $ at zero is denoted by $ \dot{\xi} $.     

A simple calculation shows that
\begin{align}
\label{mal3}
&\norm{\beta^{-1}(f^{(\beta)}-f)-    \dot{(u_\beta f)} }\leq C|\beta|, \quad
\left| \beta^{-1}\left( \omega^{(\beta)}(k) -\omega(k) \right) -     \dot{\omega}(k)\right| \leq  C |\beta| \omega(k).
\end{align}
This together with  Proposition~\ref{PTOPNDJO}    (see also \eqref{Hdilated} and similar calculations) implies that 
\begin{align}\label{papa}
\Big \| \Big ( \frac{1}{\beta}( H_{\overline P }( \omega, f )^{(\beta)} - H_{\overline P}  )  - H_{\overline P }(\dot{\omega}, \dot{(u_\beta f)}) \Big ) \frac{1}{H_f + 1 }\Big \|    \leq  C |\beta|. 
\end{align}
Then, the second resolvent identity and Proposition  \ref{PTOPNDJO} imply that, for every 
$\lambda \in \mathbb{C}$ with not vanishing imaginary part (here we proceed as in \eqref{luisa} below), 
\begin{align}\label{Derq}
\frac{\partial}{\partial \beta} \frac{1}{ H_{\overline P }^{(\beta)} - \lambda }|_{\beta = 0} =
- \frac{1}{ H_{\overline P} - \lambda }  H_{\overline P }(\dot{\omega}, \dot{(u_\beta f)})\frac{1}{ H_{\overline P } - \lambda },
\end{align} 
and therefore, we obtain that the derivative in the left term of the first line in \eqref{derivatives} exists. 
Similar proofs (and formulas) hold for $  H_f \frac{1}{ H^{(\beta)} - \lambda }  $ and $  \frac{1}{ H^{(\beta)} - \lambda }  H_f  $. Eq.\ \eqref{Derq} and the second resolvent equation (used as in \eqref{luisa} below) allows us to analyze the resolvents in the integrand in the Helffer-Sj\"ostrand formula (\eqref{HS}, with $n > 3$) and get (see also  Proposition  \ref{PTOPNDJO})
\begin{align}\label{Dchi}
\frac{d}{d\beta} (H_f + 1) \chi(H_{\overline P}^{(\beta)})|_{\beta = 0} = \pi^{-1} \int_\C \mathrm{d}x \mathrm{d}y \, \partial_{\overline z} \,  \tilde \chi (z) \frac{\partial}{\partial \beta} (H_f +1 ) \frac{1}{ H_{\overline P}^{(\beta)} - \lambda },
\end{align}
where  $z=x+iy$. This implies that the derivative in the middle term  of the first line in \eqref{derivatives} exists.   Similarly as in \eqref{papa}, we obtain that 
\begin{align}\label{papa1}
\frac{d}{d \beta} H( \xi^{(\beta)},  u_{\beta} D f )^{(\beta)}
  \frac{1}{H_f + 1 }|_{\beta = 0}  = H(\dot{\xi}, \dot{(u_\beta D f)}) \Big ) \frac{1}{H_f + 1 } .   
\end{align}
Eqs.\ \eqref{Dchi} and \eqref{papa1} imply that $ (M^2)^{(\beta)} $ is differentiable with respect to $\beta$ at $\beta = 0$ (see \eqref{Hdilated}).  This and \eqref{papa} imply that
$  (H^{\pm \eta }_{\overline{P} })^{(\beta)} \frac{1}{H_f+ 1} $ is differentiable with respect to $\beta$ at $\beta = 0$.  Now we calculate the derivative of $ (H_f + 1)(R^{\pm\eta}(z_{\pm\epsilon}))^{(\beta)}  $ at zero using the second resolvent equation:
\begin{align}\label{luisa}
\frac{1}{\beta} &  ( H_f + 1)  \Big (  (R^{\pm\eta}(z_{\pm\epsilon}))^{(\beta)}  - R^{\pm\eta}(z_{\pm\epsilon}) \Big )
\\ \notag  & +  (H_f + 1) R^{\pm\eta}(z_{\pm\epsilon}) \Big [ \frac{\partial }{\partial \beta} (  H^{\pm \eta }_{\overline{P} })^{(\beta)} \frac{1}{H_f + 1 }|_{\beta = 0} \Big ]   (H_f + 1)  R^{\pm\eta}(z_{\pm\epsilon})
\\ \notag 
 = & \Big \{   ( H_f + 1)    R^{\pm\eta}(z_{\pm\epsilon})  \Big \} \Big \{ \Big (  
 \frac{1}{\beta}( H^{\pm \eta }_{\overline{P} }  -  (H^{\pm \eta }_{\overline{P} })^{(\beta)} ) \frac{1}{H_f + 1}
   +    \Big [ \frac{\partial }{\partial \beta} (  H^{\pm \eta }_{\overline{P} })^{(\beta)} \frac{1}{H_f + 1 }|_{\beta = 0}     \Big ) \Big \} 
 \\  \notag & \hspace{10cm}   \cdot \Big \{  ( H_f + 1 ) R^{\pm\eta}(z_{\pm\epsilon}) \Big \}
\\ &  + \Big \{  ( H_f + 1)    \big (  (R^{\pm\eta}(z_{\pm\epsilon}))^{(\beta)} -  R^{\pm\eta}(z_{\pm\epsilon})      \big )\Big \}  \Big \{  \frac{1}{\beta} \big (   H^{\pm \eta }_{\overline{P} }  -  (H^{\pm \eta }_{\overline{P} })^{(\beta)}\big ) 
 \frac{1}{H_f + 1} \Big \}  \notag 
 \\ \notag & \hspace{10cm} \cdot  \Big \{  ( H_f + 1 )   R^{\pm\eta}(z_{\pm\epsilon}) \Big \} . 
\end{align}
 It follows from Proposition  \ref{PTOPNDJO} and \eqref{TEo5iV} that $  ( H_f + 1 )   R^{\pm\eta}(z_{\pm\epsilon}) $ is bounded. This and the fact that 
$  (H^{\pm \eta }_{\overline{P} })^{(\beta)} \frac{1}{H_f+ 1} $ is differentiable with respect to $\beta$ at $\beta = 0$ imply that the first term in the right hand side side of  \eqref{luisa} tends to zero as $\beta$ goes to zero. The same arguments and the fact that  
\begin{align}\label{lui2}
( H_f + 1)     (R^{\pm\eta}(z_{\pm\epsilon}))^{(\beta)} =\Big ( ( H_f + 1  ) \frac{}{   ( H_f + 1  )^{(\beta)} } \Big )
\Big (   ( H_f + 1)     (R^{\pm\eta}(z_{\pm\epsilon})  \Big )^{(\beta)}
\end{align}
is uniformly bounded for small $\beta$  (see Proposition  \ref{PTOPNDJO} and \eqref{TEo5iV}) 
 imply that the second term in the right hand side of \eqref{luisa} is bounded (uniformly  with respect to $\beta$). Since the second term in the left hand side of \eqref{luisa} is bounded (see arguments above), it follows that  
\begin{align}\label{lui3}
\lim_{\beta \to 0} ( H_f + 1)  \Big (  (R^{\pm\eta}(z_{\pm\epsilon}))^{(\beta)}  - R^{\pm\eta}(z_{\pm\epsilon}) \Big ) = 0 .
\end{align}
This in turn and the  arguments above imply that the second term in the right hand side of \eqref{luisa}  tends to zero as $\beta $ tends to zero. We conclude that the left hand side of \eqref{luisa} tends to zero as $\beta$ tends to zero and, therefore,  $  ( H_f + 1)    (R^{\pm\eta}(z_{\pm\epsilon}))^{(\beta)}   $
 is differentiable at zero.  This proves the existence of the derivative in the right term of the first line in  \eqref{derivatives}.  The proof that the derivative of   $         (\tilde V_{\eta})^{(\beta)}  \frac{1}{ H_f +1}  $, with respect to $\beta$, at zero exists follows exactly the same lines as the corresponding result for     $  (H^{\pm \eta }_{\overline{P} })^{(\beta)} \frac{1}{H_f+ 1} $,  and therefore, we omit it.  Then, using this and that $  ( H_f + 1)    (R^{\pm\eta}(z_{\pm\epsilon}))^{(\beta)}   $
 is differentiable at zero, we obtain that $  \Big ( \tilde V_{\eta}  \frac{1}{1 + H_f} \Big )  \Big ( (1+ H_f)    R^{\pm\eta}(z_{\pm\epsilon})^{(\beta)} \Big )  $ is differentiable at zero. This proves the existence of the derivative in the right term of the second line in \eqref{derivatives}. The proof for the left term is analogous.    
\end{proof}

\subsection{Commutator Estimates in Mourre Theory  }

\begin{lemma}
\label{lemma2}
Recall  that  we introduce   $\chi\in C^{\infty}_c(\R,[0,1])$  in  Definition \ref{def.chi}.
  The 
quadratic form $  [\chi(H_{\overline P}),   \mathrm{d}\Gamma(D)]  $, defined in the domain of 
$  \mathrm{d}\Gamma(D)  $,  extends to a bounded operator that we denote by $   [\chi(H_{\overline P}),   \mathrm{d}\Gamma(D)]^0  $. Additionally, 
$ (H_{\overline P}+i)[\chi(H_{\overline P}),   \mathrm{d}\Gamma(D)]^0  $ is bounded.   
\end{lemma}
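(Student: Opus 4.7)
The strategy is to express $\chi(H_{\overline P})$ by the Helffer--Sj\"ostrand formula (Theorem~\ref{thm:hsf}) and thereby reduce the commutator to a weighted integral of resolvent commutators. For $\phi,\psi\in\mathcal D(\mathrm d\Gamma(D))$ I first use the symmetry of $\chi(H_{\overline P})$ to write
\begin{align*}
\langle \phi, [\chi(H_{\overline P}),\mathrm d\Gamma(D)]\psi\rangle = \langle \phi, \chi(H_{\overline P})\, \mathrm d\Gamma(D)\psi\rangle - \langle \mathrm d\Gamma(D)\phi, \chi(H_{\overline P})\psi\rangle,
\end{align*}
and apply \eqref{HS} to each term. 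The two resulting integrands are dominated by $|\partial_{\bar z}\tilde\chi(z)|\,|\Im z|^{-1}$ times $\|\phi\|\|\mathrm d\Gamma(D)\psi\|$ or $\|\mathrm d\Gamma(D)\phi\|\|\psi\|$, which by Definition~\ref{def:almostana}(ii) and the compact support of $\tilde\chi$ is integrable already for $n=2$. Fubini then yields
\begin{align*}
\langle \phi, [\chi(H_{\overline P}),\mathrm d\Gamma(D)]\psi\rangle = \pi^{-1}\int_{\mathbb C}\partial_{\bar z}\tilde\chi(z)\,\langle \phi, [(H_{\overline P}-z)^{-1},\mathrm d\Gamma(D)]\psi\rangle\,\mathrm d x\,\mathrm d y.
\end{align*}

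Next I will identify the form $\langle \phi, [(H_{\overline P}-z)^{-1},\mathrm d\Gamma(D)]\psi\rangle$ explicitly. The proof of Lemma~\ref{Invariance} (restricted to $\eta=0$, see in particular the identity for $\tfrac{\partial}{\partial\beta}(H_{\overline P}^{(\beta)}-\lambda)^{-1}|_{\beta=0}$ used there) shows that for every $z\in\mathbb C\setminus\mathbb R$ the resolvent $(H_{\overline P}-z)^{-1}$ leaves $\mathcal D(\mathrm d\Gamma(D))$ invariant. Setting $\phi':=(H_{\overline P}-\bar z)^{-1}\phi$ and $\psi':=(H_{\overline P}-z)^{-1}\psi$, both lie in $\mathcal D(H_{\overline P})\cap\mathcal D(\mathrm d\Gamma(D))$, so a direct quadratic-form manipulation based on the formal identity $[A^{-1},B]=-A^{-1}[A,B]A^{-1}$ and Remark~\ref{DEfiCommut} gives
\begin{align*}
\langle \phi, [(H_{\overline P}-z)^{-1},\mathrm d\Gamma(D)]\psi\rangle = -\langle \phi', [H_{\overline P},\mathrm d\Gamma(D)]^0\psi'\rangle = i\langle (H_{\overline P}-\bar z)^{-1}\phi, H_{\overline P}(\xi,Df)(H_{\overline P}-z)^{-1}\psi\rangle.
\end{align*}

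To conclude, I combine the standard bound $\|(H_{\overline P}-z)^{-1}\|\leq |\Im z|^{-1}$, the estimate $\|H_{\overline P}(H_{\overline P}-z)^{-1}\|\leq 1+|z|/|\Im z|$, and the $H_{\overline P}$-boundedness of $H_{\overline P}(\xi,Df)$ (which follows from Lemma~\ref{lemma:funcprop}(ii)--(iii) and Proposition~\ref{PTOPNDJO}) to obtain
\begin{align*}
|\langle \phi, [(H_{\overline P}-z)^{-1},\mathrm d\Gamma(D)]\psi\rangle|\leq C\,(1+|z|)\,|\Im z|^{-2}\,\|\phi\|\|\psi\|.
\end{align*}
Choosing $n=3$ in Definition~\ref{def:almostana} gives $|\partial_{\bar z}\tilde\chi(z)|\leq C|\Im z|^3$ on the compact support of $\tilde\chi$, so the integral converges and is bounded by $C\|\phi\|\|\psi\|$. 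Since $\mathcal D(\mathrm d\Gamma(D))$ is dense in $\mathcal H$, this produces the claimed bounded extension $[\chi(H_{\overline P}),\mathrm d\Gamma(D)]^0$. For the second assertion, I will run the same computation with $\phi$ replaced by $(H_{\overline P}-i)u$ for $u\in\mathcal D(H_{\overline P})\cap\mathcal D(\mathrm d\Gamma(D))$; the only new factor is $(H_{\overline P}-\bar z)^{-1}(H_{\overline P}-i)$, uniformly bounded by $C(1+|z|)/|\Im z|$, so the integral still converges and a density argument gives boundedness of $(H_{\overline P}+i)[\chi(H_{\overline P}),\mathrm d\Gamma(D)]^0$. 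The only non-routine step is the invariance of $\mathcal D(\mathrm d\Gamma(D))$ under the resolvent, which is precisely where Lemma~\ref{Invariance} is essential; the remainder is dominated-convergence bookkeeping.
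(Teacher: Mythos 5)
Your proof is correct and follows essentially the same route as the paper: both reduce via the Helffer--Sj\"ostrand formula to a resolvent commutator, both invoke the invariance of $\mathcal D(\mathrm d\Gamma(D))$ under the resolvent (established via Lemma~\ref{Invariance}) to justify the identity $[(H_{\overline P}-z)^{-1},\mathrm d\Gamma(D)]^0 = -(H_{\overline P}-\bar z)^{-1}[H_{\overline P},\mathrm d\Gamma(D)]^0(H_{\overline P}-z)^{-1}$, and both close the estimate with the $H_{\overline P}$-boundedness of $[H_{\overline P},\mathrm d\Gamma(D)]^0$ from Remark~\ref{DEfiCommut}, the resolvent bound $\|(H_{\overline P}+i)(H_{\overline P}-z)^{-1}\|\lesssim 1+|\Re z||\Im z|^{-1}$, and the compact support of the almost-analytic extension. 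The only cosmetic differences are that you split the argument into two passes (first the bounded extension, then multiplication by $H_{\overline P}+i$) whereas the paper directly bounds $(H_{\overline P}+i)[\chi(H_{\overline P}),\mathrm d\Gamma(D)]^0$, and you fix $n=3$ in Definition~\ref{def:almostana} while the paper observes $n\geq 2$ suffices; both choices close the integral.
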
 
\begin{proof}
For $\psi , \phi \in \mathcal D( \mathrm{d}\Gamma(D))\cap \mathcal D(H_{\overline P})$ and $z\in \C\setminus \R$, it follows from 
  Lemma  \ref{Invariance}    that
\begin{align}
\label{trick'}
\left\langle\phi, [(H_{\overline P}-z)^{-1},  \mathrm{d}\Gamma(D)]\psi \right\rangle 
&=\left\langle  \mathrm{d}\Gamma(D)    (H_{\overline P}-   \overline z   )^{-1}       \phi,    (H_{\overline P}-z)    (H_{\overline P}-z)^{-1}  \psi \right\rangle 
\notag \\
& - \left\langle   (H_{\overline P}-     \overline z    )      (H_{\overline P}-      \overline z    )^{-1}       \phi,    \mathrm{d}\Gamma(D) (H_{\overline P}-z)^{-1}  \psi \right\rangle
\notag \\
&= -   \left\langle   \phi, (H_{\overline P}-\overline z)^{-1}   [H_{\overline P},   \mathrm{d}\Gamma(D)]^0   (H_{\overline P}-z)^{-1} \psi \right\rangle 
.
\end{align}
Note that
\begin{align}
\label{trick1}
\norm{(H_{\overline P}+i) (H_{\overline P}-z)^{-1} }\leq 1+ \norm{(z+i) (H_{\overline P}-z)^{-1}} \leq C \left(1+ |\Re z|   |\Im z|^{-1} \right) .
\end{align}
Then, we observe from  Remark \ref{DEfiCommut}  that  
\begin{align}
\left|  \left\langle\phi, [(H_{\overline P}-z)^{-1},  \mathrm{d}\Gamma(D)]\psi \right\rangle  \right| \leq   C  \norm{\phi} \norm{\psi}  |\Im z|^{-1} \left(1+ |\Re z|   |\Im z|^{-1} \right)  ,
\end{align}
and consequently, $[(H_{\overline P}-z)^{-1},  \mathrm{d}\Gamma(D)]$ uniquely extends to a bounded operator on $\mathcal H$ which we denote by $[(H_{\overline P}-z)^{-1},  \mathrm{d}\Gamma(D)]^0$    and  
\begin{align}
\label{trick}
[(H_{\overline P}-z)^{-1},  \mathrm{d}\Gamma(D)]^0 =   -   (H_{\overline P}-z)^{-1}   [H_{\overline P},   \mathrm{d}\Gamma(D)]^0   (H_{\overline P}-z)^{-1} . 
\end{align}
 This together with  Remark \ref{DEfiCommut} ,  \eqref{trick1} and the Helffer-Sj\"ostrand formula    (see \eqref{HS})  yields 
 \begin{align}
 \label{hsf1}
&\norm{(H_{\overline P}+i)[\chi(H_{\overline P}),   \mathrm{d}\Gamma(D)]^0} 
\leq  \pi^{-1} \int_\C \mathrm{d}x \mathrm{d}y \, |\partial_{\overline z} \tilde \chi (z)| \norm{(H_{\overline P}+i) [(H_{\overline P}-z)^{-1},  \mathrm{d}\Gamma(D)]^0}
\notag \\
&     \leq    \pi^{-1} \int_\C \mathrm{d}x \mathrm{d}y \, |\partial_{\overline z} \tilde \chi (z)| 
    \norm{(H_{\overline P}+i) (H_{\overline P}-z)^{-1} }^2     \norm{  [H_{\overline P},   \mathrm{d}\Gamma(D)]^0   (H_{\overline P} -  i    )^{-1}}
\notag \\
&\leq  C \int_\C \mathrm{d}x \mathrm{d}y \, |\partial_{\overline z} \tilde \chi (z)|   \left( 1+|x| |y|^{-1} \right)^2
,
\end{align}
where we    take   $z=x+iy$ for $x,y\in \R$ and $\tilde \chi$ is the almost analytic extension of $\chi$ (see Definition \ref{def:almostana}). In the definition of $\tilde \chi$ we choose $n\geq 2$,  and therefore,    $ |\partial_{\overline z} \tilde \chi (z)| \leq C |\Im z|^{2} $.
Since   $\chi $ is compactly supported, then $\tilde \chi$ is also compactly supported. It follows that    
\begin{align}
\norm{(H_{\overline P}+i)[\chi(H_{\overline P}),   \mathrm{d}\Gamma(D)]^0}  
&\leq C       \int_{\supp( \tilde \chi)}    dxdy    |y|^{2} \left( 1+|x| |y|^{-1} \right)^2 \leq C.   
\end{align}
This completes the proof.
\end{proof}
\begin{lemma}
\label{commM}
 Recall that  we introduce   $\chi\in C^{\infty}_c(\R,[0,1])$  in  Definition \ref{def.chi} and $M^2$ in \eqref{MOURRE}.
The 
quadratic form $  [\mathrm{d}\Gamma(D),M^2] $, defined in the domain of 
$  \mathrm{d}\Gamma(D)  $,  extends to a bounded operator that we denote by $   [ \mathrm{d}\Gamma(D),M^2]^0   $.   Similarly, 
the 
quadratic form $[i \mathrm{d}\Gamma(D) ,\tilde V_{\eta} ]$  extends to a
 $H_{\overline P}$-bounded operator that we denote by $[i \mathrm{d}\Gamma(D) ,\tilde V_{\eta} ]^0$. 
\end{lemma}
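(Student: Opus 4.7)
The plan is to mimic the strategy of Lemma \ref{lemma2}, exploiting the explicit structure of $M^2$ and $\tilde V_\eta$. Both operators are built from $\chi(H_{\overline P})$, the first commutator $[H_{\overline P}, i\mathrm{d}\Gamma(D)]^0 = H_{\overline P}(\xi, Df)$ (Remark \ref{DEfiCommut}), and field operators $\Phi_{\overline P}(f), \Phi_{\overline P}(Df)$. I would compute each commutator first as a quadratic form on the dense domain $\mathcal D(\mathrm{d}\Gamma(D)) \cap \mathcal D(H_{\overline P})$ (which is preserved under the operators of interest thanks to Lemma \ref{Invariance}), then apply Leibniz and bound each term using already-established estimates.

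For $[\mathrm{d}\Gamma(D),M^2]$, I would apply the Leibniz rule and obtain, on the dense core,
\begin{align*}
[\mathrm{d}\Gamma(D), M^2] &= [\mathrm{d}\Gamma(D), \chi(H_{\overline P})]\, [H_{\overline P}, i\mathrm{d}\Gamma(D)]^0\, \chi(H_{\overline P}) \\
&\quad + \chi(H_{\overline P})\, [\mathrm{d}\Gamma(D), [H_{\overline P}, i\mathrm{d}\Gamma(D)]^0]\, \chi(H_{\overline P}) \\
&\quad + \chi(H_{\overline P})\, [H_{\overline P}, i\mathrm{d}\Gamma(D)]^0\, [\mathrm{d}\Gamma(D), \chi(H_{\overline P})].
\end{align*}
The two outer terms are bounded by combining Lemma \ref{lemma2} (which gives $(H_{\overline P}+i)[\chi(H_{\overline P}),\mathrm{d}\Gamma(D)]^0$ bounded) with the $H_{\overline P}$-boundedness of $[H_{\overline P},i\mathrm{d}\Gamma(D)]^0 = \mathrm{d}\Gamma(\xi) + g\sigma_1\otimes \Phi(Df)$ restricted to the range of $\overline P$, together with the boundedness of $\chi(H_{\overline P})$ and $H_{\overline P}\chi(H_{\overline P})$. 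For the middle term I would derive, on a suitable core, the identity
\begin{align*}
[\mathrm{d}\Gamma(D), [H_{\overline P}, i\mathrm{d}\Gamma(D)]^0] = \mathrm{d}\Gamma([D,\xi]) + g\,\overline P(\sigma_1 \otimes \Phi(D^2 f))\overline P,
\end{align*}
where $[D,\xi]$ is a bounded multiplication operator (from a direct calculation, using $\xi(k)=k^2/\omega(k)$ and the explicit action of $D$ on $\mathfrak h_0$), so the first summand is $H_f$-bounded, and the second is $(H_f+1)^{1/2}$-bounded via Proposition \ref{PTOPNDJO} — this is precisely where the assumption $D^2 f \in L^2(\mathbb R^3)$ enters. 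Sandwiching with $\chi(H_{\overline P})$ produces a bounded operator.

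For $[i\mathrm{d}\Gamma(D), \tilde V_\eta]$, recall
\begin{align*}
\tilde V_\eta = \sigma_1 \Phi_{\overline P}(f) \mp i\eta\, \sigma_1\, \chi(H_{\overline P})\, \Phi_{\overline P}(Df)\, \chi(H_{\overline P}).
\end{align*}
For the first summand, Lemma \ref{lemma:funcprop}(v) gives $[i\mathrm{d}\Gamma(D), \sigma_1\otimes\Phi(f)]^0 = \sigma_1\otimes\Phi(Df)$, which is $H_f$-bounded and hence $H_{\overline P}$-bounded via Proposition \ref{PTOPNDJO}. For the second summand I would again use Leibniz, splitting into three pieces: two involve $[\mathrm{d}\Gamma(D),\chi(H_{\overline P})]^0$ sandwiched with $\Phi_{\overline P}(Df)$ and $\chi(H_{\overline P})$ and are $H_{\overline P}$-bounded by Lemma \ref{lemma2} and Proposition \ref{PTOPNDJO}; the remaining one involves $\chi(H_{\overline P})\Phi_{\overline P}(D^2 f)\chi(H_{\overline P})$, again $H_{\overline P}$-bounded thanks to $D^2 f \in L^2(\mathbb R^3)$.

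The main obstacle will be the rigorous justification of the double commutator $[\mathrm{d}\Gamma(D), [H_{\overline P}, i\mathrm{d}\Gamma(D)]^0]$: the formal quadratic-form manipulation must be lifted to an operator identity on a dense subspace left invariant by both resolvents and the field operators. This will be accomplished by realizing the commutators as derivatives at $\beta=0$ of the dilated operators introduced in Definition \ref{DilatedOperator}, exactly as in the proof of Lemma \ref{Invariance}, so that all algebraic manipulations take place on $\mathcal F_{\text{fin}}[\mathfrak h_0]$ where Lemma \ref{lemma:funcprop} applies directly.
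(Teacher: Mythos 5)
Your proposal follows essentially the same route as the paper's proof: the same Leibniz decomposition of $[\mathrm{d}\Gamma(D),M^2]$ into outer terms controlled by Lemma~\ref{lemma2} together with the $H_{\overline P}$-boundedness of $[H_{\overline P},i\mathrm{d}\Gamma(D)]^0$, and a middle double-commutator identified on $\mathcal F_{\text{fin}}[\mathfrak h_0]$ as $\mathrm{d}\Gamma_{\overline P}([D,\xi])$ plus a field operator in $D^2f$ (so that $|[D,\xi]|\le C\omega$ and $D^2f\in L^2$ give the required relative bounds), with the domain manipulations justified via the invariance statement of Lemma~\ref{Invariance}. The treatment of $[i\mathrm{d}\Gamma(D),\tilde V_\eta]$ you spell out is the ``same lines'' argument that the paper references without writing, so the proposal is correct and in full agreement with the paper.
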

\begin{proof} 
For $\phi, \psi \in \mathcal D( \mathrm{d}\Gamma(D))\cap \mathcal D(H_{\overline P})$, we 
 observe from    Lemma \ref{Invariance}   and the $H_{\overline P}$-boundedness of  $[H_{\overline P}, i   \mathrm{d}\Gamma(D)]^0$ that
 \begin{align}
\label{m1}
 & \left\langle  \mathrm{d}\Gamma(D)\phi,  M^2 \psi \right\rangle -\left\langle M^2 \phi,    \mathrm{d}\Gamma(D)\psi \right\rangle 
  \\\notag
&=   \left\langle [\chi(H_{\overline P}),  \mathrm{d}\Gamma(D)]   \phi,      [H_{\overline P}, i   \mathrm{d}\Gamma(D)]^0\chi(H_{\overline P})   \psi  \right\rangle 
+\left\langle  \mathrm{d}\Gamma(D) \chi(H_{\overline P})     \phi,      [H_{\overline P}, i   \mathrm{d}\Gamma(D)]^0\chi(H_{\overline P})   \psi  \right\rangle 
  \\\notag
& -\left\langle [H_{\overline P}, i   \mathrm{d}\Gamma(D)]^0\chi(H_{\overline P})   \phi,    [\chi(H_{\overline P}),  \mathrm{d}\Gamma(D)]     \psi  \right\rangle 
-\left\langle [H_{\overline P}, i   \mathrm{d}\Gamma(D)]^0\chi(H_{\overline P})   \phi,      \mathrm{d}\Gamma(D)   \chi(H_{\overline P})  \psi  \right\rangle 
.
\end{align}
It follows from Lemma \ref{lemma2}   and Remark  \ref{DEfiCommut}   that 
 \begin{align}
 \label{m3}
 \left|  \left\langle[ \mathrm{d}\Gamma(D),\chi(H_{\overline P})]   \phi,      [H_{\overline P}, i   \mathrm{d}\Gamma(D)]^0\chi(H_{\overline P})   \psi  \right\rangle \right| \leq C \norm{\phi} \norm{\psi}
 \end{align}
 and 
 \begin{align}
 \label{m2}
 \left|  \left\langle   [H_{\overline P},  i   \mathrm{d}\Gamma(D)]^0  \chi(H_{\overline P})  \phi,      [ \mathrm{d}\Gamma(D),\chi(H_{\overline P})]  \psi \right\rangle \right|  \leq C \norm{\phi} \norm{\psi}.
 \end{align}
Moreover, for $\varphi, \vartheta \in \mathcal F_\text{fin}[\mathfrak{h}_0]$, we   obtain   from Lemma \ref{lemma:funcprop} (iv) and (v) that
\begin{align}
&\left\langle  \mathrm{d}\Gamma(D) \varphi,      [H_{\overline P}, i   \mathrm{d}\Gamma(D)]^0\vartheta \right\rangle -\left\langle [H_{\overline P}, i   \mathrm{d}\Gamma(D)]^0 \varphi,      \mathrm{d}\Gamma(D)   \vartheta  \right\rangle 
\notag \\
&=\left\langle   \varphi,   [ \mathrm{d}\Gamma(D) , \left(  \mathrm{d}\Gamma_{\overline{P}}(\xi) +g\sigma_1  \Phi_{\overline{P}}(Df)   \right)]\vartheta \right\rangle 
=\left\langle   \varphi,   \left(  \mathrm{d}\Gamma_{\overline{P}}(\tilde \xi)    - i  g\sigma_1  \Phi_{\overline{P}}(D^2 f)   \right) \vartheta \right\rangle ,
\end{align}
where $\tilde \xi =[  D,\xi  ]$.   Direct calculations show that  $|\tilde \xi | \leq C \omega$  and $D^2f  \in \mathfrak{h}$.    
  Proposition \ref{PTOPNDJO} implies that   $ \left(  \mathrm{d}\Gamma_{\overline{P}}(\tilde \xi)     - i g\sigma_1  \Phi_{\overline{P}}(D^2 f)   \right)$ is relatively bounded with respect to $H_{\overline P}$ 
and, hence, $[ \mathrm{d}\Gamma(D) ,      [H_{\overline P}, i   \mathrm{d}\Gamma(D)]^0]$ extends to a $H_{\overline P}$-bounded operator  which we denote by $[ \mathrm{d}\Gamma(D) ,      [H_{\overline P}, i   \mathrm{d}\Gamma(D)]^0]^0= \mathrm{d}\Gamma_{\overline{P}}(\tilde \xi)       - i  g\sigma_1  \Phi_{\overline{P}}(D^2 f) $.
Employing    Lemma  \ref{Invariance},  we find a constant $C>0$ such that
\begin{align}
&\left| \left\langle  \mathrm{d}\Gamma(D) \chi(H_{\overline P})     \phi,      [H_{\overline P}, i   \mathrm{d}\Gamma(D)]^0\chi(H_{\overline P})   \psi  \right\rangle 
-\left\langle [H_{\overline P}, i   \mathrm{d}\Gamma(D)]^0\chi(H_{\overline P})   \phi,      \mathrm{d}\Gamma(D)   \chi(H_{\overline P})  \psi  \right\rangle \right|
\notag \\
&=\left| \left\langle\chi(H_{\overline P})     \phi,    \left[  \mathrm{d}\Gamma(D)  , [H_{\overline P}, i   \mathrm{d}\Gamma(D)]^0\right]^0 \chi(H_{\overline P})   \psi  \right\rangle  \right| \leq C\norm{\phi}\norm{\psi} .
\end{align}
This together with \eqref{m1}, \eqref{m2} and \eqref{m3} implies  that there is  a constant $C>0$ such that
\begin{align}
& \left| \left\langle  \mathrm{d}\Gamma(D)\phi,  M^2 \psi \right\rangle -\left\langle M^2 \phi,    \mathrm{d}\Gamma(D)\psi \right\rangle  \right| 
\leq C\norm{\phi} \norm{\psi},
\end{align}
and, thereby, we complete the proof, since $\mathcal D( \mathrm{d}\Gamma(D))\cap \mathcal D(H_{\overline P})$ is dense in $\mathcal H$. 
 The statement concerning $[i \mathrm{d}\Gamma(D) ,\tilde V_{\eta} ]$ is proved following the same lines above.  
\end{proof} 
\end{appendix}

\section*{Acknowledgement}
D.\ -A.\ Deckert and F.\ H\"anle would like to thank the IIMAS at UNAM and M.\
Ballesteros and J.\ Faupin  the Mathematisches Institut at LMU Munich   for their hospitality. This
project was partially funded by the DFG Grant DE 1474/3-1,  the grants PAPIIT-DGAPA
UNAM  IN108818, SEP-CONACYT 254062, and the junior research group ``Interaction
between Light and Matter'' of the Elite Network Bavaria. M.\  B.\  is a
Fellow of the Sistema Nacional de Investigadores (SNI). F.\ H.\ gratefully acknowledges financial support by the ``Studienstiftung des deutschen Volkes''.
Moreover, the authors express their gratitude for the fruitful discussions with
V.\ Bach,  J.\ S.\ M\o ller, A.\ Pizzo, W.\ De Roeck, R. Weder and P. Barberis.

\bibliographystyle{amsplain}
\bibliography{ref}
\end{document}